\documentclass[reprint,amsfonts, amssymb, amsmath,  showkeys, nofootinbib,pra, superscriptaddress, twocolumn,longbibliography]{revtex4-2}

\usepackage{float}
\makeatletter
\let\newfloat\newfloat@ltx
\makeatother
\usepackage[english]{babel}
\usepackage[utf8]{inputenc}
\usepackage{graphics}
\usepackage{selinput}

\usepackage{braket}
\usepackage{amsthm}
\usepackage{mathtools}
\usepackage{physics}
\usepackage{xcolor}
\usepackage{graphicx}
\usepackage[left=16mm,right=16mm,top=35mm,columnsep=15pt]{geometry} 
\usepackage{adjustbox}
\usepackage{placeins}
\usepackage[T1]{fontenc}
\usepackage{lipsum}
\usepackage{csquotes}
\usepackage{tikz}

\usepackage{tcolorbox}

\newtcolorbox[auto counter]{pabox}[2][]{fonttitle=\bfseries,
title=Example~\thetcbcounter: #2,#1,colframe=gray}

\usepackage{braket}
\usepackage{algorithm}
\usepackage[noend]{algpseudocode}

\def\HC{\mathcal{H}}
\def\KC{\mathcal{K}}
\def\LC{\mathcal{L}}


\newcommand{\ie}{\textit{i.e.}}

\def\ad{^{\dagger}}



\usepackage[makeroom]{cancel}
\usepackage[toc,page]{appendix}
\usepackage[colorlinks=true,citecolor=blue,linkcolor=magenta,linktocpage=true]{hyperref}




\newcommand{\dya}[1]{\ket{#1}\!\bra{#1}}










\newcommand{\BC}{\mathcal{B}}
\newcommand{\CC}{\mathcal{C}}

\newcommand{\FC}{\mathcal{F}}

\newcommand{\NC}{\mathcal{N}}
\newcommand{\OC}{\mathcal{O}}

\newcommand{\RC}{\mathcal{R}}

\newcommand{\TC}{\mathcal{T}}

\newcommand{\YC}{\mathcal{Y}}

 \newcommand{\Adj}{{\rm Ad}}
  \newcommand{\adj}{{\rm ad}}

\renewcommand{\geq}{\geqslant}
\renewcommand{\leq}{\leqslant}

\renewcommand{\Re}{\text{Re}}
\renewcommand{\Im}{\text{Im}}

\newcommand{\CP}{\mathcal{C}\mathcal{P}}

\newcommand{\SWAP}{\mathrm{SWAP}}

\renewcommand{\vec}[1]{\boldsymbol{#1}}  



\newcommand*{\id}{\openone}

\newcommand{\bs}{\textsf{BS}}



\newcommand{\lm}{\lambda }

\newcommand{\sg}{\sigma }

\newcommand{\thv}{\vec{\theta}}

\def\be{\begin{equation}}
\def\ee{\end{equation}}
\def\bs{\begin{split}}
\def\e{\end{split}}
\def\ba{\begin{eqnarray}}
\def\bea{\begin{eqnarray}}

\def\tea{\end{eqnarray}}
\def\ea{\end{eqnarray}}
\def\eea{\end{eqnarray}}





\def\tn{^{\otimes n}}
\def\tn{^{\otimes n}}

\newcommand\mf[1]{\mathfrak{#1}}
\newcommand\mbb[1]{\mathbb{#1}}

\newcommand\spn{\text{span}}



\newtheorem{theorem}{Theorem}

\newtheorem{lemma}{Lemma}
\newtheorem{proposition}{Proposition}

\newtheorem{definition}{Definition}






\usepackage{amssymb}
\usepackage{dsfont}

\def\be{\begin{equation}}
\def\te{\end{equation}}
\def\ee{\end{equation}}
\def\ba{\begin{eqnarray}}
\def\bea{\begin{eqnarray}}

\def\tea{\end{eqnarray}}
\def\ea{\end{eqnarray}}
\def\eea{\end{eqnarray}}


%

\definecolor{myblue}{RGB}{0,163,243}
\definecolor{myred}{RGB}{255,100,100}
\definecolor{mygreen}{RGB}{0,153,0}
\definecolor{mypurple1}{RGB}{46, 156, 202}
\definecolor{mypurple2}{RGB}{190,90,175}
\definecolor{fundamental}{RGB}{55, 110, 111}
\definecolor{tensor}{RGB}{161, 195, 209}


\newtcolorbox[use counter from=pabox]{red_boxed_example}[2][]{colback=myred!5!white,colframe=myred!75!black,fonttitle=\bfseries,floatplacement=h!t,float,title=Example~\thetcbcounter: #2,#1}
\newtcolorbox[use counter from=pabox]{green_boxed_example}[2][]{colback=mygreen!5!white,colframe=mygreen!75!black,fonttitle=\bfseries,floatplacement=h!t,float,title=Example~\thetcbcounter: #2,#1}

\newtcolorbox[use counter from=pabox]{purple_boxed_example1}[2][]{colback=mypurple1!5!white,colframe=mypurple1!75!black,fonttitle=\bfseries,floatplacement=h!t,float,
title=Example~\thetcbcounter: #2,#1}
\newtcolorbox[use counter from=pabox]{purple_boxed_example2}[2][]{colback=mypurple2!5!white,colframe=mypurple2!75!black,fonttitle=\bfseries,floatplacement=h!t,float,title=Example~\thetcbcounter: #2,#1}

\newtcolorbox[use counter from=pabox]{purple_boxed_example}[2][]{%
colback=mypurple2!5!white,colframe=mypurple2!75!black,fonttitle=\bfseries,floatplacement=h!t,float,
title=Example~\thetcbcounter: #2,#1}

\newtcolorbox[use counter from=pabox]{blue_boxed_example}[2][]{%
colback=myblue!5!white,colframe=myblue!75!black,fonttitle=\bfseries,floatplacement=h!t,float,
title=Example~\thetcbcounter: #2,#1}

\newtcolorbox[use counter from=pabox]{fundamental_boxed_example}[2][]{%
colback=fundamental!5!white,colframe=fundamental!75!black,fonttitle=\bfseries,floatplacement=h!t,float,
title=Example~\thetcbcounter: #2,#1}

\newtcolorbox[use counter from=pabox]{tensor_boxed_example}[2][]{%
colback=tensor!5!white,colframe=tensor!75!black,fonttitle=\bfseries,floatplacement=h!t,float,
title=Example~\thetcbcounter: #2,#1}









\usepackage[normalem]{ulem}

\usepackage{tikz}
\usetikzlibrary{quantikz}
\usepackage{braket, enumitem}

\begin{document}
\title{Theory for Equivariant Quantum Neural Networks}

\author{Quynh T. Nguyen}
\affiliation{Theoretical Division, Los Alamos National Laboratory, Los Alamos, New Mexico 87545, USA}
\affiliation{School of Engineering and Applied Sciences, Harvard University,
Cambridge, Massachusetts 02138, USA}

\author{Louis Schatzki}
\affiliation{Information Sciences, Los Alamos National Laboratory, Los Alamos, New Mexico 87545, USA}
\affiliation{Department of Electrical and Computer Engineering, University of Illinois at Urbana-Champaign, Urbana, Illinois 61801, USA}

\author{Paolo Braccia}
\affiliation{Theoretical Division, Los Alamos National Laboratory, Los Alamos, New Mexico 87545, USA}
\affiliation{Dipartimento di Fisica e Astronomia, Universit\`{a} di Firenze, Sesto Fiorentino (FI), 50019 , Italy}

\author{Michael Ragone}
\affiliation{Theoretical Division, Los Alamos National Laboratory, Los Alamos, New Mexico 87545, USA}
\affiliation{Department of Mathematics, University of California Davis, Davis, California 95616, USA}

\author{Patrick J. Coles}
\affiliation{Theoretical Division, Los Alamos National Laboratory, Los Alamos, New Mexico 87545, USA}

\author{Fr\'{e}d\'{e}ric Sauvage}
\affiliation{Theoretical Division, Los Alamos National Laboratory, Los Alamos, New Mexico 87545, USA}

\author{Mart\'{i}n Larocca}
\affiliation{Theoretical Division, Los Alamos National Laboratory, Los Alamos, New Mexico 87545, USA}
\affiliation{Center for Nonlinear Studies, Los Alamos National Laboratory, Los Alamos, New Mexico 87545, USA}

\author{M. Cerezo}
\affiliation{Information Sciences, Los Alamos National Laboratory, Los Alamos, New Mexico 87545, USA}

\begin{abstract}
Quantum neural network architectures that have little-to-no inductive biases are known to face trainability and generalization issues. Inspired by a similar problem, recent breakthroughs in machine learning address this challenge by creating models encoding the symmetries of the learning task. This is materialized through the usage of equivariant neural networks whose action commutes with that of the symmetry. In this work, we import these ideas to the quantum realm by presenting a comprehensive theoretical framework to design equivariant quantum neural networks (EQNN) for essentially any relevant symmetry group. We develop multiple methods to construct equivariant layers for EQNNs and analyze their advantages and drawbacks. Our methods can find unitary or general equivariant quantum channels efficiently even when the symmetry group is exponentially large or continuous. As a special implementation, we show how standard quantum convolutional neural networks (QCNN) can be generalized to group-equivariant QCNNs where both the convolution and pooling layers are equivariant to the symmetry group. We then numerically demonstrate the effectiveness of a $\mbb{SU}(2)$-equivariant QCNN over symmetry-agnostic QCNN on a classification task of phases of matter in the bond-alternating Heisenberg model. Our framework can be readily applied to virtually all areas of quantum machine learning. Lastly, we discuss about how symmetry-informed models such as EQNNs provide hopes to alleviate central challenges such as barren plateaus, poor local minima, and sample complexity.
\end{abstract}

\maketitle

\section{Introduction}

Recognizing the underlying symmetries in a given dataset has played a fundamental role in classical machine learning. For instance, noting that the picture of a cat still depicts a cat when we translate the pixels of the image, gives a hint as to why convolutional neural networks~\cite{lecun1998gradient} have been so successful in image classification: they process images in a translationally-symmetric way~\cite{bronstein2021geometric}.

In recent years, the importance of symmetries in machine learning has been studied in problems with more general symmetry groups than translations, leading to the burgeoning field of geometric deep learning~\cite{bronstein2021geometric}. The central thesis of this field is that prior symmetry knowledge should be incorporated into the model, thus effectively constraining the search space and easing the learning task. Indeed, symmetry-respecting models have been observed to perform and generalize better than problem-agnostic ones in a wide variety of tasks~\cite{bronstein2021geometric, bekkers2018roto, jegelka2022theory, kondor2018clebsch, cohen2018spherical, bogatskiy2020lorentz,finzi2020generalizing, elesedy2021provably}. As such, a great deal of work has also gone into developing a mathematically rigorous framework for designing symmetry-informed models through the machinery of representation theory. This has provided the basis for so-called \emph{equivariant neural networks} (ENNs)~\cite{cohen2016group, kondor2018generalization, cohen2019general, cohen2019gauge, cohen2021equivariant}, whose key property is that their action commutes with that of the symmetry group. In other words, applying a symmetry transformation to the input and then sending it through the ENN produces the same result as sending the raw input through the ENN and then applying the transformation. 

Recently, some of the ideas of geometric deep learning have been imported to the field of \textit{quantum machine learning} (QML)~\cite{biamonte2017quantum,schuld2014quest,schuld2015introduction, cerezo2020variationalreview,cerezo2022challenges}.  QML has become a rapidly growing framework to make practical use of noisy intermediate-scale quantum devices~\cite{preskill2018quantum}. Here, the hope is that by accessing the exponentially large Hilbert space,  quantum models can obtain a computational advantage over their classical counterparts~\cite{huang2021quantum, liu2021rigorous}, especially for quantum data~\cite{cong2019quantum,schatzki2021entangled,caro2021generalization}.  Despite its promise, there are still several challenges one needs to address before unlocking the full potential of QML. In particular, a growing amount of evidence suggests that models with little-to-no inductive biases have poor trainability and generalization, greatly limiting their scalability~\cite{kubler2021inductive,mcclean2018barren,cerezo2020cost,sharma2020trainability,holmes2020barren,holmes2021connecting,cerezo2020impact,arrasmith2020effect,arrasmith2021equivalence,marrero2020entanglement,patti2020entanglement,uvarov2020barren,thanasilp2021subtleties,larocca2021diagnosing,wang2020noise,wiersema2020exploring}.

\begin{figure*}[t]
    \centering
    \includegraphics[width=\linewidth]{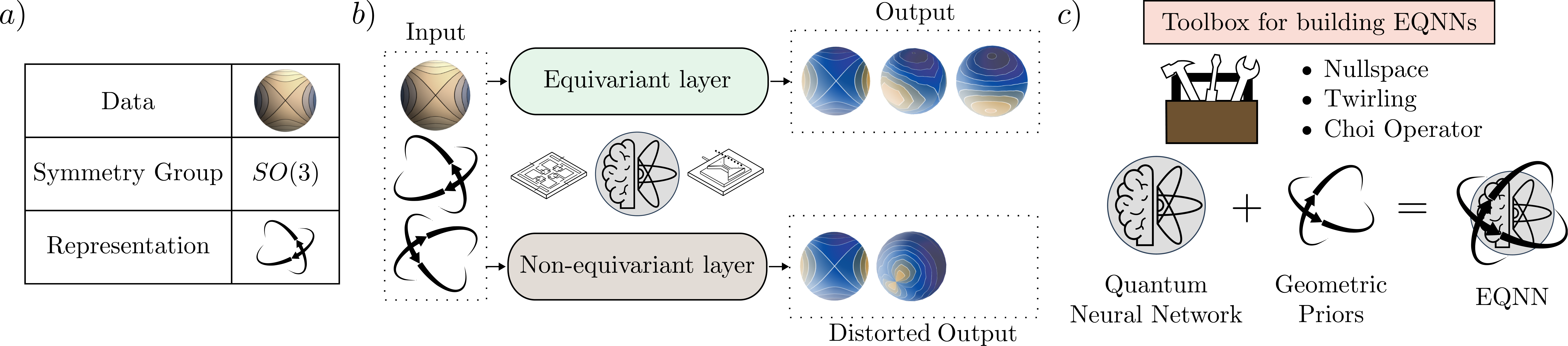}
    \caption{\textbf{Schematic representation of our main results.} a) In  GQML we start by identifying the symmetry group -or groups-- that leave the data labels invariant. For the example shown, the data can be visualized on a three-dimensional sphere, and the labels are invariant under the action of $\mathbb{SO}(3)$. b) Both in classical and quantum machine learning it has been shown that models with equivariant layers often have an improved performance over non-equivariant architectures. The key feature of equivariance is that  applying a rotation to the input data and sending it through the layer is the same as first sending the data through the layer and then rotating the output. On the other hand, feeding either a raw or a rotated data instance into a non-equivariant layer usually leads to distorted outputs which are not related by a rotation. c) In this work we provide a toolbox of methods for creating equivariant quantum neural networks (EQNNs) that can be readily used to construct quantum architectures with strong geometric priors. }
    \label{fig:schematic}
\end{figure*}

\textit{Geometric quantum machine learning} (GQML) attempts to solve the aforementioned issues by leveraging ideas from geometric deep learning to construct quantum models with sharp inductive biases based on the symmetries of the problem at hand. For instance, when classifying between states presenting a large, or a low, amount of multipartite entanglement~\cite{horodecki2009quantum,szalay2015multipartite,schatzki2021entangled}, it is natural to employ models whose outputs remain invariant under the action of any local unitary~\cite{larocca2022group}. While recent proposals have started to reveal the power of GQML~\cite{larocca2022group,meyer2022exploiting, mernyei2021equivariant, skolik2022equivariant,zheng2021speeding,zheng2022super,zheng2022benchmarking,li2022group,verdon2019quantumgraph}, the field is still in its infancy, and a more systematic approach to symmetry-encoded model design is needed.

The goal of this work  is to offer a theoretical framework for building GQML models based on extending the notion of classical ENNs to \emph{equivariant quantum neural networks} (EQNNs) (see Fig.~\ref{fig:schematic}). Our main contributions can be summarized as follows:
\begin{itemize}
    \item We provide an interpretation for EQNN layers as a form of generalized Fourier space action, meaning that they: perform a group Fourier transform, act on the Fourier components, transform back. This allows us to quantify the number of free parameters in an EQNN layer, and unravels the exciting possibility of using different group representations as hyperparameters to act on different generalized Fourier spaces. (Sec.~\ref{sec:connecting})
    \item We introduce a general framework for EQNNs, extending previous results from unitary quantum neural networks to channels. We characterize the different types of EQNN layers such as standard, embedding, pooling, lifting and projection layers. This permits a  classification of EQNN layers depending on their input and output representations. In addition, we also explore how equivariant non-linearities can be introduced via multiple copies of the data. (Sec.~\ref{sec:connecting}) 
    \item We describe three alternative methods for constructing and parametrizing EQNNs. These are based on finding the \textit{nullspace} of certain system of matrix equations, applying the \textit{twirling} formula over the symmetry group, and using the \textit{Choi operator} representation of channels.  Our methods have a better complexity than existing methods and can efficiently find unitary or non-unitary equivariant layers even when the symmetry group is exponentially large. We discuss strengths and weaknesses of each approach and general methods to optimize/train equivariant channels. (Sec.~\ref{sec:framework})
    \item We exemplify our techniques by showing how to generalize standard quantum convolutional neural networks (QCNN) to group-equivariant QCNNs where the convolutional and pooling layers are equivariant to the task's symmetry group.  For examples, we show how to construct  $\mathbb{Z}_2\times \mathbb{Z}_2$, and $\mathbb{Z}_n$-equivariant EQNNs. (Sec. ~\ref{sec:examples}) Moreover, we present a new architecture called $\mbb{SU}(2)$-equivariant QCNN and numerically demonstrate its advantage over symmetry-agnostic QCNN in a quantum phase classification task of the bond-alternating Heisenberg model on up to 13 qubits. (Sec. ~\ref{sec:numerics}) 
\end{itemize}

We conclude with a discussion on how EQNNs provide hope to alleviate critical challenges in QML such as ill-behaved training landscapes (barren plateaus and local minima), and to reduce the sample complexity (data requirements) of the model. Taken together, we hope that our results will serve as blueprints and guidelines to a more representation-theoretic approach to QML.

\section{Related work}\label{sec:background}

\subsection{Equivariance in geometric deep learning}
In this section, we provide an overview of literature on classical equivariant neural networks (ENNs), leaving the formal treatment of equivariance to Section~\ref{sec:prelim}.

At a high level, equivariance is a mathematical property that preserves symmetries in features throughout a multi-layer ENN. One imposes equivariance onto ENN layers via tools from group representation theory, the workhorse behind geometric deep learning~\cite{bronstein2021geometric}. 
The most well-known equivariant architecture is the convolutional neural network (CNN)~\cite{lecun1998gradient}, ubiquitous in image and signal processing. The relevant symmetry group in CNNs is the translation group in the plane $\mbb{R}^2$, and one can show that their convolution and pooling layers are equivariant to this group~\cite{kondor2018generalization}. Ideas to generalize CNNs to other groups and data were first laid out in~\cite{cohen2016group} and further made mathematically rigorous in~\cite{kondor2018generalization, cohen2019general}. These works are concerned with the so-called homogeneous ENNs, which include as special cases spherical CNNs (where the relevant group is $\mathbb{SO}(3)$) for spherical images~\cite{cohen2018spherical}, and Euclidean neural networks (the Euclidean group $\mathbb{E}(n)=\mbb{R}^n\rtimes \mbb{O}(n)$ and its subgroups) for molecular data~\cite{anderson2019cormorant, geiger2022e3nn, thomas2018tensor, chen2021direct}. Notable non-homogeneous architectures include graph neural networks (the permutation group $S_n$)~\cite{jegelka2022theory, pan2022permutation, thiede2020general}. In addition, more advanced representation-theoretic treatments on non-homogeneous data \cite{cohen2021equivariant} have led to steerable CNNs~\cite{cohen2016steerable}, gauge-equivariant CNNs \cite{cohen2019gauge} on general manifolds. Moreover, it has been shown that equivariant layers can be constructed from either the real space or Fourier space perspectives~\cite{kondor2018clebsch}.
More recently, methods for designing equivariant layers have been studied in~\cite{finzi2020generalizing, finzi2021practical, brandstetter2022clifford,li2022group}. For a theoretical analysis of the improvements in training and generalization error arising from using ENNs we refer the reader to~ \cite{li2020why, elesedy2021provably, sannai2021improved, sokolic2017generalization}, while the expressibility and universality of ENNs have been studied in~\cite{maron2019universality, yarotsky2022universal, keriven2019universal, ravanbakhsh2020universal,villar2021scalars}.

\subsection{Equivariance in quantum information}

Equivariance\footnote{The alternative term \textit{covariance} is used in some quantum information theory literature. Here we will adopt the terminology \textit{equivariance} for cohesiveness with geometric deep learning literature.} has a long history in quantum information theory~\cite{holevo2005additivity, ritter2005quantum, hastings2009superadditivity, wilde2017converse, koenig2009strong, datta2016second, leditzky2018approaches, gschwendtner2021programmability, grinko2022linear, kong2022near, faist2020continuous, zhou2021new, dadriano2004extremal, hayashi2017group, krovi2019efficient, harrow2005applications, bacon2006efficient, marvian2022restrictions, marvian2022rotationally, hulse2021qudit, childs2010quantum}. As such, we will not attempt here to review its full impact on the field, but we will rather focus on several relevant works where equivariance has been studied in the context of quantum channels.

To begin, the set of all irreducible $\mbb{SU}(2)$-equivariant channels has been characterized in~\cite{al2014extreme}, with extensions to a wide class of finite groups presented in \cite{mozrzymas2017structure}. The work in~\cite{memarzadeh2022group} presents conditions to construct group-equivariant generalized-extreme channels. On the other hand, the history of equivariance in QML is much more recent. In~\cite{larocca2022group} and~\cite{meyer2022exploiting} the authors lay a theoretical groundwork for the integration of symmetries into QML.  However, prior work are either non-constructive~\cite{larocca2022group} or only work efficiently on restricted sets of problems and symmetries~\cite{meyer2022exploiting}. 
In particular, two main types of symmetries have been most extensively explored: the action of the local unitary group $\mbb{SU}(d)$ on each qudit in a correlated manner $U\tn$, and the action of the permutation group $S_n$ by permuting the qudits. It is well known fact in representation theory, called the Schur-Weyl duality, that these two group representations commute.

On the side of $\mbb{SU}(d)$-symmetry, Ref.~\cite{zheng2022super} has proposed a specific task -- approximating matrix elements of $S_n$ irreps evaluated on arbitrary group algebra elements-- together with a novel polynomial-time quantum algorithm, based on the combination of Quantum Schur Transform and Hamiltonian simulation, that potentially achieves a super-exponential quantum speedup given that best known classical algorithms require $\OC(n!n^2)$ time.
In turn,~\cite{zheng2021speeding} exploits the ideas in~\cite{zheng2022super} to derive an ansatz -- the $S_n$-Equivariant Convolutional Quantum Alternating Ansatze ($S_n$-CQA)-- that is universal for the subgroup of $\mbb{SU}(d)$-equivariant unitaries. As the name suggests, it is based on the qudit permutation action of $S_n$ on the quantum system which, via Schur-Weyl duality, linearly spans the subspace of $\mbb{SU}(d)$-equivariant operators. Interestingly, $S_n$-CQA can be shown to achieve universality in the subgroup of symmetric unitaries with only four-body interactions, something that is remarkable given the typical limitations of universality imposed by locality constraints~\cite{marvian2022restrictions,kazi2023universality}. $S_n$-CQA's performance and resource requirements are benchmarked in ~\cite{zheng2022benchmarking}.

On the $S_n$ symmetry side, ~\cite{schatzki2022theoretical} has shown that $S_n$-equivariant QNNs exhibit a wide range of favorable properties such as being immune to barren plateaus, efficiently reaching overparametrization, and being able to generalize well from few training points. Moreover, methods for constructing equivariant quantum circuits for graph problems were given in~\cite{verdon2019quantumgraph, mernyei2021equivariant, skolik2022equivariant,sauvage2022building}. 
We also note that, while not explicitly mentioned, some recent quantum algorithms can be analyzed from an equivariance point of view \cite{cincio2018learning, van2012measuring, cong2019quantum, huang2021quantum, huang2020predicting}. We discuss these  in Appendix~\ref{sec:perspective}.

\section{Preliminaries}\label{sec:prelim}

Here we give some of the necessary background in quantum machine learning and representation theory to tackle GQML. For a more comprehensive treatment of these topics, we refer the reader to standard textbooks in representation theory~\cite{simon1996representations, fulton1991representation, goodman2000representations} and geometric ML theory literature \cite{bronstein2021geometric, cohen2019general, cohen2021equivariant}. For a QML-oriented approach to group theory and representation theory, see~\cite{ragone2022representation}.

\subsection{From QML to GQML}

For simplicity and concreteness, in this paper we focus on quantum supervised learning with scalar labels. However, we remark that GQML is relevant in other contexts such as unsupervised learning~\cite{otterbach2017unsupervised,kerenidis2019q}, generative modeling~\cite{dallaire2018quantum,benedetti2019generative,kieferova2021quantum,romero2021variational} or reinforcement learning~\cite{saggio2021experimental,skolik2021quantum}. For instance the constructions presented here can be readily adapted to learning problems with non-scalar output (e.g., quantum generative models, where the output is a quantum state, or a probability distribution).

Suppose we are given some dataset composed of $M$ quantum states and scalar labels $\{\rho_i, y_i\}_{i=1}^M$, where $\rho_i \in\RC$ are quantum states from a data domain $\RC\subset \mathcal{B}(\mathcal{H})$, with  $\mathcal{B}(\mathcal{H})$ the set of bounded linear operators on $\mathcal{H}$. The labels come from a label domain $\mathcal{Y}$ and are obtained from a (potentially probabilistic) function:
\begin{equation}
    f: \RC \rightarrow \mathcal{Y}\qquad\text{(Underlying function)}\,.
\end{equation}
For example, in binary classification one has  $\mathcal{Y} = \{0,1\}$. This data may come from some physical quantum mechanical process (quantum data~\cite{schatzki2021entangled}) or may be classical information embedded into quantum states (classical data~\cite{havlivcek2019supervised}). Given the dataset, one then optimizes a learning model:
\begin{equation}
    h_{\thv}: \RC \rightarrow \mathcal{Y}\qquad\text{(Model)}\,,
\end{equation}
where $\thv$ are trainable parameters, with the intent of closely approximating the underlying function $f$.

In variational QML~\cite{cerezo2020variationalreview}, the states in the dataset are fed into a trainable quantum circuit, which is usually modelled by a sequence of parametrized unitary matrices. However, in this work we will consider more general operations -- parametrized quantum channels -- which we refer to as \textit{quantum neural networks} (QNNs). Respectively denoting the spaces of bounded linear operators in $\HC^\text{in}$ and $\HC^\text{out}$ as $\BC^{\text{in}}:=\mathcal{B}(\mathcal{H}^\text{in})$ and $\BC^{\text{out}}:=\mathcal{B}(\mathcal{H}^\text{out})$, a QNN is a parametrized completely positive and trace-preserving (CPTP) linear map $\mathcal{N}_{\thv}: \mathcal{B}^\text{in} \rightarrow \mathcal{B}^\text{out}$ . 

We can further decompose the QNN as a concatenation of channels, or \textit{layers}. We say that $\NC_{\thv}$ is an $L$-layered QNN if it can be expressed as $\mathcal{N}_{\thv} = \mathcal{N}^L_{\thv_L} \circ \cdots \circ \mathcal{N}^1_{\thv_1}$, where the $\NC^l_{\thv_l}$ (with $l=1,\ldots, L$) are parametrized CPTP channels such that $\thv=(\thv_1,\ldots,\thv_L)$.  From the previous, the $l$-th layer maps between operators acting on some Hilbert space $\HC^{l-1}$ to operators acting on some (potential different) Hilbert space $\HC^{l}$. That is, $\NC^l_{\thv_l}:\BC^{l-1}\rightarrow \BC^{l}$, where we have defined for simplicity of notation $\BC^l:=\BC(\HC^{l})$.

After applying the QNN to an input state $\rho$, one measures the resulting state with respect to a set of observables $\{O_j\}_j$ to obtain the expectation values $\{\Tr[\mathcal{N}_{\thv}(\rho) O_j]\}_j$. Finally, a classical post-processing step, $\CC$, maps these outcomes to a loss function
\begin{equation}\label{eq:predicted-label}
    \ell_{\thv}(\rho) = \CC( \{\Tr[\NC_{\thv}(\rho) O_j]\}_j).
\end{equation}
We quantify the performance of the model over the dataset via the so-called empirical loss
\begin{align}
    \widehat{\LC}_{\thv}(\{\rho_i,y_i\}_{i=1}^M) = \frac{1}{M}\sum_{i=1}^M \FC(\ell_{\thv}(\rho_i),y_i)\,,
\end{align}
defined in terms of some problem-dependent function $\FC$. Finally, employing a classical computer, one optimizes over the parameters $\boldsymbol{\theta}$ to minimize the empirical loss until certain convergence conditions are met. The optimal parameters, along with the loss function, are used to predict labels.

One of the most important aspects that make or break the QML scheme are its inductive biases, i.e., the assumptions about the problem that one embeds in the structure of the model. In our case, this amounts to an adequate choice of the parameterized layers $\NC_{\thv_l}^l$ forming the QNN and of the measurement operators $O_j$. In a nutshell, the inductive biases are responsible for the model exploring only a subset of all possible functions from $\RC$ to $\YC$. 
If these inductive biases are too general or not accurate, the model is expected to train poorly, while models with appropriate inductive biases can often benefit from an improved performance~\cite{kubler2021inductive,holmes2021connecting,schatzki2022theoretical}. GQML aims at providing a framework for incorporating prior geometrical information in the model with the hope of improving its trainability, data requirements, generalization, and overall performance. In particular, the main goal of GQML is to create models respecting the underlying symmetries of the domain over which they act. In the next sections we will briefly review how to use tools from representation theory to deal with symmetries, as well as recall basic concepts such as equivariance and invariance.

\subsection{Symmetry groups and representation theory}\label{sec:rep-theory}

The first step towards building a GQML model is identifying the set of relevant operations that the model needs to preserve. We say that a QML problem has symmetry with respect to a group $G$ if the labels are unchanged under the action of a representation of $G$ on the input states. 
\begin{definition}[Label symmetries and $G$-invariance]
Given a compact group $G$ and some unitary representation $R$ acting on quantum states $\rho$, we say the underlying function $f$ has a label symmetry if it is $G$-invariant, i.e., if
\begin{equation}
f(R(g) \rho R(g)\ad) = f(\rho),\ \forall g\in G\,.
\end{equation}
\end{definition}
As previously mentioned, the goal of GQML is  to build  models that respect the label symmetries of the data. That is, we want to build $G$-invariant models such that  $h_{\thv}(\rho) = h_{\thv}(R(g) \rho R(g)\ad)$, for any $g\in G$, and for all values of $\thv$.

To further understand how symmetry groups act, and how one can manipulate them, we recall here some basic concepts from representation theory. (See Ref.~\cite{ragone2022representation} for further background.) Namely, given a group $G$, its \textit{representation} describes its action on some vector space $\HC$, which we assume for simplicity to be a Hilbert space.
\begin{definition}[Representation]\label{def:representation}
A representation $(R,\HC)$ of a group $G$ on a vector space $\HC$ is a homomorphism  $R: G \rightarrow GL(\HC)$ from the group $G$ to the space of invertible linear operators on $\HC$, that preserves the group structure of $G$.
\end{definition}
Specifically, a group homomorphism $R$ satisfies
\begin{align}
    R(g_1)R(g_2) = R(g_1g_2) \quad \forall g_1,g_2 \in G\,.
\end{align}
This implies that, for all $g\in G$, the representation of its inverse is the inverse of its representation, $R(g^{-1})=R(g)^{-1}$, and the representation of the identity element $e$ is the identity operator on $\HC$, $R(e) = \id_{\dim(\HC)}$. Given a representation, it is relevant to define its commutant.
\begin{definition}[Commutant]
    Given a representation $R$ of $G$, we define the commutant of $R$ as the set of bounded linear operators on $\HC$ that commute with every element in $R$, i.e.,
    \begin{align}
        \mf{comm}(R) = \{H\in \mathcal{B}(\HC)\,|\, [H,R(g)]=0 \ \forall g \in G\}\,.
    \end{align}
    \label{def:comm}
\end{definition}

Consider the following remarks about representations:
\begin{itemize}
    \item A representation is \emph{faithful} if it maps distinct group elements to distinct elements in $\HC$. As an example of unfaithfulness, the \emph{trivial representation} maps all group elements to the identity in $\HC$.
    \item Two representations $R_1$ and $R_2$ are \textit{equivalent} if there exists a change of basis $W$ such that $V R_1(g) V\ad = R_2(g)$ for all $g \in G$, in which case we denote $R_1 \cong R_2$. 
    \item A \textit{subrepresentation} is a subspace $\KC \subset \HC$ that is invariant under the action of the representation, i.e., $R(g)\ket{w}\in \KC$ for all $g\in G$ and $\ket{w}\in \KC$. The group can then be represented through $R|_\KC$, the restriction of $R$ to the vector subspace $\KC$. A subrepresentation $\KC$ is non-trivial if $\KC\neq \{0\}$ (the zero vector) and $\KC\neq \HC$.
\end{itemize}

\begin{definition}(Irreps)
    A representation is said to be an irreducible representation (irrep) if it contains no non-trivial subrepresentations. 
\end{definition}
Irreps are the fundamental building blocks of representation theory. For any finite or compact group, the representations can be chosen to be unitary~\cite{hall2013lie}. Hence in the rest of this paper we will consider unitary representations on complex Hilbert spaces. In the case that the representation is finite-dimensional, we can go a step further and say that the vector space can be decomposed into a direct sum over irreducible subrepresentations. This leads to the so-called \textit{isotypic} decomposition
\begin{align}
    \HC \cong \bigoplus_{\lambda} \HC_{\lambda} \otimes \mathbb{C}^{m_\lambda}, \qquad  R(g) \cong \bigoplus_\lambda R_\lambda(g) \otimes \id_{m_\lambda},
    \label{eq:isotypic}
\end{align}
where $\cong$ indicates that there exists a global change of basis matrix $W$ that simultaneously block-diagonalizes the unitaries $R(g)$ for all $g\in G$. Here, $\lambda$ labels the irreps, $m_\lambda$ is the multiplicity of the irrep $R_\lambda$ and $R_\lambda(g) \in \mbb{C}^{d_\lm \times d_\lm}$. Note that $\sum_\lm d_\lm m_\lm = \dim(\HC)$.

When $G$ is not a finite group, we assume it to be a compact Lie group  with an associated Lie algebra $\mf{g}$ such that $e^{\mf{g}}=G$. That is, $\mf{g}=\{a | e^a\in G\}$.  In particular, if $G$ has a representation $R$ then $\mf{g}$ has a representation $r$ given by the differential of $R$, i.e., given $a\in\mf{g}$, $R(e^a)=e^{r(a)}$.

In this work we will mainly focus on the  \emph{adjoint representation} of $G$, as it describes how the group acts on density matrices (and other bounded operators). A unitary representation $R$ on $\mathcal{H}$ induces an action on $\mathcal{B}(\mathcal{H})$, given by
\begin{align}\label{eq:adjoint}
   \Adj_{R(g)} (\rho) = R(g) \rho R(g)^\dagger, \ \forall g \in G, \rho \in \mathcal{B}(\mathcal{H}).
\end{align}
where $\Adj_{R(g)}$ denotes the adjoint representation.
Note that for the case of Lie groups, the adjoint representation also exists at the Lie algebra level and is given by $\text{ad}_{r(a)}(\cdot) = [r(a),\cdot]$.

To finish this section, we find convenient to define a distinction between symmetry groups.

\begin{definition}[Inner and outer symmetries]\label{def:inner-outer}
Given a composite Hilbert space, we call a representation of a group an \textbf{inner} symmetry if it acts locally on each subsystem, and an \textbf{outer} symmetry if it permutes the subsystems.
\end{definition}

For instance, when working with $n$-qubit systems, the tensor representation of $\mbb{SU}(2)$,  $R(g\in \mbb{SU}(2))=g^{\otimes n}$ is an inner symmetry, as it acts locally on each subsystem. On the other hand, the qubit-permuting representation of $S_n$, given by $R(g)\bigotimes_{j=1}^n \ket{\psi_j} =   \bigotimes_{j=1}^n \ket{\psi_{g^{-1}(j)}}$ is an outer symmetry.

\subsection{Equivariance and invariance in quantum neural networks}\label{sec:equiv_inv}

Here we present a recipe to obtain $G$-invariant QML models based on the key concept of \textit{equivariance}, which we will first define for linear maps and then for operators. 

Given a group $G$ and a representations $R$, we typically say that a linear map $\phi: \BC(\HC)\rightarrow \BC(\HC)$ is equivariant if and only if $\phi \circ \Adj_{R(g)} = \Adj_{R(g)} \circ \phi$ for all $g\in G$. We can extend this definition by noting that neither the input and output representations, nor the Hilbert spaces, need to be the same. Thus, we consider the following general definition.
\begin{definition}[Equivariant map] Given a group $G$ and its representations $(R^{\text{in}}, \HC^{\text{in}})$ and $(R^{\text{out}}, \HC^{\text{out}})$. A linear map $\phi: \BC^{\text{in}} \rightarrow \BC^{\text{out}}$ is $(G,R^{\text{in}},R^{\text{out}})$-equivariant if and only if
    \begin{align}
    \phi \circ \Adj_{R^{\text{in}}(g)} = \Adj_{R^{\text{out}}(g)} \circ \phi, \ \forall g\in G.
    \end{align}
    \label{def:equiv}
\end{definition}
The property of equivariance can be visualized via the following commutative diagram:
\begin{equation}
\begin{tikzcd}[row sep=large,column sep=huge]
\BC^{\text{in}}_{}\,\,\arrow[r, "\text{Ad}_{R^{\text{in}}(g)}"]\arrow[d, "\phi"]&
\,\,\BC^{\text{in}}_{}\arrow[d, "\phi" ] \\
\BC^{\text{out}}\,\,\arrow[r, "\text{Ad}_{R^{\text{out}}(g)}"]&\,\, \BC^{\text{out}}
\end{tikzcd}\,.\nonumber
\end{equation}
The action of an equivariant map $\phi$ \textit{commutes} with the action of the group. That is, for an equivariant $\phi$ it is equivalent to (i) first acting with $\phi$ and then acting with $R^{\text{out}}$, or to (ii) first acting with $R^{\text{in}}$ and then acting with $\phi$. Note that, for the special case of $R^{\text{out}}$ being the trivial representation, the map is \emph{invariant}: such that $\phi \circ \Adj_{R^{\text{in}}(g)} = \phi$ for all $g\in G$. 

Next, let us define what an equivariant operator is. 
\begin{definition}[Equivariant operator]\label{def:equiv-op}
Given a group $G$ and its representation $(R, \HC)$, an operator  $O\in\BC(\HC)$ is $(G,R)$-equivariant if and only if 
    \begin{align}
    [O,R(g)]=0, \ \forall g\in G.
    \end{align}
\end{definition}
Evidently, Definition~\ref{def:equiv-op} implies that $O\in\mf{comm}(R)$, from which we can easily see that $\mf{comm}(R)$ is the space of all equivariant operators. Moreover, we can also see that the adjoint action of a $(G,R)$-equivariant operator, is a $(G,R,R)$-equivariant map. That is, $\text{ad}_{O}\circ \text{Ad}_{R(g)} =  \text{Ad}_{R(g)}\circ \text{ad}_{O}$.

The previous definitions present us with a recipe to build GQML models of the form in Eq.~\eqref{eq:predicted-label} whose outputs are invariant under the action of the group.
\begin{proposition}[Invariance from equivariance]\label{prop:invar}
A model consisting of an $(G,R^{\text{in}},R^{\text{out}})$-equivariant QNN and a $(G,R^{\text{out}})$-equivariant set of measurements is $G$-invariant.
\end{proposition}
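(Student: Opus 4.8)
The plan is to chase the definitions through the loss-function pipeline of Eq.~\eqref{eq:predicted-label} and show that each stage is insensitive to a symmetry transformation of the input. Let $\NC_{\thv}$ be a $(G,R^{\text{in}},R^{\text{out}})$-equivariant QNN and let $\{O_j\}_j$ be a $(G,R^{\text{out}})$-equivariant set of measurements, i.e.\ $[O_j,R^{\text{out}}(g)]=0$ for all $g\in G$ and all $j$. Fix $g\in G$ and an input state $\rho$. First I would apply equivariance of the QNN (Definition~\ref{def:equiv}) to rewrite
\begin{equation}
\NC_{\thv}(R^{\text{in}}(g)\rho R^{\text{in}}(g)\ad) = \NC_{\thv}(\Adj_{R^{\text{in}}(g)}(\rho)) = \Adj_{R^{\text{out}}(g)}(\NC_{\thv}(\rho)) = R^{\text{out}}(g)\,\NC_{\thv}(\rho)\,R^{\text{out}}(g)\ad\,.\nonumber
\end{equation}

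Next I would feed this through the measurements. Using cyclicity of the trace together with the fact that $R^{\text{out}}(g)$ is unitary and commutes with each $O_j$ (Definition~\ref{def:equiv-op}), we get
\begin{equation}
\Tr\!\big[\NC_{\thv}(R^{\text{in}}(g)\rho R^{\text{in}}(g)\ad)\,O_j\big] = \Tr\!\big[R^{\text{out}}(g)\NC_{\thv}(\rho)R^{\text{out}}(g)\ad O_j\big] = \Tr\!\big[\NC_{\thv}(\rho)\,R^{\text{out}}(g)\ad O_j R^{\text{out}}(g)\big] = \Tr\!\big[\NC_{\thv}(\rho)\,O_j\big]\,,\nonumber
\end{equation}
so the entire collection of expectation values $\{\Tr[\NC_{\thv}(\cdot)O_j]\}_j$ is unchanged. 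Since the classical post-processing $\CC$ and the function $\FC$ act only on these outcomes, the predicted label $\ell_{\thv}$ — and hence the model $h_{\thv}$ — satisfies $h_{\thv}(R^{\text{in}}(g)\rho R^{\text{in}}(g)\ad) = h_{\thv}(\rho)$ for all $g\in G$ and all $\thv$, which is precisely $G$-invariance.

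There is really no substantive obstacle here; the statement is essentially a bookkeeping composition of the two definitions, and the only thing to be careful about is that invariance must hold \emph{as a set} over all $O_j$ simultaneously (so that $\CC$ sees an identical tuple of inputs), and for \emph{all} parameter values $\thv$, which is immediate because the equivariance hypotheses were imposed parameter-wise on the layers. If one wants to be pedantic about the probabilistic case (where $\ell_{\thv}$ or $f$ may be stochastic), I would note that the argument shows equality of the full vector of expectation values, hence equality of the induced output distributions, which is the appropriate notion of invariance in that setting. One could also phrase the whole thing more abstractly by observing that the composition of equivariant maps is equivariant and that composing with an invariant map (here, the effective map $\rho\mapsto\{\Tr[\NC_{\thv}(\rho)O_j]\}_j$, which is invariant by the commutation relation) yields an invariant map; but the direct trace computation above is cleaner and self-contained.
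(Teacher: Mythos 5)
Your proof is correct and follows essentially the same route as the paper's: push the group action through the QNN by equivariance, then use cyclicity of the trace and the commutation $[O_j,R^{\text{out}}(g)]=0$ to cancel it, so every expectation value (and hence the post-processed output) is unchanged. The additional remarks about simultaneity over the set $\{O_j\}_j$, all $\thv$, and the stochastic case are fine but not needed beyond what the paper's own computation already establishes.
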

\begin{proof}
For every $g\in G$, $\rho\in \BC^{\text{in}}$ and $\thv$ we have
\begin{align}
h_{\thv}(\text{Ad}_{R^\text{in}(g)}(\rho))&=\CC( \{\Tr[\NC_{\thv}(\text{Ad}_{R^\text{in}(g)}(\rho)) O_j]\}_j)\nonumber\\
&=\CC( \{\Tr[\text{Ad}_{R^\text{out}(g)}(\NC_{\thv}(\rho)) O_j]\}_j)\nonumber\\
&=\CC( \{\Tr[\NC_{\thv}(\rho)R^\text{out}(g)\ad O_j R^\text{out}(g)]\}_j)\nonumber\\
&=\CC( \{\Tr[\NC_{\thv}(\rho) O_j] \}_j)=h_{\thv}(\rho)\,.\qedhere
\end{align}
\end{proof}

Armed with the previous definitions we are now ready to present the basic framework for EQNNs.

\begin{figure}
    \centering
    \includegraphics[width=\columnwidth]{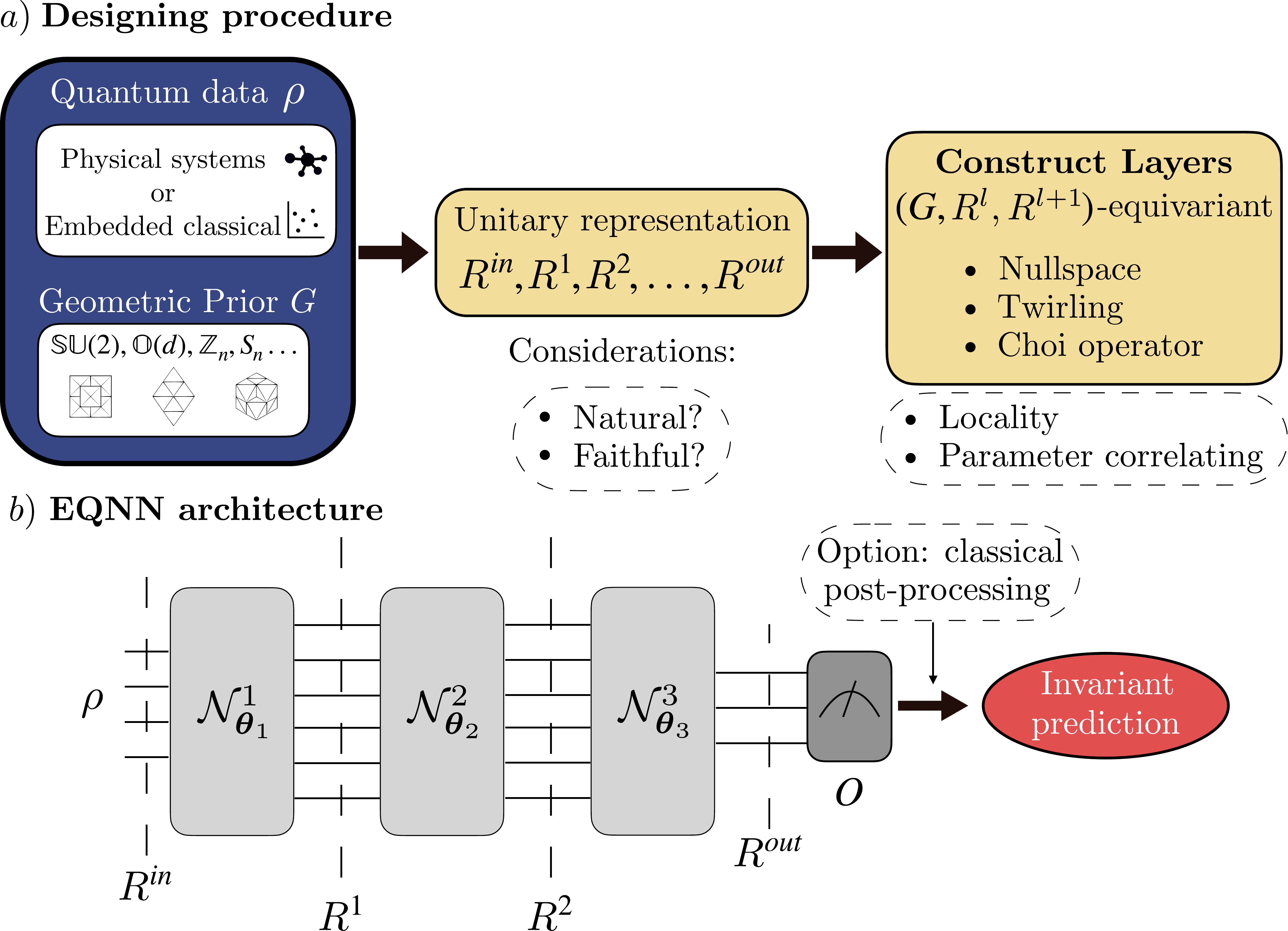}
    \caption{\textbf{Equivariant quantum neural network.} 
    a) We consider  a QML problem composed of a dataset (that can either be quantum mechanical in nature, or corresponding to classical data that have been encoded in quantum states) as well as a label symmetry group $G$. The first step  is to define the input and output representation of $G$ at each layer, where these can be natural, faithful, non-faithful, etc. From here, we will provide different techniques which allow us to construct the EQNN layers and control, for instance, the locality of their gates. b) Dashed lines indicate the representations of the symmetry group $G$ at specific stages in the EQNN, which may change between layers. At first, the input state $\rho_\text{in}$ is acted upon by the representation $R^\text{in}$. The $l$\textsuperscript{th} layer of the EQNN, $\mathcal{N}^l_{\thv_l}$, must be $(G,R^{l},R^{l+1})$-equivariant. In sum, the full architecture, $\phi = \mathcal{N}^L_{\thv_L} \circ \cdots \circ \mathcal{N}^1_{\thv_1}$, is $(G,R^\text{in},R^\text{out})$-equivariant. The $(G,,R^\text{out})$-equivariant measurement operator $O$ is in the commutant of the output representation $R^\text{out}$. Note that if we only want the EQNN to produce an output state equivariantly or invariantly (e.g. in generative models), we can omit the measurements.}
    \label{fig:equiv_circuit}
\end{figure}

 First, however, we find it instructive to provide an example of a classification problem naturally amenable to these methods. Suppose that we are given the ground states of the bond-alternating Heisenberg model, which has the Hamiltonian
\begin{align}
    H = J_1 \sum_{i \, \text{even}} S_{i} \cdot S_{i+1} + J_2 \sum_{i \, \text{odd}} S_i \cdot S_{i+1} \, ,
\end{align}
where $S_i = (S_x^i, S_y^i, S_z^i)$ is the spin operator for the $ith$ qubit. There are two phases of matter for this Hamiltonian: trivial and topologically protected. As a learning problem, we consider the task of determining if the states are in the trivial or topologically protected phases. Consider the representation $R(g) = g^{\otimes n}$ of $\mbb{SU}(2)$. For a ground state $\ket{\psi}$, one can show that $R(g)\ket{\psi}$ is also a ground state. Thus, the labels of states are invariant under an action of $\mbb{SU}(2)$. In Sec.~\ref{sec:numerics} we return to this problem and show that a EQNN significantly outperforms a quantum convolutional neural network for this task.

\section{Theory of equivariant layers for EQNNs}\label{sec:connecting}

In this section we will shed some light on the importance of the choice of representation by studying how EQNNs act on data, and how many degrees of freedom they have. Most notably, we will show that layers that are equivariant to different representations can  process data in different ways, so that a given layer could potentially ``see'' information that is inaccessible to another one. The latter will point to the crucial importance that intermediate representations have. Finally, we will  present a classification of EQNN layers based on their input and output representations, allowing them to  be non-linear, and even change the symmetry group under which they are equivariant. Our results can be summarized in Fig.~\ref{fig:equiv_circuit}.

\subsection{Equivariant layers as Fourier space actions}\label{sec:interpret}

Let us start by analyzing how EQNNs act on data. For simplicity, we first consider the case when $R^{\text{in}}=R^{\text{out}}=R$, the states in the dataset are pure $\rho=\ketbra{\psi}{\psi}$, and the EQNN is unitary, i.e., $\NC_{\thv}(\rho)=U(\thv)\rho U(\thv)\ad$.  Note that if $\NC_{\thv}$ is a $(G,R,R)$-equivariant map, then $U(\thv)$ is a $(G,R)$-equivariant operator, and hence, it belongs to $\mf{comm}(R)$. Hence, we can understand the action of $U(\thv)$ on $\ket{\psi}$ by studying the structure of the commutant.

\begin{theorem}[Structure of commutant, Theorem IX.11.2 in \cite{simon1996representations}] Let $R$ be a unitary representation of a finite-dimensional compact group $G$. Then under the same change of basis $W$, which block diagonalizes $R$ as in Eq.~\eqref{eq:isotypic}, any operator $H\in\mf{comm}(R)$ takes the following block-diagonal form
\begin{equation}
    H \cong \bigoplus_{\lambda}
\id_{d_\lambda} \otimes H_{\lambda} ,
\label{eq:comm-structure}
\end{equation}
where each $H_{\lambda}$ is an $m_\lambda$-dimensional operator that is repeated $d_\lambda$ times.
\label{thm:comm}
\end{theorem}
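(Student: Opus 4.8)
The plan is to exploit the isotypic decomposition of $R$ together with Schur's lemma. After applying the change of basis $W$ that block-diagonalizes $R$ as in Eq.~\eqref{eq:isotypic}, we may assume without loss of generality that $R(g) = \bigoplus_\lambda R_\lambda(g)\otimes \id_{m_\lambda}$ acting on $\bigoplus_\lambda \HC_\lambda\otimes\mbb{C}^{m_\lambda}$, where the $R_\lambda$ are pairwise inequivalent irreps. First I would write an arbitrary operator $H\in\BC(\HC)$ in block form with respect to this decomposition, i.e. as a collection of blocks $H_{\lambda\mu}:\HC_\mu\otimes\mbb{C}^{m_\mu}\to\HC_\lambda\otimes\mbb{C}^{m_\lambda}$, and impose the commutation condition $[H,R(g)]=0$ blockwise, obtaining $H_{\lambda\mu}\,(R_\mu(g)\otimes\id_{m_\mu}) = (R_\lambda(g)\otimes\id_{m_\lambda})\,H_{\lambda\mu}$ for all $g\in G$.

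Next I would analyze each block. Viewing $H_{\lambda\mu}$ as an element of $\mathrm{Hom}(\HC_\mu,\HC_\lambda)\otimes\mathrm{Hom}(\mbb{C}^{m_\mu},\mbb{C}^{m_\lambda})$, expand it as $\sum_k A_k\otimes B_k$ with the $B_k$ linearly independent; the intertwining relation forces $\sum_k (A_k R_\mu(g))\otimes B_k = \sum_k (R_\lambda(g) A_k)\otimes B_k$, and by linear independence of the $B_k$ each $A_k$ intertwines $R_\mu$ and $R_\lambda$. By Schur's lemma, $A_k = 0$ whenever $\lambda\neq\mu$ (inequivalent irreps), so $H$ is block-diagonal across distinct $\lambda$; and for $\lambda=\mu$, every intertwiner of the irrep $R_\lambda$ with itself is a scalar multiple of $\id_{d_\lambda}$, so $A_k = c_k\,\id_{d_\lambda}$. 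Hence $H_{\lambda\lambda} = \id_{d_\lambda}\otimes\big(\sum_k c_k B_k\big) =: \id_{d_\lambda}\otimes H_\lambda$ for some $H_\lambda\in\BC(\mbb{C}^{m_\lambda})$, which is exactly the claimed form $H\cong\bigoplus_\lambda \id_{d_\lambda}\otimes H_\lambda$, with the $m_\lambda$-dimensional operator $H_\lambda$ appearing with multiplicity $d_\lambda$. Conversely any operator of this form manifestly commutes with $R$, establishing the characterization.

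The main obstacle is really just the careful bookkeeping: one must be precise about the tensor-factor ordering in Eq.~\eqref{eq:isotypic} (irrep space $\HC_\lambda$ tensored with multiplicity space $\mbb{C}^{m_\lambda}$) so that the identity lands on the correct factor, and one must invoke the correct form of Schur's lemma for compact groups — namely that over $\mbb{C}$ the space of intertwiners between two irreps is one-dimensional if they are equivalent and zero-dimensional otherwise. Since the statement is quoted directly as Theorem IX.11.2 of \cite{simon1996representations}, I would also be content to simply cite that reference and present the Schur's-lemma argument above as a self-contained sketch for completeness, emphasizing that compactness guarantees both complete reducibility (needed for the isotypic decomposition to exist) and the finite-dimensionality of each irrep (needed for Schur's lemma in its usual matrix form).
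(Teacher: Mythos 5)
Your proof is correct: the blockwise reduction via the isotypic decomposition, the expansion $\sum_k A_k\otimes B_k$ with linearly independent $B_k$, and the two applications of Schur's lemma (vanishing intertwiners between inequivalent irreps, scalar intertwiners on the diagonal) constitute the standard argument, and the converse direction is indeed immediate. The paper itself offers no proof of this statement — it is quoted as Theorem IX.11.2 of the cited reference — so your sketch is simply the textbook argument underlying that citation, and it matches it in approach.
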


The previous theorem shows that any equivariant unitary can be expressed as $U(\thv)=W\ad\left( \bigoplus_\lm \id_{d_\lm}\otimes U_\lm(\thv) \right)W$, indicating that in the irrep basis, it can only act non-trivially on the multiplicity space. Drawing a parallelism with the classical machine learning literature, where it has been shown that linear equivariant maps can only act on the group Fourier components of the the data~\cite{kondor2018clebsch, cohen2018spherical, thiede2020general, pan2022permutation, anderson2019cormorant}, we can also here interpret EQNNs as a form of generalized Fourier space action. Specifically, the action of $U(\thv)$ can be understood as: (i) first transforming the data to the generalized Fourier space $W \ket{\psi}=\bigoplus_\lm \ket{\overline{\psi}_\lm} \otimes  \ket{ \psi_\lm}$, (ii)  acting on each Fourier component $\ket{ \psi_\lm}$ with $U_\lm(\thv)$, and (iii) transforming back with $W\ad$. That is,
\begin{equation}
U(\thv)\ket{\psi} = W^\dagger\left( \bigoplus_\lambda \ket{\overline{\psi}_\lambda} \otimes U_\lambda(\thv) \ket{\psi _\lambda}\right)\,.\label{eq:gen-fourier}
\end{equation}
Note that this interpretation can be readily generalized to channels.

Here, we can see that once the representation of $G$ is fixed, so is the information in the input state one has access to (equivariantly). Explicitly, the EQNN cannot manipulate information stored in the components $\ket{\overline{\psi}_\lm}$ of the input state. As we will see in the next subsection, one can still try to access this information via changes of representation.

Notably, Eq.~\eqref{eq:gen-fourier} generalizes \emph{group convolution} in the Fourier basis: When $R$ is the regular representation, the change of basis is the well-known group Fourier transform \cite{folland2016course, childs2010quantum} (see Appendix \ref{app:structure}). 
This generalized Fourier space picture has proved crucial in designing various classical architectures \cite{kondor2018clebsch, cohen2018spherical, thiede2020general, pan2022permutation, anderson2019cormorant}. This also provides a representation-theoretic justification for the recent quantum ``convolutional layers'' in~\cite{cong2019quantum}.
Recently, this interpretation of equivariant unitaries has also been noted for the special case of $\mbb{SU}(d)$-equivariant quantum circuits in~\cite{zheng2021speeding,zheng2022super}.

\subsection{Free parameters in EQNNs}

\subsubsection{Equivariant unitaries}

The Fourier space picture previously discussed enables the counting of free parameters in equivariant unitaries.
\begin{theorem}[Free parameters in equivariant unitaries] Under the same setup as Theorem \ref{thm:comm}, the unitary operators in $\mf{comm}(R)$ can be fully parametrized by $\sum_{\lm} m_\lm^2$ real scalars.
\label{thm:num-parameters}
\end{theorem}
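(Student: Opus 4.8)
The plan is to build directly on Theorem~\ref{thm:comm}, which already tells us that every $H \in \mf{comm}(R)$ has the block form $H \cong \bigoplus_\lambda \id_{d_\lambda} \otimes H_\lambda$, and then simply impose unitarity on top of this decomposition. First I would observe that if $U \in \mf{comm}(R)$ is unitary, then writing $U \cong \bigoplus_\lambda \id_{d_\lambda} \otimes U_\lambda$ in the irrep basis $W$, the condition $U U\ad = \id$ holds if and only if each $U_\lambda U_\lambda\ad = \id_{m_\lambda}$, i.e.\ each $U_\lambda \in \U(m_\lambda)$ is an $m_\lambda$-dimensional unitary. This is immediate because the direct sum and the tensor factor $\id_{d_\lambda}$ are untouched by conjugation, so unitarity decouples block by block.

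Next I would invoke the standard fact that the unitary group $\U(m)$ is a real Lie group of real dimension $m^2$ (for instance, its Lie algebra is the space of $m \times m$ skew-Hermitian matrices, which has real dimension $m^2$; equivalently one can parametrize $\U(m)$ via $m^2$ Euler-angle-type parameters). Therefore each block $U_\lambda$ contributes exactly $m_\lambda^2$ real parameters, and since the blocks are independent, a general equivariant unitary is parametrized by $\sum_\lambda m_\lambda^2$ real scalars. One should also note that this count is tight: distinct choices of the tuple $(U_\lambda)_\lambda$ give distinct operators $U$ (the map $W\ad(\bigoplus_\lambda \id_{d_\lambda}\otimes U_\lambda)W$ is injective in the $U_\lambda$), so there is no overcounting, and conversely every equivariant unitary arises this way by Theorem~\ref{thm:comm}.

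The remaining item is a small subtlety rather than a genuine obstacle: one must make sure the $d_\lambda$-fold repetition in $\id_{d_\lambda} \otimes U_\lambda$ does not inflate the parameter count — it does not, precisely because the repeated copies are forced to be identical by the commutant structure, so the free data lives entirely in the $m_\lambda$-dimensional multiplicity-space operators. I expect the only thing worth stating carefully is this decoupling-plus-injectivity argument; everything else (the dimension of $\U(m)$, the block-diagonal algebra) is textbook. If one wants a fully self-contained treatment one could alternatively phrase the whole thing at the Lie-algebra level: the equivariant \emph{anti-Hermitian generators} form $\bigoplus_\lambda \id_{d_\lambda}\otimes \u(m_\lambda)$, whose real dimension is $\sum_\lambda m_\lambda^2$, and exponentiating gives the connected component of the equivariant unitary group; but the direct ``unitarity decouples the blocks'' route is shorter and I would present that.
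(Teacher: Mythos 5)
Your proof is correct and follows essentially the same route as the paper's: apply Theorem~\ref{thm:comm} to get the block form $\bigoplus_\lambda \id_{d_\lambda}\otimes U_\lambda$, note that $U^\dagger U = \bigoplus_\lambda \id_{d_\lambda}\otimes U_\lambda^\dagger U_\lambda$ forces each block to be unitary, and count $m_\lambda^2$ real parameters per block. The extra remarks on injectivity and the Lie-algebra phrasing are fine but not needed beyond what the paper already states.
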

\begin{proof} Any unitary $U$ in $\mf{comm}(R)$ takes the block-diagonal form $U = \bigoplus_{\lambda}
\id_{d_\lambda} \otimes U_{\lambda}$ in the Fourier basis. Observe that the operators $U_\lambda$ must also be unitaries since $U^{\dagger}U = \bigoplus_{\lambda}
\id_{d_\lambda} \otimes U_{\lambda}^{\dagger} U_{\lambda}$. A unitary in $\mathbb{U}(m_\lambda)$ is parametrized by $m_\lambda^2$ real scalars, hence a total number of $\sum_{\lambda} m_\lambda^2$ parameters suffice to parametrize $U$.
\end{proof}

Theorem \ref{thm:num-parameters} describes how ``significant'' the symmetry is to the problem, in the sense that the larger the representations of $G$, the smaller the commutant and thus the fewer parameters needed to fully characterize equivariant unitaries. In Table~\ref{tab:free_params} we present examples of different symmetries constraining the number of free parameters in a unitary EQNN to both exponentially many, polynomially many, and constant.

\noindent
\begin{table}[t]
    \centering
\begin{tabular}{|c||*{3}{c|}}\hline
Group&{Representation}&{Free parameters} & $\mathfrak{comm}(R)$\\\hline\hline
None   &  $R_{\text{trivial}}(g)=\id_{2^n}$ & $4^{n}$ &  $\mathbb{C}[\mathbb{U}(2^n)]$ \\\hline
$\mathbb{U}(2^n)$   &  $R_{\text{def}}(g)=g$ & $1$ &  $\mathbb{C}[\id_{2^n}]$ \\\hline
$\mathbb{U}(2)$   &  
$R_{\text{tens}}(g)=g^{\otimes n}
$ & $\frac{1}{n+2}\binom{2n+2}{n+1}\in\Omega(2^n)$ &  $\mathbb{C}[R_{\text{qub}}(S_n)]$ \\\hline
$S_n$   & {\footnotesize
$\begin{array}{l} R_{\text{qub}}(g)\bigotimes_{i=1}^n \ket{\psi_i}  \\  =   \bigotimes_{i=1}^n \ket{\psi_{g^{-1}(i)}} \end{array}$}
  & $\binom{n+3}{3}\in\Theta(n^3)$  &  $\mathbb{C}[R_{\text{tens}}(\mathbb{U}(2))]$ \\\hline
\end{tabular}
    \caption{ \textbf{Free parameters in  unitary EQNNs.} We show how different symmetries impact the number of free parameters in a  $(G,R,R)$-equivariant unitary. Here, for set a $S$, we have defined  $\mathbb{C}[S]\equiv\text{span}_{\mathbb{C}}(S)$.}
    \label{tab:free_params}
\end{table}

\subsubsection{Equivariant channels}
We have already seen how the inductive biases in unitary EQNNs affect their structure, and concomitantly their number of free parameters. We now turn our attention to $(G,R^{\text{in}},R^{\text{out}})$-equivariant channels. First, recall that any linear channel $\phi: \mathcal{B}^\text{in}\rightarrow \BC^\text{out}$, can be fully characterized through its Choi operator~\cite{wilde2013quantum}
\begin{align}\label{eq:choi_operator}
    J^{\phi} = \sum_{i,j}\ket{i}\bra{j}\otimes \phi(\ket{i}\bra{j}),
\end{align}
which acts in $\mathcal{B}(\mathcal{H}^\text{in}\otimes \mathcal{H}^\text{out})$. The action of $\phi$ on an input state $\rho \in \mathcal{B}^{\text{in}}$
can be recovered from $J^\phi$ as follows~\cite{watrous2018thetheory}
\begin{equation}
    \phi(\rho) = \operatorname{Tr}_{\text{in}}[J^\phi (\rho^\top \otimes \id_{\dim(\HC^{\text{out}})})].
\end{equation}
The Choi operator is related to equivariance via the following theorem.
\begin{lemma}[Lemma 11 in \cite{gschwendtner2021programmability} paraphrased] A channel $\phi$ is $(G,R^{\text{in}},R^{\text{out}})-$equivariant if and only if $J^{\phi} \in \mathfrak{comm}({R^{\text{in}}}^* \otimes R^{\text{out}})$,
where the $*$ symbol denotes complex conjugate.
\label{lem:choi-equivariance}
\end{lemma}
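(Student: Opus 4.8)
The plan is to prove the equivalence by unpacking both the definition of equivariance for a channel and the defining relation of the Choi operator, and then matching them term by term. First I would recall that $\phi$ is $(G,R^{\text{in}},R^{\text{out}})$-equivariant means $\phi(R^{\text{in}}(g)\rho R^{\text{in}}(g)\ad) = R^{\text{out}}(g)\phi(\rho) R^{\text{out}}(g)\ad$ for all $g\in G$ and all $\rho\in\BC^{\text{in}}$, and that the Choi operator is $J^\phi = \sum_{i,j}\ketbra{i}{j}\otimes \phi(\ketbra{i}{j})$. The natural strategy is to compute, for fixed $g$, how conjugating $J^\phi$ by ${R^{\text{in}}(g)}^*\otimes R^{\text{out}}(g)$ acts, using the fact that $R^{\text{in}}(g)^*\ket{i} = \sum_k R^{\text{in}}(g)^*_{ki}\ket{k}$, i.e. $\sum_{i,j} {R^{\text{in}}(g)}^*\ketbra{i}{j}{R^{\text{in}}(g)}^\top = \sum_{i,j}\ketbra{i}{j}\otimes(\text{relabeling})$, which amounts to the identity ${R^{\text{in}}(g)}^*\,|i\rangle\langle j|\,{R^{\text{in}}(g)}^\top$ summing to reproduce the conjugation of the maximally entangled/unnormalized state. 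The key algebraic fact I would invoke is the "transpose trick'': $(A\otimes \id)\sum_i\ket{i}\ket{i} = (\id\otimes A^\top)\sum_i\ket{i}\ket{i}$, applied with $A = R^{\text{in}}(g)\ad$ acting on the input register.

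Concretely, the forward direction: assume $\phi$ equivariant. Write $\Phi^+ = \sum_i \ket{i}\ket{i}$ so that $J^\phi = (\id\otimes\phi)(\ketbra{\Phi^+}{\Phi^+})$. Then
\begin{align}
({R^{\text{in}}(g)}^*\otimes R^{\text{out}}(g))\,J^\phi\,({R^{\text{in}}(g)}^\top\otimes R^{\text{out}}(g)\ad)
&= (\id\otimes R^{\text{out}}(g))\big[({R^{\text{in}}(g)}^*\otimes\id)(\id\otimes\phi)(\ketbra{\Phi^+}{\Phi^+})({R^{\text{in}}(g)}^\top\otimes\id)\big](\id\otimes R^{\text{out}}(g)\ad)\nonumber\\
&= (\id\otimes R^{\text{out}}(g))(\id\otimes\phi)\big[({R^{\text{in}}(g)}^*\otimes\id)\ketbra{\Phi^+}{\Phi^+}({R^{\text{in}}(g)}^\top\otimes\id)\big](\id\otimes R^{\text{out}}(g)\ad)\,,\nonumber
\end{align}
where in the second line I pulled the input-register operators through $\id\otimes\phi$ (they act on a different tensor factor). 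Now apply the transpose trick to rewrite $({R^{\text{in}}(g)}^*\otimes\id)\ket{\Phi^+} = (\id\otimes {R^{\text{in}}(g)}\ad)\ket{\Phi^+}$, so the bracket becomes $(\id\otimes {R^{\text{in}}(g)}\ad)\ketbra{\Phi^+}{\Phi^+}(\id\otimes {R^{\text{in}}(g)})$; pushing these through $\id\otimes\phi$ as $\id\otimes(\phi(R^{\text{in}}(g)\ad(\cdot) R^{\text{in}}(g)))$ and using equivariance in the form $\phi(R^{\text{in}}(g)\ad\,\sigma\,R^{\text{in}}(g)) = R^{\text{out}}(g)\ad\,\phi(\sigma)\,R^{\text{out}}(g)$ (equivalently replace $g$ by $g^{-1}$ in the definition), the outer $R^{\text{out}}(g)$ and $R^{\text{out}}(g)\ad$ cancel and we recover $J^\phi$. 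Hence $J^\phi$ commutes with ${R^{\text{in}}(g)}^*\otimes R^{\text{out}}(g)$ for every $g$, i.e. $J^\phi\in\mf{comm}({R^{\text{in}}}^*\otimes R^{\text{out}})$.

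For the converse, assume $[J^\phi,\,{R^{\text{in}}(g)}^*\otimes R^{\text{out}}(g)]=0$ for all $g$. Using the recovery formula $\phi(\rho) = \Tr_{\text{in}}[J^\phi(\rho^\top\otimes\id)]$, I would compute $\phi(R^{\text{in}}(g)\rho R^{\text{in}}(g)\ad)$ by substituting $(R^{\text{in}}(g)\rho R^{\text{in}}(g)\ad)^\top = {R^{\text{in}}(g)}^*\rho^\top {R^{\text{in}}(g)}^\top$, inserting resolutions of ${R^{\text{in}}(g)}^*{R^{\text{in}}(g)}^\top$-type identities inside the partial trace, using cyclicity of the partial trace on the input register together with the commutation relation to move the ${R^{\text{in}}(g)}^*$ factors onto $J^\phi$ where they can be traded — via the assumed commutation — for ${R^{\text{out}}(g)}$ factors on the output register, which then come outside the partial trace as $R^{\text{out}}(g)\phi(\rho)R^{\text{out}}(g)\ad$. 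This is essentially the same computation run backwards. The main obstacle — really the only subtle point — is bookkeeping the complex conjugates and transposes correctly: one must be careful that the input representation enters $J^\phi$'s symmetry as ${R^{\text{in}}}^*$ (not $R^{\text{in}}$ or ${R^{\text{in}}}^\top$) precisely because of the transpose in the Choi/recovery formulas, and that ${R^{\text{in}}(g)}^*$ is still a (unitary) representation so cyclicity and inversion $g\mapsto g^{-1}$ are legitimate. Since this is Lemma 11 of \cite{gschwendtner2021programmability} paraphrased, I would present the argument concisely and cite that reference for the detailed verification.
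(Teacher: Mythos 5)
Your proof is correct: the transpose trick on $J^{\phi}=(\id\otimes\phi)(\ketbra{\Phi^+}{\Phi^+})$ for the forward direction, and the recovery formula $\phi(\rho)=\Tr_{\text{in}}[J^{\phi}(\rho^{\top}\otimes\id)]$ with cyclicity on the traced register plus unitarity ($R^{\top}R^{*}=\id$, $g\mapsto g^{-1}$) for the converse, is exactly the standard argument. Note that the paper itself does not prove this statement—it imports it as ``Lemma 11 paraphrased'' from the cited reference—so your derivation simply supplies the standard proof that the reference contains, with no substantive difference in approach.
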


Noting that $({R^{\text{in}}}^* \otimes R^{\text{out}})$ is a valid representation as per Definition~\ref{def:representation}, we can combine Theorem~\ref{thm:comm} and Lemma~\ref{lem:choi-equivariance} to determine a parameter count for general equivariant channels.
\begin{theorem}[Free parameters in equivariant channels] Let the irrep decomposition of $R:=R^{\text{in}*} \otimes R^{\text{out}}$ be $R(g) \cong \bigoplus_{q} R_q(g) \otimes  \id_{m_q}$. Then any $(G,R^{\text{in}},R^{\text{out}})$-equivariant CPTP channels can be fully parametrized via $\sum_{q} m_q^2 - C(R^{\text{in}},R^{\text{out}})$ real scalars, where $C(R^{\text{in}},R^{\text{out}})$ is a positive constant that depends on the considered representations.
\label{thm:paramcountchannel}
\end{theorem}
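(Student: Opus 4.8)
The plan is to combine the Choi-operator characterization of equivariant channels (Lemma~\ref{lem:choi-equivariance}) with the commutant structure theorem (Theorem~\ref{thm:comm}), and then carefully subtract the degrees of freedom that are fixed by the complete-positivity and trace-preservation constraints. First I would invoke Lemma~\ref{lem:choi-equivariance}: a channel $\phi$ is $(G,R^{\text{in}},R^{\text{out}})$-equivariant iff $J^\phi\in\mf{comm}(R)$ with $R=R^{\text{in}*}\otimes R^{\text{out}}$. Applying Theorem~\ref{thm:comm} to this representation and its isotypic decomposition $R(g)\cong\bigoplus_q R_q(g)\otimes\id_{m_q}$, any such Choi operator takes the block-diagonal form $J^\phi\cong\bigoplus_q \id_{d_q}\otimes J_q$, where each $J_q$ is an $m_q\times m_q$ operator. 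Counting the real dimension of the space of \emph{Hermitian} operators of this form gives $\sum_q m_q^2$ real parameters (an $m_q\times m_q$ Hermitian block has $m_q^2$ real degrees of freedom, and this is automatically Hermitian-closed since $R$ is unitary). This establishes the ambient parameter space before imposing CPTP.

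Next I would impose the two physicality constraints. Complete positivity is the condition $J^\phi\succeq 0$; this is an inequality, not an equality, so it carves out a full-dimensional convex cone inside the Hermitian block-diagonal space and does \emph{not} reduce the parameter count — it only restricts to a subset of the same dimension. Trace preservation, $\operatorname{Tr}_{\text{out}}[J^\phi]=\id_{\dim(\HC^{\text{in}})}$, is a set of affine \emph{equality} constraints, and these are what reduce the count. The number of independent such constraints is exactly what I would call $C(R^{\text{in}},R^{\text{out}})$: it equals the real dimension of the space spanned by $\{\operatorname{Tr}_{\text{out}}[H]: H \text{ Hermitian, block-diagonal w.r.t. } R\}$ that must be pinned to equal $\id$. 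One should argue this is a genuine constant depending only on the two representations (not on $\thv$), and that it is strictly positive — e.g., the overall normalization $\operatorname{Tr}[J^\phi]=\dim(\HC^{\text{in}})$ alone already forces $C\geq 1$. I would make the dependence concrete by noting that $\operatorname{Tr}_{\text{out}}$ maps $\mf{comm}(R^{\text{in}*}\otimes R^{\text{out}})$ into $\mf{comm}(R^{\text{in}*})$ (partial trace over the output factor is equivariant), so $C(R^{\text{in}},R^{\text{out}})$ can be identified with the dimension of the image of this partial-trace map restricted to the Hermitian commutant — a purely representation-theoretic quantity.

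The main obstacle, and the step requiring the most care, is showing that the trace-preservation constraints are all simultaneously satisfiable on the positive cone (so that the count $\sum_q m_q^2 - C$ is actually achieved by genuine CPTP channels, not an over-count) and, conversely, that they are independent enough that no fewer than $C$ of them suffice. Satisfiability is not automatic: one must exhibit at least one equivariant CPTP channel — for instance the twirl of a fixed equivariant isometry, or a depolarizing-type channel onto an equivariant state — to witness that the affine slice meets the interior of the cone, guaranteeing the solution set has the full expected dimension $\sum_q m_q^2 - C$. The cleanest route is probably: (i) the equivariant Hermitian Choi operators form an affine space of dimension $\sum_q m_q^2$ once we fix $\operatorname{Tr}_{\text{out}}[J^\phi]=\id$, minus the rank $C$ of the linear map $H\mapsto\operatorname{Tr}_{\text{out}}[H]$ on that space; (ii) positivity then selects an open (relatively) full-dimensional subset, nonempty by the explicit witness; hence the CPTP equivariant channels form a set parametrized by exactly $\sum_q m_q^2 - C(R^{\text{in}},R^{\text{out}})$ real scalars. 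I would define $C(R^{\text{in}},R^{\text{out}}):=\operatorname{rank}\big(\operatorname{Tr}_{\text{out}}|_{\mathrm{Herm}\cap\,\mf{comm}(R)}\big)$ and remark it is positive and depends only on $R^{\text{in}},R^{\text{out}}$, completing the proof; a fully explicit formula for $C$ in terms of multiplicities of the trivial-type components can be relegated to an appendix.
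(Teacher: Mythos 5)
Your proposal follows essentially the same route as the paper's proof: invoke Lemma~\ref{lem:choi-equivariance} together with Theorem~\ref{thm:comm} to block-diagonalize the Choi operator, count $\sum_q m_q^2$ real parameters for the blocks, and identify $C(R^{\text{in}},R^{\text{out}})$ with the rank of the linear trace-preservation constraints on that block-parametrized space (the paper realizes this rank as that of a matrix $\widetilde{D}$ obtained by vectorizing $\Tr_B[J^\phi]=\id_A$, which coincides with your formulation via the partial trace restricted to the Hermitian commutant). Your extra step of exhibiting an interior witness (e.g.\ $\rho\mapsto\Tr[\rho]\,\id/\dim(\HC^{\text{out}})$, whose Choi operator is positive definite) to certify that the CPTP feasible set actually attains dimension $\sum_q m_q^2 - C$ is a point of rigor the paper leaves implicit, but it does not constitute a different approach.
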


We defer the proof to Appendix~\ref{app:proofs}. Intuitively, the parameter count of equivariant CP maps follows similarly to the proof of Theorem~\ref{thm:num-parameters} and the extra term $C(R^{\text{in}},R^{\text{out}})$ arises from imposing that the channel $\phi$ must be trace preserving (TP).

Similar to classical ML literature~\cite{cohen2016steerable}, the parameter count benefit of using equivariant layers can be assessed via the \emph{parameter utilization} metric
\begin{equation}
    \mu = \frac{\operatorname{dim} \operatorname{Hom}^{\text{CPTP}}(R^{\text{in}},R^{\text{out}})}{\operatorname{dim} \operatorname{Hom}_G^{\text{CPTP}}(R^{\text{in}},R^{\text{out}})},
\end{equation}
where $\operatorname{Hom}^{\text{CPTP}}(R^{\text{in}},R^{\text{out}})$ denotes the set of CPTP maps between $\BC^{\text{in}}$ and $\BC^{\text{out}}$ and $\operatorname{Hom}_G^{\text{CPTP}}(R^{\text{in}},R^{\text{out}})$ its $(G,R^{\text{in}},R^{\text{out}})$-equivariant subspace. Note that, $\operatorname{dim} \operatorname{Hom}^{\text{CPTP}}(R^{\text{in}},R^{\text{out}})=\lvert \mathcal{H}^\text{in}\rvert^2\lvert \mathcal{H}^\text{out}\rvert^2-\lvert \mathcal{H}^\text{in}\rvert^2$ \cite{wilde2013quantum}. That is, the larger $\mu$, the larger the benefit of using an EQNN is, in the sense that available parameters are used more effectively. For instance, by imposing $\mbb{SU}(2)$ equivariance on 2-to-2 qubit channels, one reduces the number of free parameters to less than or equal to 14 (see Section~\ref{sec:examples}), yielding a reduction of $\mu \geq 240/14 \approx 17$.

\subsection{Intermediate representations as hyperparameters}

Let us here discuss an aspect of EQNNs which has been purposely overlooked up to this point. Namely, while the  input representation $R^{\text{in}}$  is fixed by the action of the symmetry group on the input data, the intermediate and output representations acting on the spaces $\BC^l$ are not. This means there exists freedom in choosing a sequence of representations $(R^{\text{in}},R^1,\hdots, R^{\text{out}})$ under which the layers are equivariant. That is,
\begin{definition}[Layered EQNN] An $L$-layered $G$-equivariant QNN is defined by a sequence of $L+1$ representations of $G$, $(R^{\text{in}},R^1,\hdots, R^{\text{out}})$, and a sequence of $(G,R^{l},R^{l+1})$-equivariant layers.
\label{def:eqnn}
\end{definition}

The equivariance encoded in the $L$-layered EQNN of definition~\eqref{def:eqnn} can be visualized via the following commutative diagram
\begin{equation}
\begin{tikzcd}[row sep=large,column sep=huge]
\BC^{\text{in}_{}}\,\,\arrow[r, "\Adj_{R^{\text{in}}(g)}"]\arrow[d, "\NC^1_{\thv_1}"]& 
\,\,\BC^{\text{in}}_{}\arrow[d, "\NC^1_{\thv_1}" ] \\
\BC^{1}_{}\,\,\arrow[r, "\Adj_{R^{1}(g)}"]\arrow[d, "\NC^2_{\thv_2}"]&\,\, \BC^{1}_{}\arrow[d, "\NC^2_{\thv_2}" ]\\
\vdots\,\,\arrow[r, " "]\arrow[d, "\NC^L_{\thv_L}"]&\vdots\arrow[d, "\NC^L_{\thv_L}"]\\
\BC^{\text{out}}_{}\,\,\arrow[r, "\Adj_{R^{\text{out}}(g)}"]&\,\, \BC^{\text{out}}_{}
\end{tikzcd}\,.\nonumber
\end{equation}
The previous allows us to see that $\mathcal{N}=\NC^L_{\theta_L} \circ \cdots \circ \NC^1_{\theta_1}$ is an $(G,R^\text{in},R^\text{out})$-equivariant QNN. Evidently, if we follow such QNN with $(G,R^{\text{out}})$-equivariant measurements, we achieve a $G$-invariant model.

Note that, as previously discussed, a representation defines a Fourier space, meaning that 
it determines the space over which a layer of EQNN can act, or alternatively the information in the states that can be accessed. 
As such, one can use intermediate representations to change how the model accesses and processes information, which can fundamentally determine the success of the learning model. 

The most general way of fully specifying a representation is via the multiplicities of the irreps. Thus, the irrep multiplicities $m_\lambda^l$ of the intermediate representations can be understood as hyperparameters of the EQNN, similar to the number of channels in a conventional CNN. While in general there is no strict rules on what representations to use, here  we discuss strategies to choose the intermediate representations that are physically meaningful and ease calculations of equivariant layers.

First, one should choose representations that are \textit{natural} on quantum systems. For example, the unitary group $\mathbb{U}(2)$ admits a natural representation on $n$-qubits, where $\HC=(\mbb{C}^2)\tn$, as $R(U)=U^{\otimes n}$ and the cyclic group $\mathbb{Z}_n$ admits the representation the cyclic group $\mathbb{Z}_n$ admits the representation $R(g^t) \bigotimes_{j=1}^n \ket{\psi_j}=\bigotimes_{j=1}^n \ket{\psi_{j+t \mod n}}$, i.e., cyclic shifting the qubits. Second, the following proposition asserts that equivalent intermediate representations yield the same model expressibility~\cite{sim2019expressibility,holmes2021connecting}, and hence it suffices to consider inequivalent ones when designing EQNNs. We defer the proof to Appendix~\ref{app:proofs}.

\begin{proposition}[Insensitivity to  equivalent representations] Consider an EQNN as defined in Definition \ref{def:eqnn}. Then changing
an intermediate representation, $R^l$, to another representation equivalent to it, $VR^l V^{\dagger}$, where $V$ is a unitary, does not change the expressibility of the EQNN.
\label{prop:equivalentreps}
\end{proposition}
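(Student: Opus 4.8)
The plan is to show that replacing an intermediate representation $R^l$ by an equivalent representation $V R^l V^\dagger$ induces a bijection between the set of equivariant layers at positions $l-1$ and $l$ in the original EQNN and the corresponding set in the modified EQNN, and that this bijection preserves the overall input-output map $\mathcal{N}$. Concretely, recall from Definition~\ref{def:eqnn} that the layers $\mathcal{N}^l_{\thv_l}$ and $\mathcal{N}^{l+1}_{\thv_{l+1}}$ are required to be $(G,R^{l-1},R^l)$- and $(G,R^l,R^{l+1})$-equivariant respectively. The key observation is that conjugation by the unitary channel $\mathcal{V}(\cdot):=\Adj_V(\cdot)=V(\cdot)V^\dagger$ intertwines the two representations at the level of adjoint actions: $\mathcal{V}\circ\Adj_{R^l(g)} = \Adj_{VR^l(g)V^\dagger}\circ\mathcal{V}$ for all $g\in G$, which is immediate from $\Adj$ being a homomorphism.

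First I would define the map on layers. Given the original pair $(\mathcal{N}^l_{\thv_l},\mathcal{N}^{l+1}_{\thv_{l+1}})$, set $\widetilde{\mathcal{N}}^l := \mathcal{V}\circ\mathcal{N}^l_{\thv_l}$ and $\widetilde{\mathcal{N}}^{l+1} := \mathcal{N}^{l+1}_{\thv_{l+1}}\circ\mathcal{V}^{-1}$, where $\mathcal{V}^{-1}=\Adj_{V^\dagger}$. Then I would check: (i) $\widetilde{\mathcal{N}}^l$ is CPTP (composition of CPTP maps, since $\mathcal{V}$ is a unitary channel) and is $(G,R^{l-1},VR^lV^\dagger)$-equivariant — using equivariance of $\mathcal{N}^l_{\thv_l}$ and the intertwining identity above; (ii) similarly $\widetilde{\mathcal{N}}^{l+1}$ is CPTP and $(G,VR^lV^\dagger,R^{l+1})$-equivariant; (iii) the composition is unchanged, $\widetilde{\mathcal{N}}^{l+1}\circ\widetilde{\mathcal{N}}^l = \mathcal{N}^{l+1}_{\thv_{l+1}}\circ\mathcal{V}^{-1}\circ\mathcal{V}\circ\mathcal{N}^l_{\thv_l} = \mathcal{N}^{l+1}_{\thv_{l+1}}\circ\mathcal{N}^l_{\thv_l}$, and all other layers are untouched, so the full QNN $\mathcal{N}$ is identical. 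This shows every function realizable by the original EQNN is realizable by the modified one. Applying the same argument with $V\mapsto V^\dagger$ gives the reverse inclusion, hence the two families of realizable channels $\mathcal{N}$ coincide, and therefore the expressibility (understood as the set of achievable input-output maps, or any measure derived from it such as in~\cite{sim2019expressibility,holmes2021connecting}) is unchanged.

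One should also handle the boundary conventions: if $l=\text{in}$ the representation is fixed by the data and is not a free choice, so the statement concerns only genuinely intermediate $R^l$ with $1\leq l\leq L-1$ (and if $l=\text{out}$, one additionally absorbs $\mathcal{V}$ into a redefinition of the measurement operators $O_j\mapsto V^\dagger O_j V$, which stays in $\mf{comm}(VR^{\text{out}}V^\dagger)$ and leaves the expectation values invariant, so the invariant model is also unchanged). I would phrase the bijection argument once, for a single intermediate slot, and remark that it extends to simultaneous changes at several intermediate positions by composing.

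I do not anticipate a serious obstacle — the proof is essentially a bookkeeping exercise with the intertwiner $\mathcal{V}$. The one point requiring mild care is making precise what ``expressibility'' means here: the cleanest formulation is that the set of $(G,R^{\text{in}},R^{\text{out}})$-equivariant channels $\mathcal{N}$ expressible by the two EQNN architectures (ranging over all parameter values $\thv$) is exactly the same set, from which insensitivity of any reasonable expressibility measure follows; I would state this explicitly so the argument is self-contained.
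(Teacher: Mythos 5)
Your proof is correct and follows essentially the same route as the paper's: both establish the bijection between $(G,R^{l-1},R^l)$- and $(G,R^{l-1},VR^lV^\dagger)$-equivariant layers by composing with $\Adj_V$ (and $\Adj_{V^\dagger}$ on the next layer), observe that the overall composition $\NC^{l+1}\circ\NC^{l}$ is unchanged, and handle the output representation via the corresponding bijection of commuting observables. Your version is somewhat more explicit about CPTP-ness, the reverse inclusion, and the boundary cases, but the underlying argument is the same.
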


Finally, we note that the case of finite groups and regular representations (i.e., when the intermediate representations are chosen to be $R_{\text{reg}}:G\rightarrow \mbb{C}[G]$ corresponding to the group action on its own group algebra) has been studied in the classical literature under the name of \textit{homogeneous ENNs}~\cite{cohen2021equivariant}. In this case, any equivariant map is a group convolution \cite{kondor2018generalization}, which can be realized as a unitary operator embedding the classical convolution kernel by the quantum algorithms in \cite{castelazo2021quantum}.
Combining this with quantum algorithms for polynomial transformations of quantum states \cite{gilyen2019quantum, holmes2021nonlinear} allows one to quantize classical homogeneous ENNs. In other words, classical homogeneous ENNs can be implemented on a quantum computer as a special case of EQNNs.

\subsection{Field-guide to equivariant layers}
\label{sec:quantum-conv}

As previously discussed, intermediate representations can be considered as hyperparameters for the EQNN. In what follows we  define and characterize different types of equivariant layers arising from different intermediate representations.

\subsubsection{Standard, embedding and pooling}

We start by presenting a definition that categorizes equivariant layers based on the sizes of input and output representations.

\begin{definition}[Equivariant layers: standard, embedding and pooling] Let  $\NC_{\thv_l}^l:\BC^{l-1}\rightarrow \BC^{l}$ be an $(G,R^{l-1},R^{l})$-equivariant layer. We say that $\NC_{\thv_l}^l$ is a pooling layer if $\dim(\BC^{l})<\dim(\BC^{l-1})$,
an embedding layer if $\dim(\BC^{l})>\dim(\BC^{l-1})$, and
a standard layer if $\dim(\BC^{l})=\dim(\BC^{l-1})$.
\label{def:pooling-embedding}
\end{definition}

Definition~\ref{def:pooling-embedding} does not require the layer to be a quantum channel and is thus applicable beyond the context of quantum-to-quantum layers, e.g., in quantum algorithms with classical post-processing as we discuss in Appendix~\ref{sec:perspective}. For the special case of EQNNs mapping  from a Hilbert space of $n$ qubits to a Hilbert space of $m$ qubits, we say that $\NC_{\thv_l}^l$ is a pooling layer if $m<n$,
an embedding layer if $m>n$, and
a standard layer if $m=n$. 

Equivariant quantum circuits have been proposed and used in previous works \cite{verdon2019quantumgraph, mernyei2021equivariant, skolik2022equivariant, zheng2021speeding, meyer2022exploiting} mostly in the context of graph problems. However, we note that these fall into \emph{standard layers} and our framework provides more flexibility as the operations need not be unitary. An idea of \emph{pooling layers} was proposed in \cite{cong2019quantum}, although the pooling layer they used was not equivariant to the symmetry of the considered classification task (see Appendix~\ref{sec:perspective}). We will provide examples of pooling equivariant layers in later sections. While \emph{embedding layers} have been used to map classical data to quantum data, they are usually not equivariant~\cite{havlivcek2019supervised,thanasilp2021subtleties} (with a few notable recent exceptions~\cite{glick2021covariant,meyer2022exploiting}), meaning that all the symmetry properties of the classical data is lost during the encoding to quantum states. In addition, to our knowledge, embedding layers mapping quantum data to quantum data have not been formalized in QML prior to this work. Intuitively, embedding layers equivariantly embed the quantum data into a larger Hilbert space, allowing access to higher-dimensional irreps and perform non-linearities (discussed below). A prototypical general EQNN architecture using these equivariant layers inspired by classical literature \cite{bekkers2018roto, cohen2021equivariant} is illustrated in Fig.~\ref{fig:architecture}.

\begin{figure}[ht]
    \centering
    \includegraphics[width=0.8\columnwidth]{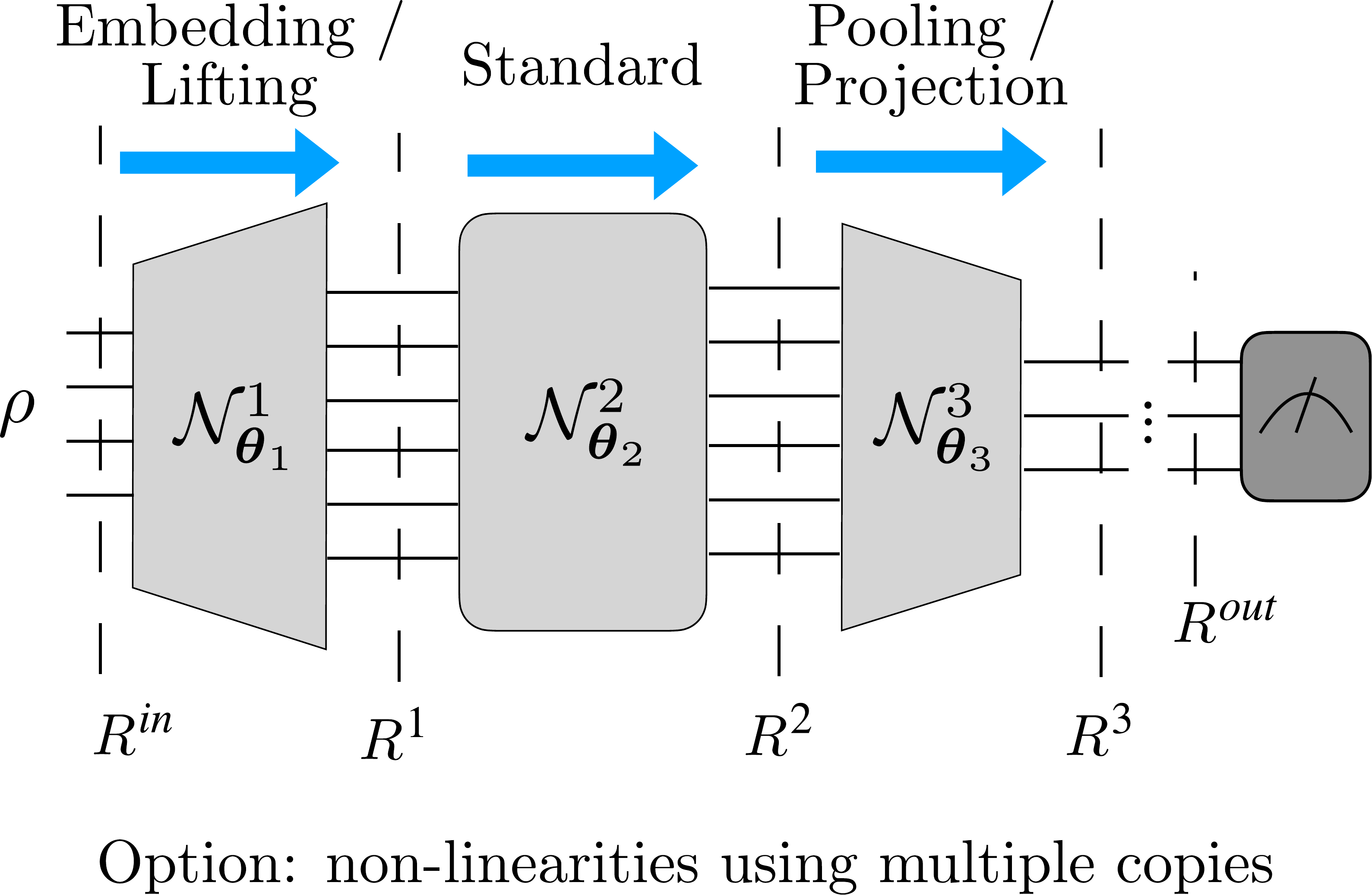}
    \caption{\textbf{Different types of equivariant layers in a general architecture of EQNNs.} A standard layer maps data between spaces of the same dimension. An embedding (pooling) layer maps the data to a higher-dimensional  (smaller-dimensional)  space. In a lifting layer, $\operatorname{ker}(R^{l-1}) > \operatorname{ker}(R^{l})$, while in a projection layer $\operatorname{ker}(R^{l-1}) < \operatorname{ker}(R^{l})$.   }
    \label{fig:architecture}
\end{figure}

\subsubsection{Projection and lifting}

Another common technique in classical geometric deep learning literature is to relax the symmetry constraints in the later layers, typically corresponding to greater-scale features, of the ENNs. This is achieved by \emph{projection layers} (also called reduction layers in some work \cite{cohen2021equivariant}), which go from representations $R^{\text{in}}$ to $R^{\text{out}}$ with $\operatorname{ker}(R^{\text{in}}) < \operatorname{ker}(R^{\text{out}})$. Recall that the kernel of a representation $R$ is defined as $\operatorname{ker}(R):=  \{ g \in  G | R(g) = \id \}$, so that $R$ is faithful if and only if $\operatorname{ker}(R)=\{e\}$. Similarly, one can also define \emph{lifting layers} where  $\operatorname{ker}(R^{\text{in}}) > \operatorname{ker}(R^{\text{out}})$. These lifting layers are used as the first layer in many homogeneous ENN architectures \cite{kondor2018generalization, cohen2019general, bekkers2018roto, finzi2020generalizing}, but their usefulness is not known in general non-homogeneous ENNs \cite{cohen2021equivariant}. Here we similarly define projection and lifting equivariant layers for EQNNs based on the kernels of the representations as follows.

\begin{definition}[Equivariant projection and lifting layers] A $(G,R^{l-1},R^{l})$-equivariant layer is defined as a projection layer if $\operatorname{ker}(R^{l-1}) < \operatorname{ker}(R^{l})$ and a lifting layer if $\operatorname{ker}(R^{l-1}) > \operatorname{ker}(R^{l})$.
\label{def:projandlift}
\end{definition}

Projections layers usually become necessary in pooling layers in the presence of outer symmetries  which exchange subsystems (see Definition~\ref{def:inner-outer}), such as $\mathbb{Z}_n$, $D_n$, or $S_n$ under qubit permuting representations. In contrast to \emph{inner symmetries} which act locally or globally as general unitaries, such as $\mathbb{U}(2)$ with $R(g)=g^{\otimes n}$, outer symmetries typically have no faithful representations when the number of qubits is reduced by a pooling layer. Hence, in this case it is convenient to use a non-faithful representation on the output Hilbert space of fewer qubits, i.e., a projection layer. We provide examples of projection layers in Section~\ref{sec:examples}.

Lifting layers instead can potentially be beneficial when the symmetry of the problem is unsubstantial and does not greatly reduce the number of free parameters in the model, leading to too expressive EQNNs with potential trainability issues. By lifting to a larger group, one can  further reduce the expressibility and potentially improve  trainability~\cite{holmes2021connecting,schatzki2022theoretical}. However, the actual benefit of lifting layers is not known.

Lastly, we note another interpretation of lifting and projection layers as follows. A non-faithful representation $R$ of $G$ with $\text{ker}(R)=H$ can be thought of as a faithful representation of the quotient group $G/H$. Then, lifting layers map from a faithful representation of a quotient group to that of a larger quotient group while projection layers have the opposite effect.

\subsubsection{Non-linearities}
Finally, it is a common practice in QML to assume repeated access to the dataset, which means that one can potentially access multiple copies of the input state $\rho$. The mapping of the form $\rho \rightarrow \rho^{\otimes k}$, which could be applied in the first or an intermediary layer of an EQNN, can thus serve as a \emph{non-linear equivariant embedding layer}, where $R^{\text{out}}=(R^{\text{in}})^{\otimes k}$.
\begin{definition}[Non-linear equivariant embedding layers] An order-$k$ equivariant non-linearity in EQNNs is defined as the composition of a map adding $k-1$ copies of the input state $\phi_{\text{nonlinear}}$: $\rho \rightarrow \rho^{\otimes k}$.
\label{def:nonlinear}
\end{definition}

From the Fourier space perspective, this operation is analogous to the widely used irrep tensor product non-linearity in classical ENNs. For instance, the Clebsch-Gordan decomposition, which computes the tensor product of $\mathbb{SO}(3)$ irreps, has been used in classical literature to achieve universal nonlinearity~\cite{cohen2018spherical, kondor2018clebsch, anderson2019cormorant, bogatskiy2020lorentz}. In the quantum setting, on the other hand, tensor product is performed \emph{naturally} by composing systems, giving opportunities for equivariant data processing on high-dimensional irreps. Indeed, the first step in the quantum-enhanced experiment model \cite{huang2021quantum} performs this non-linear equivariant layer. Doing so can drastically simplify non-linear learning tasks~\cite{huang2021quantum, larocca2022group} (see Appendix~\ref{sec:perspective}).

\section{Methods for constructing equivariant layers for EQNNs}\label{sec:framework}
In this section, we describe methods to construct and train layers in EQNNs. Our first step will be to identify, given a group and its in-and-out representations, the space of equivariant maps. For this purpose we present three distinct approaches based on finding the nullspace of a system of matrix equations, the twirling technique, and on the Choi operator. Once the space of equivariant maps is determined, we discuss how to parametrize and optimize over them. An overview of the results in this section can be found in Table~\ref{table:framework}.

\begin{table*}[]
\centering
\begin{tabular}{|c|c|c|c|c|c|c|c|c|}
        \hline
           & Generating  & Main  &  Time&Locality & CPTP& Kraus rank & Notes & Examples \\
          \textbf{Methods} & set size & technique &  complexity& controlled? & condition & controlled? & &(Sec.~\ref{sec:examples}) \\
         \hline
         \hline
        Nullspace &   Small & Lin. algebra &  Gaussian elim. &Yes & TP easy & Yes & Find all & $\mbb{SU}(2)$
        \\
         &  & based &$\OC(2^{6(m+n)})$  &  & &  &
        equiv. maps & \\
         \hline
         Twirling &   Any & Analytic/&  Depends, mainly & Often & Yes & Nontrivial & One channel &  $\mathbb{Z}_2 \times \mathbb{Z}_2$
         \\
  &  & In-circuit  & Haar-integral  & nontrivial & &  &  at a time & $Z_n$
        \\
        \hline
         Choi  & Any & Irrep   & Depends, mainly&  Nontrivial &  CP easy & Yes & Find all  & $\mbb{SU}(2)$
         \\
         operator &  & decomp.  & irrep decomp.&  & &   &  equiv. maps & \\
         \hline
\end{tabular}
\caption{\textbf{Overview of the different methods for finding equivariant channels}. \textit{Generating set size}: denotes that size of the generating set for which the method is better suited. \textit{Main technique:} indicates the tools used to create the equivariant channels.  Nullspace uses a linear-algebraic approach to impose equivariance on the generating set of the group or its algebra. Twirling uses the twirl formula defined in Eq.~\eqref{eq:twirl-finite} or Eq.~\eqref{eq:twirl-lie}. Choi operator block-parameterizes the Choi operator via an irrep decomposition. \textbf{Time complexity:} denotes the computational complexity of the method. The time complexity of Gaussian elimination is $O(d^3)$ where $d$ is the size of the linear system. Assuming the generating set has size $O(1)$, then the linear system obtained in the nullspace method is of size $2^{2(m+n)}$, where $n$ and $m$ are the number of qubits at the input and output of the map. For twirling and Choi operator, the time complexity is dominated by the Haar-integral and irrep decomposition, respectively. In the case of twirling, it can be computed analytically, approximately, or implemented in-circuit depending on the problem at hand. \textbf{Locality:} determines whether we can control the locality of the operations we need to implement.  \textbf{CPTP:} indicates whether the output channel is CPTP, and how hard it is to impose this condition on the output maps. In nullspace, it is trivial to impose TP on the solution but imposing CP is more involved. Twirling guarantees CPTP as long as the  channel we twirl is CPTP. In Choi operator, imposing CP is straightforward, but TP might be more involved due to the dimension mismatch introduced by the irrep decomposition. \textbf{Kraus rank:} indicates whether we can control the Kraus rank of the channel. \textbf{Notes}: denotes whether we can find a basis for the equivariant map vector space, or if we find one map at a time. In the table, decomp. is short for decomposition, elim. for elimination, lin. for linear, and equiv. for equivariant.}
\label{table:framework}
\end{table*}

\subsection{Simplifying the task of finding equivariant maps}
As per Definition~\ref{def:equiv}, a linear map $\phi$ is equivariant if it satisfies the superoperator equation
\begin{align}
    \phi \circ \Adj_{R^{\text{in}}(g)} - \Adj_{R^{\text{out}}(g)} \circ \phi=0, \ \forall g\in G.
    \label{eq:equivariant-maps}
\end{align}
The set of all such maps forms a vector space, and therefore to characterize them all it suffices to find a basis of this space. While naively it would seem that one needs to solve Eq.~\eqref{eq:equivariant-maps} for every $g\in G$, we will now see that it is usually enough to solve this equation only over a \textit{well-chosen} subset of elements of the group (or of its Lie algebra for Lie groups of symmetries).

\subsubsection{Finite groups}

We first consider the case when $G$ is a finite group. Here, we recall the concept of a \textit{generating set}. A subset $S=\{g_1, \ldots, g_{|S|}\}\subset G$ is a generating set of $G$  if any element of the group can be written as a product of elements in the generating set. Denoting $\langle S \rangle $ the closure of $S$, i.e., the repeated composition of its elements, we say that $S$ generates $G$ if $\langle S \rangle = G$. For example, the symmetric group $S_n$ can be generated by the set of transpositions. It is a well-known fact in group theory that a finite group can be generated with a subset $S$ of, at most, size $\log_2(\lvert G \rvert)$~\cite{fulton1991representation}. Thus, even exponentially large groups can be handled efficiently through their generating set. In particular, we can simplify the task of finding the equivariant maps via the following theorem.
\begin{theorem}[Finite group equivariance]
    Given a finite group $G$ with generating set $S$, a linear map $\phi$ is $(G, R^{\text{in}}, R^{\text{out}})$-equivariant if and only if
    \begin{align}\label{eq_finite_grp_eq}
       \phi \circ \Adj_{R^{\text{in}}(g)} - \Adj_{R^{\text{out}}(g)} \circ \phi=0, \quad \forall g \in S.
    \end{align}
    \label{thm:gen-set}
\end{theorem}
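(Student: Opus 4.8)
The plan is to prove both directions of the biconditional, with the forward direction being essentially trivial and the reverse direction being the substantive content. Throughout, I will work with the map $g \mapsto \Psi(g) := \Adj_{R^{\text{out}}(g)} \circ \phi \circ \Adj_{R^{\text{in}}(g)^{-1}}$, which sends $G$ into the space of linear maps from $\BC^{\text{in}}$ to $\BC^{\text{out}}$. The key observation is that equivariance of $\phi$ (in the sense of Eq.~\eqref{eq:equivariant-maps}) is equivalent to the statement that $\Psi(g) = \phi$ for all $g \in G$, since $\Adj_{R^{\text{in}}(g)}$ is invertible for every $g$.

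For the forward direction, if $\phi$ is $(G,R^{\text{in}},R^{\text{out}})$-equivariant then Eq.~\eqref{eq:equivariant-maps} holds for all $g \in G$, in particular for all $g \in S \subseteq G$, which is exactly Eq.~\eqref{eq_finite_grp_eq}. For the reverse direction, suppose Eq.~\eqref{eq_finite_grp_eq} holds, i.e., $\Psi(g) = \phi$ for every $g \in S$. First I would record the multiplicativity property of $\Psi$: using that $\Adj$ is a homomorphism into linear operators on the respective operator spaces (inherited from $R^{\text{in}}$, $R^{\text{out}}$ being homomorphisms, as noted after Definition~\ref{def:representation}), one computes
\begin{align}
\Psi(g_1 g_2) &= \Adj_{R^{\text{out}}(g_1 g_2)} \circ \phi \circ \Adj_{R^{\text{in}}(g_1 g_2)^{-1}} \nonumber\\
&= \Adj_{R^{\text{out}}(g_1)} \circ \Psi(g_2) \circ \Adj_{R^{\text{in}}(g_1)^{-1}}.
\end{align}
Hence if $\Psi(g_2) = \phi$ then $\Psi(g_1 g_2) = \Psi(g_1)$, and in particular if both $\Psi(g_1) = \phi$ and $\Psi(g_2) = \phi$ then $\Psi(g_1 g_2) = \phi$. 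One also checks $\Psi(e) = \phi$ trivially and $\Psi(g^{-1}) = \Adj_{R^{\text{out}}(g)^{-1}} \circ \Psi(g)^{\text{-conjugated}} \circ \Adj_{R^{\text{in}}(g)}$, which shows $\Psi(g) = \phi \implies \Psi(g^{-1}) = \phi$.

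The conclusion then follows by induction on word length. Since $S$ is a generating set, every $g \in G$ can be written as a finite product $g = s_1^{\pm 1} s_2^{\pm 1} \cdots s_k^{\pm 1}$ with each $s_i \in S$. By the closure argument above, $\Psi(s_i^{\pm 1}) = \phi$ for each factor, and then repeated application of the implication ``$\Psi(g_1) = \phi$ and $\Psi(g_2) = \phi \implies \Psi(g_1 g_2) = \phi$'' gives $\Psi(g) = \phi$, i.e., Eq.~\eqref{eq:equivariant-maps} holds for this $g$. Since $g$ was arbitrary, $\phi$ is $(G, R^{\text{in}}, R^{\text{out}})$-equivariant. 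I do not anticipate a serious obstacle here; the only mild care needed is to handle inverses of generators (so that one genuinely generates all of $G$ and not just the submonoid generated by $S$), but for a finite group this is automatic since $\langle S\rangle = G$ as a group already forces $S \cup S^{-1}$ to generate, and in fact $s^{-1} = s^{\text{ord}(s)-1}$ is itself a positive power of $s$, so inverses can even be avoided entirely. The argument is purely formal manipulation of the homomorphism property together with the definition of a generating set.
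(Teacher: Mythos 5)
Your proof is correct and takes essentially the same route as the paper's: write an arbitrary group element as a word in the generators and propagate the condition $\Adj_{R^{\text{out}}(h)}\circ\phi\circ\Adj_{R^{\text{in}}(h)}^{\dagger}=\phi$ along the word, using the homomorphism property of $g\mapsto\Adj_{R(g)}$. Your explicit treatment of inverses (and the remark that for a finite group they are redundant, since $s^{-1}=s^{\operatorname{ord}(s)-1}$) is a slightly more careful version of the paper's argument, which implicitly uses only positive words in the generators.
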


\subsubsection{Lie groups}

While Theorem~\ref{thm:gen-set} is useful when the group $G$ is finitely generated, many relevant groups, such as the Lie group $\mbb{U}(d)$, are not. However, we can consider generating sets, but now at the Lie algebra level. Ref.~\cite{finzi2021practical} provides a method for imposing equivariance under Lie groups, where the equivariance constraint is imposed over a basis of the Lie algebra. Evidently, this becomes impractical in the case of large Lie groups, since the method scales linearly on its dimension. Instead, as we prove below, it suffices to impose the constraint only over a generating set. That is, we can consider $s=\{a_1,\ldots, a_{|s|}\}\subset \mf{g}$ a generating set for $\mf{g}$ if its Lie closure $\langle s \rangle_{\rm{Lie}}$, the repeated nested commutators of the elements of the set, spans the whole Lie algebra.  With this concepts at hand, we are ready to impose equivariance at the algebra level.

\begin{theorem}[Lie group equivariance]\label{thm:equiv_algebra}
    Given a compact Lie group $G$ with a Lie algebra $\mathfrak{g}$ generated by $s$ such that exponentiation is surjective, a linear map $\phi$ is $(G, R^{\text{in}}, R^{\text{out}})$-equivariant if and only if 
    \begin{align}
        \adj_{r^{\text{out}}(a)} \circ \phi - \phi \circ \adj_{r^{\text{in}}(a)}=0, \quad \forall a \in s,
    \end{align}
    where $r^{\text{in}},r^{\text{out}}$ are the representations of $G$ induced by $R^{\text{in}}, R^{\text{out}}$.
    \label{thm:lie-alg}
\end{theorem}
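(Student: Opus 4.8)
\textbf{Proof proposal for Theorem~\ref{thm:lie-alg} (Lie group equivariance).}

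The plan is to prove a chain of equivalences connecting the three statements: (i) $\phi$ is $(G,R^{\text{in}},R^{\text{out}})$-equivariant, i.e.\ $\Adj_{R^{\text{out}}(g)}\circ\phi=\phi\circ\Adj_{R^{\text{in}}(g)}$ for all $g\in G$; (ii) $\adj_{r^{\text{out}}(a)}\circ\phi=\phi\circ\adj_{r^{\text{in}}(a)}$ for all $a\in\mf{g}$; and (iii) the same identity holds only for $a$ in the generating set $s$. The implication (i)$\Rightarrow$(ii) is the easy direction: fix $a\in\mf{g}$, substitute $g=e^{ta}$ into the equivariance identity, so that $\Adj_{R^{\text{out}}(e^{ta})}\circ\phi=\phi\circ\Adj_{R^{\text{in}}(e^{ta})}$, differentiate both sides with respect to $t$ at $t=0$, and use that $\tfrac{d}{dt}\big|_{t=0}\Adj_{R(e^{ta})}(\cdot)=[r(a),\cdot]=\adj_{r(a)}(\cdot)$, which follows from $R(e^{ta})=e^{t\,r(a)}$. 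For (ii)$\Rightarrow$(i) I would integrate back: since exponentiation is surjective (by compactness/connectedness hypothesis), every $g\in G$ is $e^{a}$ for some $a\in\mf{g}$, and the fixed-$a$ family of maps $F(t):=\Adj_{R^{\text{out}}(e^{ta})}\circ\phi-\phi\circ\Adj_{R^{\text{in}}(e^{ta})}$ satisfies a linear ODE $\tfrac{d}{dt}F(t)=\adj_{r^{\text{out}}(a)}\circ F(t)$ (using (ii) to move the generator through $\phi$), with $F(0)=0$, hence $F(t)\equiv0$ and in particular $F(1)=0$. Alternatively one can invoke the standard fact that $\Adj_{R^{\text{out}}}$ is the integrated form of $\adj_{r^{\text{out}}}$ and that a map intertwining two Lie-algebra representations intertwines the corresponding connected Lie-group representations.

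The remaining and more interesting step is (iii)$\Leftrightarrow$(ii), i.e.\ that it suffices to check the intertwining condition on a generating set $s$ rather than all of $\mf{g}$. The key observation is bilinearity of the commutator together with the Jacobi-type identity: if $\phi$ intertwines $\adj_{r^{\text{in}}(a)}$ with $\adj_{r^{\text{out}}(a)}$ and similarly for $b$, then it does so for $[a,b]$ as well. Concretely, define the ``defect'' superoperator $D(a):=\adj_{r^{\text{out}}(a)}\circ\phi-\phi\circ\adj_{r^{\text{in}}(a)}$, which is linear in $a$. A short computation using that $r^{\text{in}},r^{\text{out}}$ are Lie-algebra homomorphisms ($r([a,b])=[r(a),r(b)]$, hence $\adj_{r([a,b])}=[\adj_{r(a)},\adj_{r(b)}]$) shows that $D([a,b]) = \adj_{r^{\text{out}}(a)}\circ D(b) - D(b)\circ\adj_{r^{\text{in}}(a)} - \big(\adj_{r^{\text{out}}(b)}\circ D(a) - D(a)\circ\adj_{r^{\text{in}}(b)}\big)$, so that $D(a)=D(b)=0$ forces $D([a,b])=0$. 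By induction on the depth of nested commutators, $D$ vanishes on the Lie closure $\langle s\rangle_{\rm Lie}$, and by linearity of $D$ it vanishes on the span, which by hypothesis is all of $\mf{g}$. The converse (ii)$\Rightarrow$(iii) is trivial since $s\subset\mf{g}$.

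The main obstacle I anticipate is being careful with the integration step (ii)$\Rightarrow$(i): one must ensure the ODE argument is valid on the relevant (possibly infinite-dimensional, but here finite-dimensional) operator space and that surjectivity of $\exp$ is genuinely used — this is why the hypothesis ``exponentiation is surjective'' appears, handling connected compact $G$; without it one would only recover equivariance on the identity component. A clean way to sidestep delicate analysis is to note that $\phi$ lives in a finite-dimensional vector space of superoperators, so $t\mapsto\Adj_{R(e^{ta})}$ is a genuine matrix exponential and all manipulations (differentiation, the ODE uniqueness theorem) are elementary. Everything else — bilinearity of $D$, the commutator identity, the induction over $\langle s\rangle_{\rm Lie}$ — is routine algebra that parallels the proof of Theorem~\ref{thm:gen-set} for finite groups, with ``products of generators'' replaced by ``nested commutators of Lie-algebra generators.''
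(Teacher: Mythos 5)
Your proposal is correct and follows essentially the same route as the paper's proof: differentiate the group-level identity to get the algebra-level condition, propagate it from the generating set to all nested commutators (your defect identity for $D([a,b])$ is just a repackaging of the paper's direct two-step intertwining computation) and then to all of $\mathfrak{g}$ by linearity, and finally exponentiate back using surjectivity of $\exp$. Your ODE/uniqueness argument for the integration step is merely a more explicit justification of the paper's one-line assertion that $e^{\adj_{r^{\text{out}}(a)}}\circ\phi=\phi\circ e^{\adj_{r^{\text{in}}(a)}}$, so the two proofs coincide in substance.
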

We note that the assumption of surjectivity of the exponential map can be relaxed with incorporation of additional constraints. This relaxation together with pertinent examples and proofs of the theorems are given in Appendix~\ref{app:proofs}.

\subsection{Nullspace, twirling and Choi operator}

With the previous in mind, in this section we present three different techniques that can be used to determine equivariant channels.

\subsubsection{Nullspace method}
In the \textit{nullspace} method, we formulate the equivariance constraints as a linear system of matrix equations, one per element in the generating set, and then solve for their joint nullspace. This yields a basis for the vector space of equivariant linear maps (not necessarily quantum channels). For  the rest of this section we assume we have a finite group and a set of generators at the group level. The case of Lie groups follows analogously by working at the level of the Lie algebra.

Our method generalizes those in~\cite{zeier2011symmetry,finzi2021practical} and proceeds as follows. The first step is to represent the superoperators in Eq.~\eqref{eq_finite_grp_eq} as matrices, sometimes referred to as transfer matrices. This can be achieved through the following map $\phi \mapsto \overline{\phi} = \sum_{i,j} \phi_{i,j} \ket{P_i}\rangle \langle \bra{P_j}$, where $P_j$ and $P_i$ are Pauli operators in the input and output Hilbert spaces, respectively~\cite{wood2011tensor}. Here, $\overline{\phi}$ is a $\dim(\BC^{\text{out}})\times\dim(\BC^{\text{in}})$ matrix. The latter transforms Eq.~\eqref{eq_finite_grp_eq} into a matrix multiplication equation of the form
\begin{align}\label{eq_finite_grp_matrix_rep}
\overline{\phi} \cdot \overline{\Adj}_{R^{\text{in}}(g)} -\overline{ \Adj}_{R^{\text{out}}(g)} \cdot \overline{\phi}=0, \quad \forall g \in S.
\end{align}
Next, we will perform a vectorization~\cite{horn1991topics}, which maps a matrix into a column vector, and allows us to write Eq.~\eqref{eq_finite_grp_matrix_rep} as 
\begin{align}
    M_g \cdot \text{vec}(\overline{\phi}) = 0\,.
\end{align}
Here, $\text{vec}(\overline{\phi})$ is a $\dim(\BC^{\text{in}})\dim(\BC^{\text{out}})$-dimensional column vector and 
\small
\begin{equation}\label{eq_Mg}
    M_g= (\overline{\Adj}_{R^\text{in}(g)})^\top \otimes \id_{\dim(\BC^{\text{out}})}-\id_{\dim(\BC^{\text{in}})}\otimes\overline{\Adj}_{R^\text{out}(g)}\,,
\end{equation}
\normalsize
is a $\dim(\BC^{\text{in}})\dim(\BC^{\text{out}})\times \dim(\BC^{\text{in}})\dim(\BC^{\text{out}})$ matrix. With the previous, we can obtain equivariant maps by computing the intersection of the nullspaces of each $M_g$, i.e.,
\begin{equation}\label{eq:equiv-maps-null}
    {\rm vec}(\overline{\phi}) \in \bigcap_{g\in S} {\rm Null}(M_g)\,.
\end{equation}
In Fig.~\ref{fig:nullspace} we present an example of the nullspace method.

\begin{figure}
    \centering
    \includegraphics[width=\columnwidth]{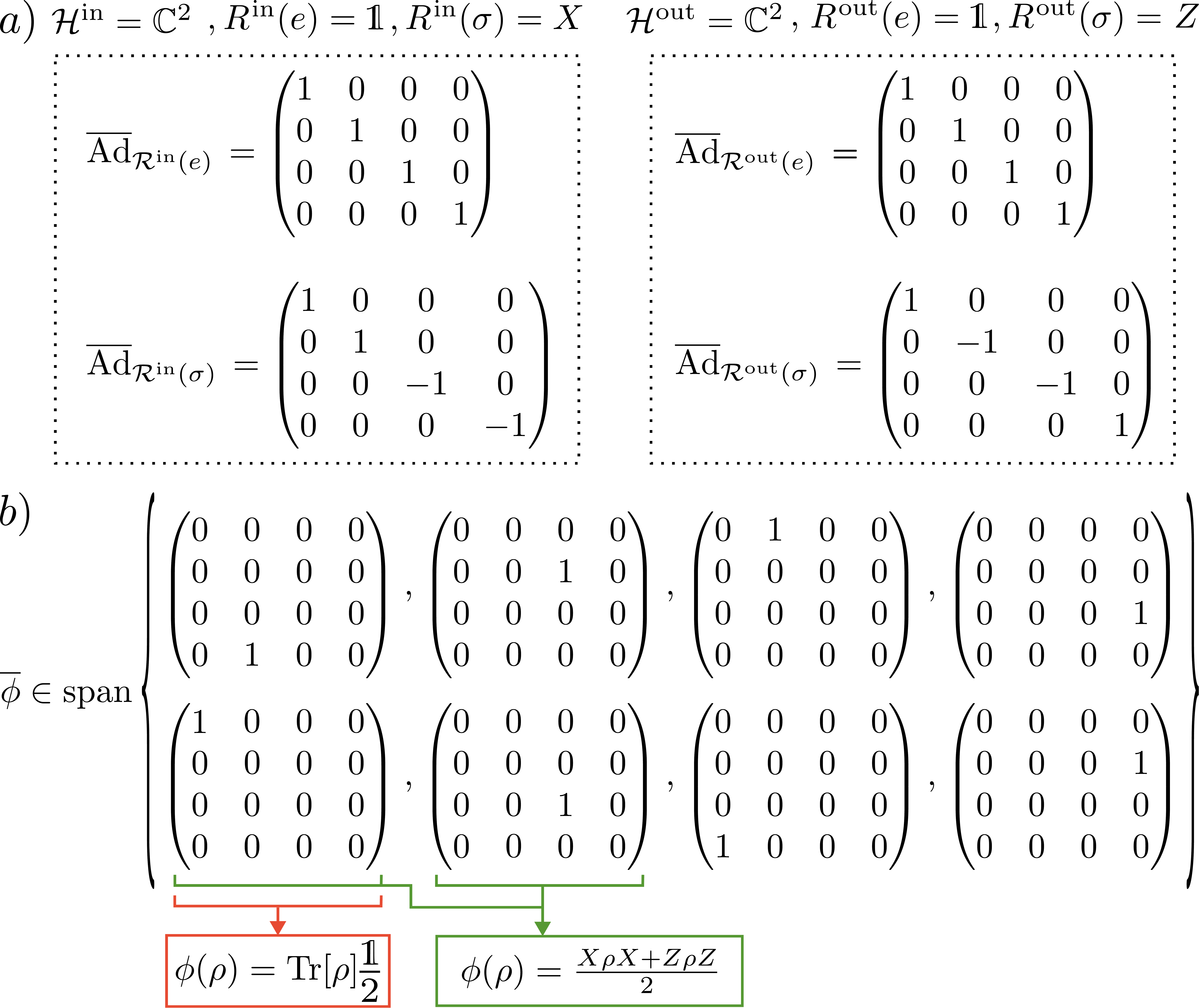}    \caption{\textbf{Example of the nullspace method.} We demonstrate how to use the nullspace method to determine the space of 1-to-1-qubit ($G$,$R^{\text{in}}$,$R^{\text{out}}$)-equivariant quantum channels, with $G=\mbb{Z}_2=\{e,\sg\}$, $R^{\text{in}}=\{\id,X\}$ and $R^{\text{out}}=\{\id,Z\}$. a) The matrix representation of both in and out \textit{adjoint} representations of the symmetry group. b) A basis for the 8-dimensional solution space, as well as two possible equivariant channels: $\phi(\rho)=\Tr[\rho]/2$ obtained from the solution in red, and  $\phi(\rho)=(X\rho X+Z\rho Z)/2$ obtained by combining the two solutions in green. }
    \label{fig:nullspace}
\end{figure}

Let us make here several important remarks about the nullspace method. First, it is clear that this technique can rapidly become computationally expensive. For example, finding equivariant channels mapping from $n$ qubits to $m$  qubits by  solving the nullspaces through Gaussian elimination~\cite{trefethen1997numerical} has a complexity of $\OC(2^{6(m+n)})$. Second, let us note that the solutions of Eq.~\eqref{eq:equiv-maps-null} will lead to a  basis for \textit{all} equivariant linear maps, and therefore additional steps would be required to find the subset of physically realizable operations (see Sec.~\ref{sec:parameterize}). For instance, we can obtain  trace-preserving maps (TP) by noting that $\phi$ is TP if and only if $\overline{\phi}$ contains the term $\frac{\dim(\mathcal{H}^\text{in}) }{\dim( \mathcal{H}^\text{out})} \ket{\id_{\dim(\mathcal{H}^\text{out})}}\rangle \langle\bra{\id_{\dim(\mathcal{H}^\text{in})}}$ and no other terms mapping to $\ket{\id_{\dim(\mathcal{H}^\text{out})}} \rangle$.

In practice, we can significantly reduce the computational complexity of this method by restricting the set of Pauli operators we need to consider in the input and output spaces. This is particularly useful for inner symmetries, where the action of the group can be locally studied. For example, consider the following lemma:
\begin{lemma}[Global equivariance via local equivariance]\label{lemma:local}
Let $\BC^{\text{in (out)}}$ be composite input (output)  spaces of  the form $\BC^{\text{in (out)}}=\bigotimes_j \BC_j^{\text{in (out)}}$. Then,  assume that the representations acting on each of these space  takes a tensor product structure over subsystems as $R^{\text{in (out)}}(g) = \bigotimes_{j} R_{j}^{\text{in (out)}}(g)$. For local equivariant channels mapping between each pair of in-and-out subsystems $\phi_j: \BC^{\text{in}}_j \rightarrow  \BC^{\text{out}}_j$ that are $(G,R_j^{\text{in}}, R_j^{\text{out}})$-equivariant, we have that $\bigotimes_{j} \phi_j$ is $(G,R^\text{in},R^\text{out})$-equivariant.
\end{lemma}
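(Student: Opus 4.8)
The plan is to prove this by a direct computation showing that the tensor product map $\bigotimes_j \phi_j$ intertwines the input and output adjoint representations. The key observation is that all the relevant structures factorize across subsystems: the input space $\BC^{\text{in}} = \bigotimes_j \BC_j^{\text{in}}$, the output space $\BC^{\text{out}} = \bigotimes_j \BC_j^{\text{out}}$, the representations $R^{\text{in}}(g) = \bigotimes_j R_j^{\text{in}}(g)$ and $R^{\text{out}}(g) = \bigotimes_j R_j^{\text{out}}(g)$, and hence the adjoint actions themselves: $\Adj_{R^{\text{in}}(g)} = \bigotimes_j \Adj_{R_j^{\text{in}}(g)}$ (viewing superoperators on a tensor-product operator space as a tensor product of superoperators on the factors), and similarly for $R^{\text{out}}$. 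This last identity follows because $R^{\text{in}}(g)\,\rho\,R^{\text{in}}(g)^\dagger = \big(\bigotimes_j R_j^{\text{in}}(g)\big)\,\rho\,\big(\bigotimes_j R_j^{\text{in}}(g)^\dagger\big)$ acts on a product operator $\bigotimes_j \sigma_j$ as $\bigotimes_j R_j^{\text{in}}(g)\sigma_j R_j^{\text{in}}(g)^\dagger$, and by linearity this determines the whole superoperator.

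First I would set up notation, fixing $g\in G$ and noting that it suffices to check the equivariance condition of Definition~\ref{def:equiv} on a spanning set of $\BC^{\text{in}}$, namely the product operators $\sigma = \bigotimes_j \sigma_j$ with $\sigma_j \in \BC_j^{\text{in}}$, since all maps involved are linear. Then I would compute both sides of $\big(\bigotimes_j \phi_j\big)\circ \Adj_{R^{\text{in}}(g)}$ and $\Adj_{R^{\text{out}}(g)} \circ \big(\bigotimes_j \phi_j\big)$ applied to such a $\sigma$. On the first, $\Adj_{R^{\text{in}}(g)}(\sigma) = \bigotimes_j \big(R_j^{\text{in}}(g)\sigma_j R_j^{\text{in}}(g)^\dagger\big) = \bigotimes_j \Adj_{R_j^{\text{in}}(g)}(\sigma_j)$, and applying $\bigotimes_j \phi_j$ gives $\bigotimes_j \phi_j\big(\Adj_{R_j^{\text{in}}(g)}(\sigma_j)\big)$. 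Using the per-subsystem equivariance hypothesis $\phi_j \circ \Adj_{R_j^{\text{in}}(g)} = \Adj_{R_j^{\text{out}}(g)} \circ \phi_j$, this equals $\bigotimes_j \Adj_{R_j^{\text{out}}(g)}\big(\phi_j(\sigma_j)\big)$. On the other side, $\big(\bigotimes_j \phi_j\big)(\sigma) = \bigotimes_j \phi_j(\sigma_j)$, and applying $\Adj_{R^{\text{out}}(g)} = \bigotimes_j \Adj_{R_j^{\text{out}}(g)}$ yields $\bigotimes_j \Adj_{R_j^{\text{out}}(g)}\big(\phi_j(\sigma_j)\big)$ as well. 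The two sides agree for every product operator and every $g$, so by linearity $\bigotimes_j \phi_j$ is $(G,R^{\text{in}},R^{\text{out}})$-equivariant.

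This proof is essentially routine once the factorization of the adjoint representation over tensor factors is made explicit, so I do not expect a genuine obstacle; the only point requiring a little care is justifying the identity $\Adj_{\bigotimes_j A_j} = \bigotimes_j \Adj_{A_j}$ at the level of superoperators, which is where the argument really rests. I would state this as a short preliminary observation (it is immediate on product operators and extends by linearity, using that product operators span the operator space on a tensor product of Hilbert spaces), and then the rest is the bookkeeping above. If one wants to additionally note that $\bigotimes_j \phi_j$ is CPTP when each $\phi_j$ is, that follows from the standard fact that tensor products of CPTP maps are CPTP, but strictly speaking the lemma as stated only asserts equivariance, so I would include that remark only in passing.
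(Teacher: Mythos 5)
Your proof is correct; note that the paper states this lemma without any proof at all (it is treated as immediate, followed only by the remark that local equivariant maps can be tensored into a global equivariant layer), so there is no competing argument to compare against. Your computation — establishing the superoperator factorization $\Adj_{\bigotimes_j A_j}=\bigotimes_j \Adj_{A_j}$ on product operators, extending by linearity since product operators span $\BC^{\text{in}}$, and then inserting the per-subsystem equivariance — is exactly the routine verification the paper omits, and it is complete as written.
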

\noindent Thus, we can find equivariant maps locally and take tensor products of them to obtain a global equivariant layer. While such approach can greatly reduce the computational cost (for instance, solving for  2-to-1 qubit maps requires dealing with $64\times 64$ matrices), this will come at the cost of expressibility, as the composition of local equivariant channels may have a restricted action when compared to a general equivariant global channel~\cite{marvian2022restrictions}.  

In the case of outer symmetries such as $G=S_n$, and $R^{\text{in}}(g)=R^{\text{out}}(g)=R_{\text{qub}}(g)$ (as defined in Table.~\ref{tab:free_params}), we can use a generating set $S$ including only local transpositions (i.e., involving only two-body operators). Thus, if we want to obtain maps $\phi$ containing only one- and two-body terms we only need to consider the sub-block of $M_g$ corresponding to one- and two-body Pauli operators, reducing its size from exponential to only polynomial.

\subsubsection{Twirling method}
We now explain a second method for finding equivariant maps based on  \textit{twirling}. This approach was first proposed in~\cite{meyer2022exploiting} to determine equivariant unitary channels. We here extend this framework to general non-unitary quantum channels with (possibly) different representations in the input and output spaces of $\phi$. 

Starting with a given channel $\phi:\BC^{\text{in}}\rightarrow \BC^{\text{out}}$, we define its twirl over a finite symmetry group $G$ as
\begin{align}
    \mathcal{T}_G[\phi] = \frac{1}{|G|} \sum_{g \in G} \Adj_{R^{\text{out}}(g)} \circ \phi \circ \Adj_{R^{\text{in}}(g)}^{\ad}\,.
    \label{eq:twirl-finite}
\end{align}
For the case of Lie groups we replace the summation with an integral over the Haar measure
\begin{align}\label{eq:twirl-lie}
    \mathcal{T}_G[\phi] =  \int_G d\mu(g) \Adj_{R^{\text{out}}(g)} \circ \phi \circ \Adj_{R^{\text{in}}(g)}^{\ad}.
\end{align}
From the invariance of the Haar measure $d\mu$, it is clear that for all $\phi$, $\TC_G[\phi]$ is $(G,R^{\text{in}},R^{\text{out}})$-equivariant. Furthermore, $\mathcal{T}_G[\phi] = \phi$ for all equivariant $\phi$. Combining these observations one can see that $\mathcal{T}_G$ is the projection onto the space of equivariant maps. This realization allows us to write any channel $\phi$ as 
\begin{equation}
    \phi=\mathcal{T}_G[\phi]+\phi_A\,,
\end{equation}
where $\phi_A$ is the ``anti-symmetric'' part of $\phi$, i.e., the part satisfying $\mathcal{T}_G[\phi_A]=0$. As such, any measure of the form $\norm{\phi_A}$ quantifies how symmetric $\phi$ is.

\begin{figure}
    \centering
    \includegraphics[width=.8\columnwidth]{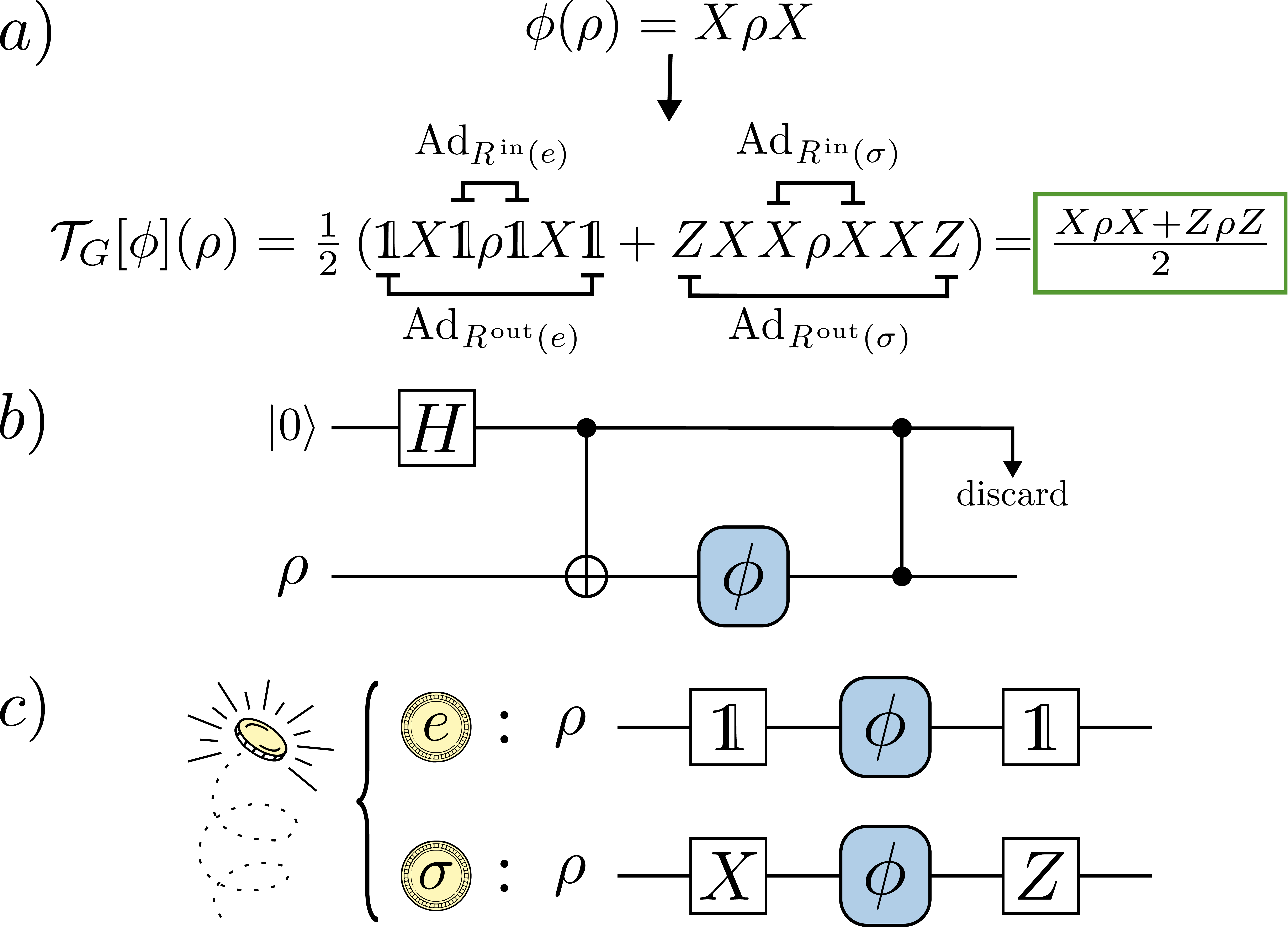}
    \caption{\textbf{Example of the twirling method.} We demonstrate how to use the twirling method to determine the space of 1-to-1 qubit ($G$,$R^{\text{in}}$,$R^{\text{out}}$)-equivariant quantum channels, with $G=\mbb{Z}_2=\{e,\sg\}$, $R^{\text{in}}=\{\id,X\}$ and $R^{\text{out}}=\{\id,Z\}$. a) Explicit calculation using the twirling formula of Eq.~\eqref{eq:twirl-finite}.  b) Ancilla-based scheme for in-circuit twirling. c) Classical-randomness scheme for in-circuit twirling. Both schemes in b) and c), detailed in Appendix~\ref{app:advanced_twirling}, recover the twirling in a). }\label{fig:twirling}
\end{figure}

On the practical side, twirling is easy for small groups, as one can efficiently evaluate  Eq.~\eqref{eq:twirl-finite} (see Fig.~\ref{fig:twirling} for an example). However, for large finite groups or for Lie groups a direct computation of the twirling becomes cumbersome, requiring the use of more advanced techniques. In Appendix~\ref{app:advanced_twirling} we discuss different approaches to implement Eqs.~\eqref{eq:twirl-finite} and~\eqref{eq:twirl-lie}. These range from analytical methods based on the \textit{Weignarten calculus}~\cite{collins2006integration,puchala2017symbolic} (which requires knowledge of the commutant of the representations) to experimental schemes such as \textit{in-circuit} twirling, and \textit{approximate} twirling approaches~\cite{toth2007efficient}. In particular, we present two approaches for in-circuit twirling based on either the use of ancilla qubits, or classical randomness.  Both of these are exemplified in Fig.~\ref{fig:twirling}.

For completeness, let us compare the twirling method to the nullspace approach. First, we note that one of the main advantages of twirling is that, unlike in the nullspace method, we are guaranteed that the twirl of a CPTP channel is also CPTP. However, while the nullspace method allows us to find all equivariant maps, twirling is performed one map  at a time, meaning that finding a complete basis for the equivariant map vector space could be more intricate (although still possible, as we will show in Section~\ref{sec:examples}).  As such, if one wants a single equivariant channel, twirling is strongly recommended.

\subsubsection{Choi operator method}

Here we present a third method for finding equivariant channels. We recall from Eq.~\eqref{eq:choi_operator} that the Choi operator of a channel $\phi$ is given by $J^{\phi} = \sum_{i,j}\ket{i}\bra{j}\otimes \phi(\ket{i}\bra{j})$. Then, as indicated by Lemma~\ref{lem:choi-equivariance}, the channel will be equivariant if $J^{\phi} \in \mathfrak{comm}({R^{\text{in}}}^* \otimes R^{\text{out}})$ (where $*$ denotes complex conjugate and $R^*$ the so-called \textit{dual} representation of $R$), or alternatively if for all $g\in G$
\begin{equation}\label{eq:cho_expanded}
    J^{\phi}({R^{\text{in}}(g)}^* \otimes R^{\text{out}}(g))-({R^{\text{in}}(g)}^* \otimes R^{\text{out}}(g)) J^{\phi}=0\,.
\end{equation}
While we can vectorize Eq.~\eqref{eq:cho_expanded} and obtain equivariant maps by solving for the nullspace of the matrices $({R^{\text{in}}(g)}^* \otimes R^{\text{out}}(g))^\top \otimes \id_{\dim(\BC^{\text{out}})}-\id_{\dim(\BC^{\text{in}})}\otimes({R^{\text{in}}(g)}^* \otimes R^{\text{out}}(g))$ , this would not be significantly different from the nullspace method previously presented. 

Instead, here we focus on a different technique based on the fact that, since  $R^{\text{in}*} \otimes R^{\text{out}}=R$ is a valid representation of $G$, then it has some irrep decomposition  $R(g) \cong \bigoplus_{q} R_q(g) \otimes  \id_{m_q}$ (see Theorem~\ref{thm:paramcountchannel}) and the Choi operator of any equivariant map takes the form
\begin{align}
     J^\phi \cong \bigoplus_q\id_{d_q}  \otimes  J^\phi_q \,.
    \label{eq:choi-decompose}
\end{align}
Equation~\eqref{eq:choi-decompose} allows us to build equivariant maps by controlling precisely how the associated Choi operator acts on each irrep component of the quantum states. In Fig.~\ref{fig:choi} we exemplify this method.
\begin{figure}[t]
    \centering
    \includegraphics[width=\columnwidth]{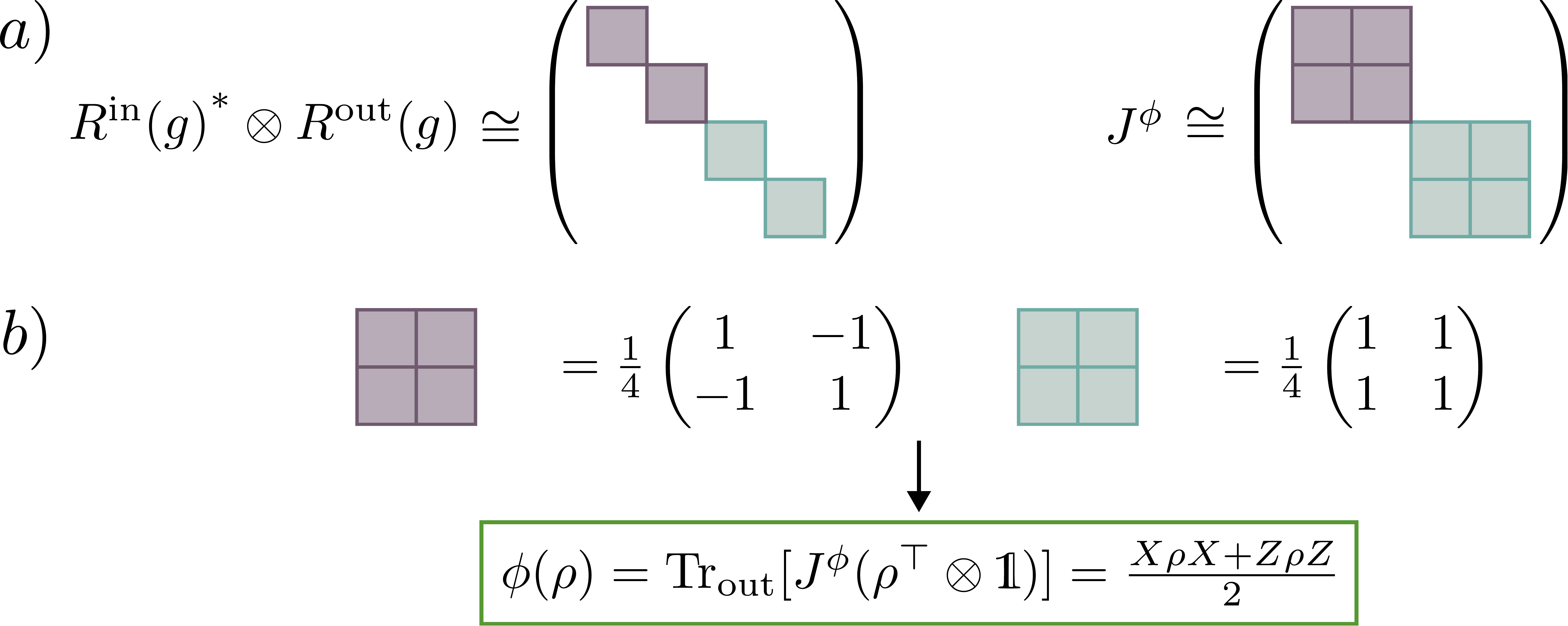}
    \caption{\textbf{Example of the Choi operator method.} We demonstrate how to use the Choi operator method to determine the space of 1-to-1 qubit ($G$,$R^{\text{in}}$,$R^{\text{out}}$)-equivariant quantum channels, with $G=\mbb{Z}_2=\{e,\sg\}$, $R^{\text{in}}=\{\id,X\}$ and $R^{\text{out}}=\{\id,Z\}$. a) Isotypic decomposition of the group representation . b) We show that a specific choice for the block-diagonal components of $J^\phi$ leads to the map $\phi(\rho)=(X\rho X+Z\rho Z)/2$.}
    \label{fig:choi}
\end{figure}

Just like in the nullspace method, this approach produces general equivariant linear maps, and hence, additional constraints need to be imposed to find the subset of physical channels. For instance, we can impose TP by requiring that $ \Tr_{\text{in}}[J^{\phi}] = \id_{\dim(\mathcal{H}^\text{out})}$, where $\Tr_{\text{in}}$ indicates the partial trace over $\mathcal{H}^\text{in}$. Then, we know that $\phi$ will be completely positive (CP) if and only if $ J^{\phi} \geq 0$. The last condition implies that $J^\phi$ can be further expressed as~\cite{dadriano2004extremal}
\begin{align}
    J^\phi \cong \bigoplus_q\id_{d_q}  \otimes  w_q\ad w_q\,,
    \label{eq:choi-decompose2}
\end{align}
where $w_k \in \mathbb{C}^{m_q \times m_q}$. Moreover, the TP conditions leads to $\sum_q \Tr_{\text{in}}[ \id_{d_q}\otimes w_q^\dagger w_q ] = \id_{\dim(\mathcal{H}^\text{out})}$. Thus, given the irrep decomposition of $R$ one can construct a basis of CP maps in the block-diagonal form of Eq.~\eqref{eq:choi-decompose2} and impose the trace-preserving condition afterwards (note however that, taking the partial trace is now more involved due to the subspace mismatch introduced by the isomorphism in Eq.~\eqref{eq:choi-decompose}).
Finally, we remark that one could even go a step further and consider conditions for $\phi$ to be an \emph{extremal} equivariant CPTP channel.\footnote{The set of equivariant CPTP channels forms a convex hull. Channels on the boundary of this  hull are said to be extremal and fully characterize the convex hull. However, this set of extremal channels may not be a finite set.} Conditions for such, however, are much more involved \cite{dadriano2004extremal}.

Let us here remark that the main limitation of the Choi operator approach is that identifying the isomorphism in the irrep decomposition of ${R^{\text{in}}}^* \otimes R^{\text{out}}$ can be in general challenging. For common compact Lie groups, these decompositions are conveniently implemented in a variety of software packages~\cite{bosma1997magma, van1992lie}. That being said, this method is best suited for local channels since the size of the Choi operator scales as $\dim(\HC^{\text{out}})\dim(\HC^{\text{in}})$. Thus, if $\dim(\HC^{\text{out}})\dim(\HC^{\text{in}})$ is not prohibitively large, one can solve for the change of basis of the isotypic decomposition and identify maps with specific irrep actions. 

Lastly, implementing the nullspace method requires representing channels as matrices. (This could be done as well for twirling.) To check that these maps are actually channels, i.e. CPTP, we may want to convert from a matrix to the Choi operator, for which CPTP are readily verified. We discuss how to perform this conversion in Appendix~\ref{app:transfer_to_choi}.

\subsection{Parametrizing the layers of an EQNN}\label{sec:parameterize}
In GQML we are not only interested in finding equivariant maps, but we also want to parametrize and optimize over them. In this section we show how one can parametrize the layers of an EQNN. For simplicity, we first consider the case of unitary channels, and then study the case of general maps.  An overview of the methods proposed in this section can be found in Fig.~\ref{fig:optimizing}.
\begin{figure}
    \centering
    \includegraphics[width=\columnwidth]{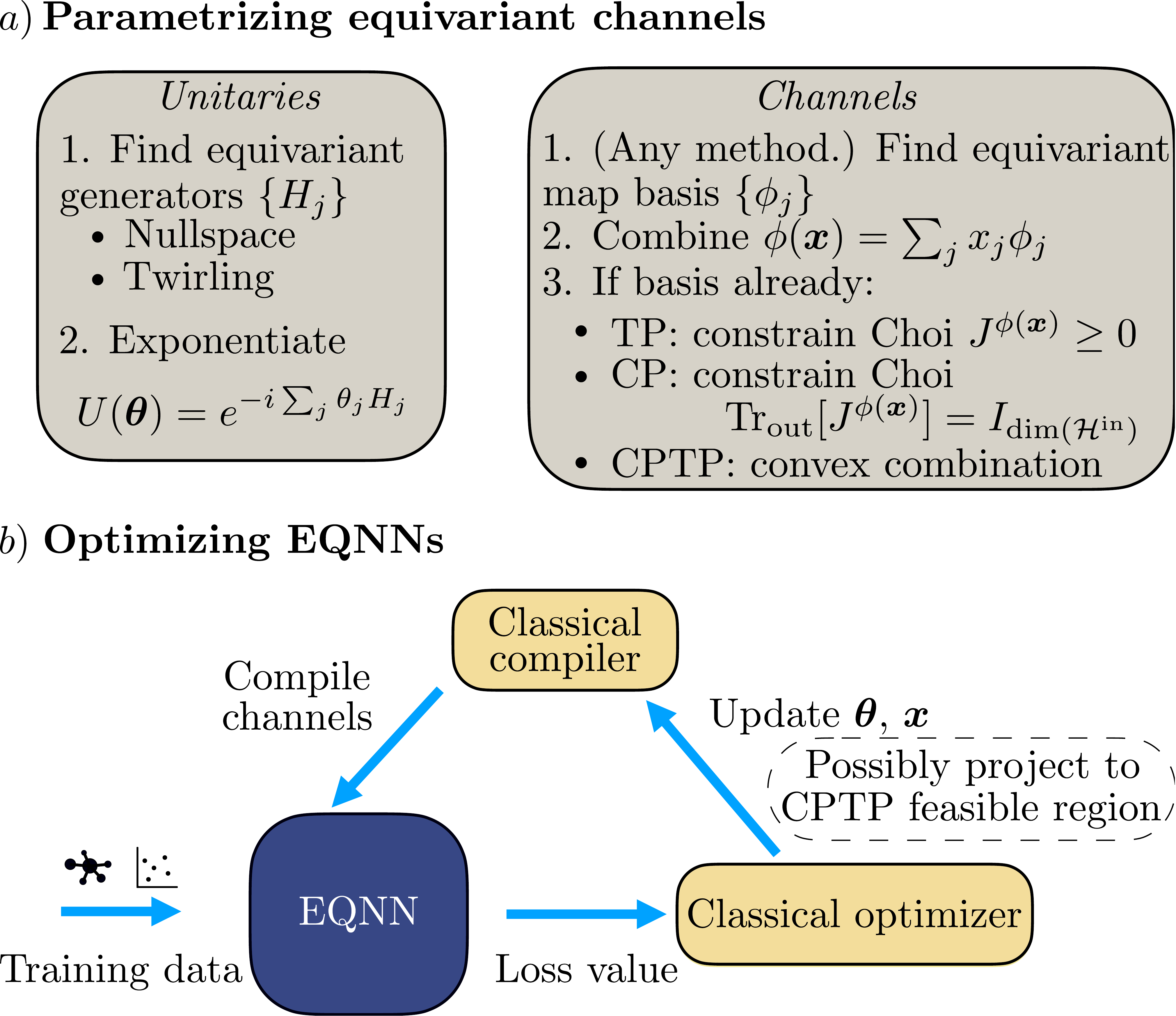}
    \caption{\textbf{Procedure to parametrize and optimize equivariant quantum neural networks.} a) We provide techniques to parametrize both equivariant unitaries  and general equivariant channels. b) Once we have a parametrized EQNN, we can proceed to train it. Here, we feed the training data into the EQNN, whose outputs are used to compute the loss function. Leveraging  a classical optimizer we find updates for the parameters in the EQNN. In the case of channels, if might be necessary to project  the updated map to the feasible CPTP region. Note that we can use classical compilers to transform a linear combination of channels into an sequence of gates that we can implement on a quantum device (Appendix~\ref{app:compile}). The procedure is repeated until convergence is achieved.}
    \label{fig:optimizing}
\end{figure}

\subsubsection{Parametrizing equivariant unitaries}

Let us here consider the case of a unitary EQNN layer with the same input and output representations. That is, $\HC^{\text{in}}=\HC^{\text{out}}$, $R^{\text{in}}=R^{\text{out}}=R$ and $\NC_{\thv_l}^l(\rho)=U_l(\theta_l)\rho U_l(\theta_l)\ad$. While this case has been considered in~\cite{larocca2022group,meyer2022exploiting,sauvage2022building} we will here review it for completeness. 

The simplest way to parametrize a unitary is by expressing it as the exponential of some Hermitian operator usually known as a \textit{generator}, i.e., $U_l(\theta_l)=e^{-i \theta_l H_l}$, where  $\theta_l$ is a trainable parameter. Evidently, we can obtain $(G,R)$-equivariant unitaries by taking equivariant generators, i.e., $H_l\in\mf{comm}(R)$. Note that we can find equivariant generators via the nullspace or twirling approaches previously detailed. While these methods were presented for superoperators, they can be straightforwardly adapted to the case of operators. 
Alternatively, one could use the Choi operator approach and require the solution to be rank-$1$ (recall that the rank of the Choi operator is the Krauss rank of the associated channel, with unitaries being Kraus-rank-1 channels).

\subsubsection{Parametrizing equivariant channels}

Here we describe how to parametrize and optimize over equivariant channels. We assume that a basis of equivariant maps (or at least a subset of this basis) has been found via the nullspace or Choi operator method. 
As mentioned before, while it is relatively easy to find equivariant maps, these need not be physical channels, as they may  not be TP, CP, or neither. However, one can still parametrize a set of non-CPTP equivariant maps and optimize over them by appropriately constraining the parameters such that the final map is CPTP. 

For instance, when using the Choi operator method, we know that the CP condition is satisfied if $J^{{\phi}} \geq 0$. Hence, one could start with some basis of trace-preserving and trace-annihilating equivariant maps $\{J^{\phi_j}\}$, linearly combine them as $J(\vec{x})=\sum_j x_j J^{\phi_j}$, and optimize the set of real parameters $\vec{x}=\{x_j\}$ under the constraint that the eigenvalues of $J(\vec{x})$ are non-negative.  The latter will  yield a  region of feasible equivariant quantum channels (see Section~\ref{sec:examples} for an example). Note, that during the  optimization of $\vec{x}$ the update rule might take us outside of the equivariant region, in which case one needs to project back  to the feasible space. We further discuss how such projection can be performed in Appendix~\ref{app:compile}. Finally, we note that while it might not be directly obvious how to implement the ensuing channel, one can always transform it into an implementable sequence of gates, acting on a potentially larger space,  via compilation techniques~\cite{bharti2020quantum,chong2017programming,haner2018software,sharma2019noise}. Here we also remark that in many cases, particularly when the maps act on large-dimensional spaces, finding  the eigenvalues of $J(\vec{x})$ might may be quite difficult. For these scenarios, one can simply optimize over  a subset of equivariant channels (i.e., maps that are already CPTP)  which can be found via twirling. Here we are guaranteed that any convex combination of equivariant channels will lie in the feasible region since CPTP channels form a convex set~\cite{dadriano2004extremal}. 

An alternative approach to constructing equivariant channels is via the Stinespring dilation picture~\cite{wilde2013quantum}. In this case we use the fact that any channel can be written as a unitary operation on a larger space, i.e., 
\begin{equation}
    \phi(\rho) = \Tr_E[U(\rho\otimes \dya{e})U^\dagger],
\end{equation}
where $\ket{e}\in\HC^E$ is a fixed reference state on an environment Hilbert space $\HC^E$, and where $\Tr_E$ denotes the trace over $\HC^E$.  If $ U (R^\text{in}(g)\otimes \id_{\dim(\HC^E)}) = (R^\text{out}(g)\otimes R^{(E)}(g))U,\ \forall g \in G$, then $\phi$ is a $(G,R^{\text{in}},R^{\text{out}})$-equivariant channel. Here we can use any of the tools previously discussed to find and parametrize $U$. This approach has the advantage that by fixing the dimension of the environment, we can look for channels of small Kraus rank which are easier to implement in practice.

\begin{figure}
    \centering
    \includegraphics[width=.9\columnwidth]{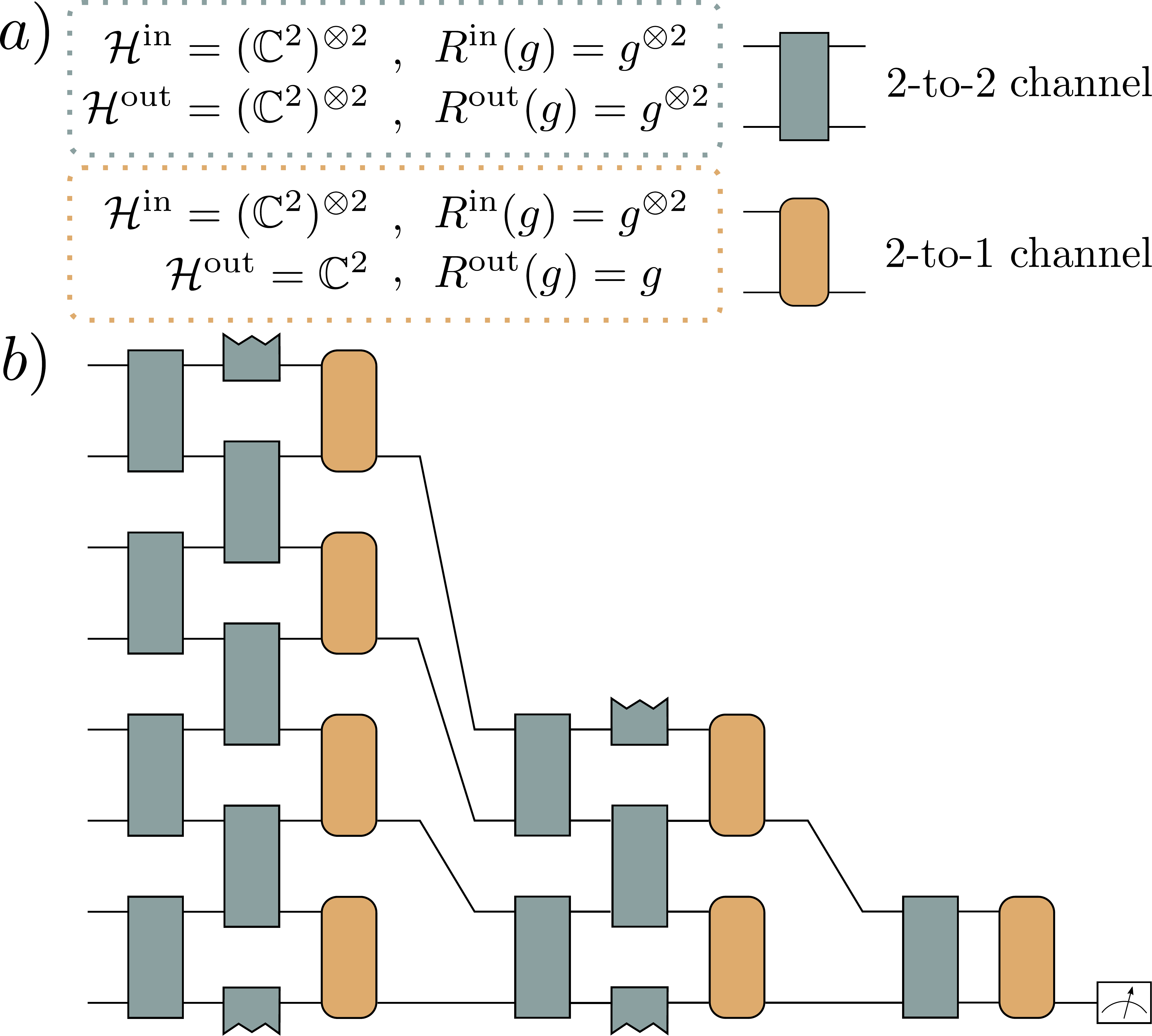}
    \caption{\textbf{$\mbb{SU}(2)$-equivariant QCNN.} a) We consider the problem of building  2-to-2 standard equivariant channels  and 2-to-1 equivariant pooling channels. In the figure we present the respective input and output Hilbert spaces, as well as the input and output representation. b) In an $\mbb{SU}(2)$-equivariant QCNN, we alternate between  2-to-2 channels acting on neighbouring qubits and 2-to-1 equivariant pooling channels which reduce the feature space dimension. }
    \label{fig:qcnn-main}
\end{figure}

\section{Applications}\label{sec:examples}
In this section we exemplify the applicability of the methods presented in the previous sections to design EQNNs.

\subsection{$\mbb{SU}(2)$-equivariant QCNN}\label{sec:su2_qcnn}
As a first practical application of the present framework we propose to generalize standard QCNNs~\cite{cong2019quantum} to group-equivariant QCNNs. We start by recalling that QCNN have been successfully implemented for error correction, quantum phase detection~\cite{cong2019quantum, maccormack2020branching}, image recognition~\cite{franken2020explorations}, and entanglement detection~\cite{schatzki2021entangled}. QCNNs exhibit several key features that make them promising architecture for the near-term, such as having a shallow depth or not exhibit barren plateaus~\cite{pesah2020absence}. Despite these advantages, standard QCNNs need not respect the symmetries of a given task. In what follows, we will show how one can design equivariant layers for QCNNs, thus promoting them to group-equivariant QCNNs.

We consider problems where the symmetry group is $\mbb{SU}(2)$. This symmetry appears in tasks where the data arises from certain spin chain models~\cite{parkinson2010introduction,elben2020many,huang2021provably} and in tasks related to entanglement measures~\cite{horodecki2009quantum,beckey2021computable,schatzki2022hierarchy}.  Moreover, we consider the case where the data correspond to $n$-qubit quantum states, and where the input representation of $G = \mbb{SU}(2)$ is $R^{\text{in}}(g) = R_{\text{tens}}(g)= g^{\otimes n}$. For ease of implementation on quantum hardware we restrict ourselves to channels with locality constraints (see Lemma~\ref{lemma:local}). That is, as illustrated in Fig.~\ref{fig:qcnn-main} we want to build an equivariant QCNN that composed of alternating layers of 2-to-2 standard equivariant channels acting on neighbouring qubits and 2-to-1 equivariant pooling channels (for completeness the 1-to-2 qubit equivariant embedding maps are presented in Appendix~\ref{app:su2}). Of course, this choice of architecture trades locality at the cost of expressibility, as there may be more general equivariant channels on $n$ qubits. However, the success of models with locality constraints~\cite{cong2019quantum,pesah2020absence} suggests this may be an interesting regime regardless.

\subsubsection{2-to-2 layers via Choi operator}

Let us commence by studying 2-to-2-qubit maps via the Choi operator method. Since the input and output representations are $R(g) = g^{\otimes 2}$, the Choi operator must commute with the representation $(g^*)^{\otimes 2}\otimes g^{\otimes 2}$ (see Eq.~\eqref{eq:cho_expanded}). As $\mbb{SU}(2)^*$ shares the same irrep structure as  $\mbb{SU}(2)$, we can find that the Choi operator for any completely positive $\mbb{SU}(2)$-equivariant map takes the form
\begin{equation}
    J^\phi = (\id_{5}\otimes A) \oplus (\id_{3}\otimes B) \oplus C,
    \label{eq:2-to-2}
\end{equation}
 where $A$ is a non-negative scalar and $B$ and $C$ are complex positive semidefinite matrices of dimensions 3 and 2, respectively. Thus, the space of such CP  equivariant maps is $1^2+2^2+3^2=14$ dimensional.

In the special case of 2-to-2 equivariant \textit{unitary} layers, where $\NC_{\thv_l}^l:(\mathbb{C}^2)^{\otimes 2}\rightarrow (\mathbb{C}^2)^{\otimes 2}$ and $\NC_{\thv_l}^l(\rho)=U_l(\theta_l)\rho U_l(\theta_l)\ad$, we know that if $U_l(\theta_l)=e^{-i \theta_l H_l}$, it suffices to use equivariant generators, that is, such that $[H_l,g^{\otimes 2}]=0$ for all $g\in \mbb{SU}(2)$. 
Here we can use the Schur-Weyl duality~\cite{goodman2009symmetry,larocca2022group}, which states that the only possible equivariant operators are $\id$ and ${\rm SWAP}$, which correspond to the two elements of the qubit-permutational representation of $S_2$. Without loss of generality, we can choose $H_l={\rm SWAP}$ so that $U_l(\theta_l)=e^{-i\theta_l {\rm SWAP}}$. Following Lemma~\ref{lemma:local}, we know that if we compose these two-qubit equivariant unitaries as in Fig.~\ref{fig:qcnn-main}, the result will be an $n$-qubit equivariant unitary.

\subsubsection{2-to-1 layers via Nullspace}

Next, let us focus on finding 2-to-1-qubit channels using the nullspace approach. Since $\mbb{SU}(2)$ is a Lie group, we will work at the level of the generators of its Lie algebra, $\mf{su}(2)=\spn \{X,Y,Z\}$. Given the    representations $g^{\otimes 2}$ and $g$, the associated basis representations of the algebra are $\{\id \otimes X+X\otimes\id, \id \otimes Y+Y\otimes \id, \id \otimes Z+Z\otimes \id\}$ and $\{X,Y,Z\}$. Thus, one needs to simultaneous solve for the nullspace of the following matrices
\begin{equation}
\begin{aligned}
     M_X &=  \overline{\text{ad}_{\rm{IX+XI}}}^\top  \otimes \id_2 -\id_4\otimes  \overline{\text{ad}_X}\,,\\
     M_Y &=  \overline{\text{ad}_{\rm{IY+YI}}}^\top\otimes \id_3 - \id_4\otimes\overline{\text{ad}_Y}\,,\\
     M_Z &= \overline{\text{ad}_{\rm{IZ+ZI}}}^\top \otimes \id_3 - \id_4\otimes\overline{\text{ad}_Z}\,.
\end{aligned}
\end{equation}

Solving, we find five superoperators that form a basis for 2-to-1 qubit $(\mbb{SU}(2), g^{\otimes 2}, g)$-equivariant maps. These are
\begin{align}
    \phi_1(\rho) & = \Tr[\rho]\frac{\id}{2},\quad
    \phi_2(\rho) = \Tr[\rho\SWAP]\frac{\id}{2},\label{eq:su2-equiv-maps}\\
    \phi_3(\rho) & = \Tr_A[\rho],\quad 
    \phi_4(\rho) = \Tr_B[\rho],\notag\\
    \phi_5(\rho) & = \sum_{ijk=1}^3\Tr[\rho \sigma_i\sigma_j]\epsilon_{ijk}\sigma_k\notag.
\end{align}

\begin{figure}
    \centering
    \includegraphics[width=.7\columnwidth]{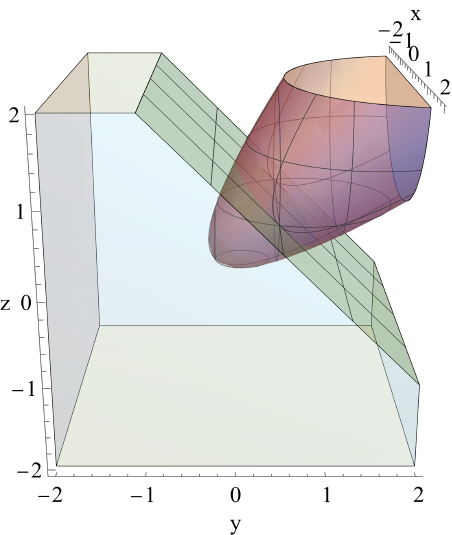}
    \caption{\textbf{Region of parameter space leading to CPTP channels.} Using the nullspace method we can find a basis for all 2-to-1 $(\mbb{SU}(2), g^{\otimes 2},g)$-equivariant pooling maps. These can then be linearly combined to form a general parametrized equivariant map as in Eq.~\eqref{eq:linear-comb}, and we find in~\eqref{eq:feasible} the region  in parameters space leading to CPTP channels. Here we depict said region as the volume of the hyperbole (red) below the plane (green). }
    \label{fig:CPTP-region}
\end{figure}

Notably, in addition to expected $\mbb{SU}(2)$-equivariant maps such as trace, partial traces, and ${\rm SWAP}$-measurement, we identify a potentially interesting new equivariant map $\phi_5(\rho)$ that is dubbed the \textit{cross-product map} (we further study its properties in Appendix~\ref{app:su2}). Note that $\phi_1, \phi_3, \text{ and } \phi_4$ are trace preserving while $\phi_5$ is trace-annihilating. One can also verify that $\phi_2$ may non-trivially alter trace. As the only map that may do so, we can drop it from our basis set for being non-physical. To continue and find the set of equivariant quantum channels, we first make a modification to our basis set. In the Pauli basis we have  $\phi_1 \leftrightarrow 2 \ket{\id}\rangle\langle\bra{\id,\id}$, and it is easy to see that both $\phi_3$ and $\phi_4$ also contain this term. Thus, we can remove it, leaving trace-annihilating versions of partial trace, which we will denote by $\phi_3'$ and $\phi_4'$. Thus, any TP map must take the form
\begin{align}\label{eq:linear-comb}
   \phi(x,y,z)=\phi_1 + x \phi_5 + y \phi_3' + z \phi_4',
\end{align}
where the coefficients are real numbers. It remains to find the region such that this channel is CP. This can be done via the Choi operators of these channels. That is, we would like to find 
\begin{align}
    \{x,y,z \in \mathbb{R}: J^{\phi_1}+xJ^{\phi_5}+yJ^{\phi_3'}+zJ^{\phi_4'} \geq 0\}.
\end{align}
Note that the coefficients here must be real numbers for the Choi operator of the sum to be positive (as the Choi operators in the sum are linearly independent). Requiring the eigenvalues of this linear combination to be non-negative yields the feasible region
\begin{align}\label{eq:feasible}
\begin{split}
     x,y,z: y + z \leq 1\,,\quad \text{ and }\\ y + z \geq \sqrt{3x^2 + 4(y^2 - yz+z^2)}-1\,.
     \end{split}
\end{align}
This region is illustrated in Fig.~\ref{fig:CPTP-region}.  Here we note that as the set of equivariant channels is convex, this feasible parameter region is a convex subset of $\mathbb{R}^3$.

A crucial aspect to note is that when training the $\mbb{SU}(2)$-equivariant QCNN, one can directly train over the coefficients $x$,$y$, and $z$ of each pooling channel $\phi(x,y,z)$ of the form in Eq.~\eqref{eq:linear-comb}.  When training an equivariant QCNN, e.g. using gradient descent, we will obtain parameter updates $(x^{(t+1)},y^{(t+1)},z^{(t+1)}) \leftarrow (x^{(t)},y^{(t)},z^{(t)}) - \alpha D_t((x^{(t)},y^{(t)},z^{(t)}))$. To ensure that the operations remain physical, one would  continually solve the projection
at each iteration. This can be turned into the convex optimization  problem
\begin{equation}
\begin{aligned}
    \min_{x,y,z} & \lVert (x^{(t+1)},y^{(t+1)},z^{(t+1)}) - (x,y,z) \rVert^2,\\
    &\text{subject to }  \text{Eq.~\eqref{eq:feasible}}
\end{aligned}
\label{eq:projection}
\end{equation}
over a convex domain (see Appendix~\ref{app:compile}).

\subsection{Various examples and physical considerations}\label{sec:scattered}

In this section we present additional applications of the methods detailed in Sec.~\ref{sec:framework}. In particular these are now applied to discrete groups, and motivated from practical problems.

\subsubsection{$\mathbb{Z}_2 \times \mathbb{Z}_2$-equivariant layers}

Here we consider problems with $\mathbb{Z}_2 \times \mathbb{Z}_2$ symmetry which are common in spin chain models such as the $S=1$ Haldane chain~\cite{ haldane1983nonlinear, cong2019quantum, parkinson2010introduction}, or in classical data on the two-dimensional plane \cite{meyer2022exploiting}. 

We start with a task on $n$-qubits, where the representation of $\mathbb{Z}_2 \times \mathbb{Z}_2$ is given by
\begin{equation}\nonumber
\begin{split}
R^{\text{in}}(e,e)=\id_{2^{n}}\,,\quad &R^{\text{in}}(\sg,e)=\prod_{i\in O} X_i \otimes \id_{2^{n/2}} \,, \\
R^{\text{in}}(e,\sg)=\id_{2^{n/2}} \otimes \prod_{i\in E} X_i\,,\quad & R^{\text{in}}(\sg,\sg)=\prod_{i\in E\cup O} X_i\,.
\end{split}
\end{equation}
Here,  the sets $O$ and $E$, respectively contain the odd and even qubit labels. From the previous, we are interested in finding all equivariant standard unitary channels where $R^{\text{in}}=R^{\text{out}}$. Since the group is small, we can readily employ the twirling method to determine the set of all equivariant generators. The set of all such generators forms a subalgebra of the unitary algebra $\mf{u}(2^n)$, that we denote $\mf{u}^{\mbb{Z}_2\times\mbb{Z}_2}(2^n)$, and which is given by
\small
\begin{equation}
\begin{split}
    \mf{u}^{\mbb{Z}_2\times\mbb{Z}_2}(2^n) =& \spn_{\mbb{R}}i\Big\{\text{Pauli strings with even $\#$ of Y's and Z's} \\ &\text{on both $E$ and $O$} \Big\}.
\end{split}\nonumber
\end{equation}
\normalsize
Next, let us find pooling channels that are equivariant with respect to $\mbb{Z}_2\times \mbb{Z}_2$.  Similarly to the $\mbb{SU}(2)$ case previously considered, the representations of the group act locally, meaning that we can again consider local 2-to-1 equivariant pooling maps and later combine them into a global equivariant channel (see Lemma~\ref{lemma:local}). On any pair of neighbouring qubits, the input representations are $\{\id\otimes\id, \id \otimes X, X\otimes \id, X\otimes X\}$ and we set the output representation acting on a single qubit to be $\{\id, \id, X, X\}$. Note that the output representation is not faithful. As $\lvert \mbb{Z}_2 \times \mbb{Z}_2 \rvert =4$ we can again readily apply the twirling procedure. To do so, we begin with a basis for trace-preserving maps. That is,  matrices in the Pauli-string basis of the form $ 2\ket{\id}\rangle\langle\bra{\id\otimes\id}+\ket{P_\text{out}}\rangle\langle\bra{P_\text{in}}$ where $P_\text{out} \neq \id$. A simple counting argument reveals that there are 48 such matrices. By twirling all 48 maps, and extracting a linearly independent set, we find the  following 13-element basis for 2-to-1 equivariant projective poolings:
\begin{align}
    \varphi_1(\rho) = \Tr[\rho]\frac{\id}{2},\ & \varphi_2(\rho) = \Tr[\rho]X\\
    \varphi_3(\rho) = \Tr[(\id\otimes X)\rho]X,\ &
    \varphi_4(\rho) = \Tr[(X\otimes \id)\rho]X\\
    \varphi_5(\rho)=\Tr[(X\otimes X)\rho]X,\ & \varphi_6(\rho) = \Tr[(Y\otimes \id)\rho]Y\\ \varphi_7(\rho)=\Tr[(Y\otimes X)\rho]Y,\ &
    \varphi_8(\rho) = \Tr[(Z\otimes \id)\rho]Y\\ \varphi_9(\rho)=\Tr[(Z\otimes X)\rho]Y,\ & \varphi_{10}(\rho) = \Tr[(Z\otimes \id)\rho]Z\\ \varphi_{11}(\rho)=\Tr[(Z\otimes X)\rho]Z,\ & \varphi_{12}(\rho) = \Tr[(Y\otimes \id)\rho]Z\\ \varphi_{13}(\rho)=\Tr[(Y\otimes X)\rho]Z\,.
\end{align}

\subsubsection{$\mbb{Z}_n$-equivariant layers}

We proceed now to analyze a problem with $\mathbb{Z}_n$-symmetry, whose representation cyclically shifts qubits as $R(g^t) \bigotimes_{j=1}^n \ket{\psi_j}=\bigotimes_{j=1}^n \ket{\psi_{j+t \mod n}}$. Such symmetry arises naturally in condensed matter problems with periodic boundary conditions~\cite{sachdev2007quantum,cerezo2017factorization,caro2021generalization}.

Let us start by finding all equivariant standard unitary maps where $R^{\text{in}}=R^{\text{out}}$. Since the group is small, we opt for the twirling approach. For the sake of implementability, we will seek channels composed of one- and two-qubit gates. We can readily see that the twirl of a single qubit generator such as $X_1$, leads to the sum of single-qubit operators $\TC[X_1]=\frac{1}{n}\sum_{j=1}^n X_j$. Similarly, twirling a two-body generator  will lead to a sum of two-body generators (e.g., $Z_1Z_2\xrightarrow[]{\TC}\sum_{j}Z_iZ_{j+1}$). Notably, these generators lead to equivariant unitaries of the form $U_l=e^{-i \theta_l \frac{1}{n}\sum_{j=1}^n X_j}=\prod_{j=1}^n e^{-i \theta_lX_j/n}$. The previous shows a crucial implication of outer symmetries (such as $Z_n$): in many cases, equivariance in unitary layers can be achieved by \textit{correlating} parameters of local gates within a layer~\cite{volkoff2021large, sauvage2022building}. 

Note that a necessary condition for the previous to hold is that all the terms in the twirled operator must be mutually commuting. One can see however, that the twirl of  $Z_1Y_2$ leads to $\sum_{j}Z_jY_{j+1}$, which is a global operator whose terms are non-commuting and which can be challenging to implement on near-term devices. An alternative here is to employ a randomized method. First, we construct unitaries $U_{l,O} = \prod_{j\in O} e^{-i \theta_l} Z_jY_{j+1}$  and  $U_{l,E} = \prod_{j\in E} e^{-i \theta_l} Z_jY_{j+1}$ (where we recall that $O$ and $E$ respectively contain the odd and even qubit labels), which are  $\mathbb{Z}_{n/2}$-equivariant. Then, to achieve $\mathbb{Z}_n$-equivariance, we can apply either $U_{l,O}$ or $U_{l,E}$ at random and with equal probability, effectively performing the quantum channel $\rho \rightarrow (U_{l,O}\rho U_{l,O}\ad + U_{l,E} \rho U_{l,E}\ad)/2$. This channel can be readily shown to be $\mathbb{Z}_n$-equivariant.

This trick of randomly applying local channels can also be applied to equivariant layers that have different numbers of qubits in the input and output. For example, a $\mathbb{Z}_n$-equivariant projection layer  that reduces the number of qubits from $n$ to $n/2$ is $\Phi_{\mathbb{Z}_n}: \rho \rightarrow (\operatorname{Tr}_{\text{odd}} \rho + \operatorname{Tr}_{\text{even}} \rho)/2$. Observe that the output representation has a non-trivial kernel isomorphic to $\mathbb{Z}_2$, as the number of qubits is halved. One can readily extend these ideas to projecting pooling layer over other outer symmetry groups $G$, such as $S_n$-equivariance: $\Phi_{S_n}: \rho \rightarrow \left(\sum_{S} \operatorname{Tr}_{S} \rho\right)/{n \choose n/2}$, where $S$ are uniformly random subsets of $n/2$ qubits. The Hoeffding's bound implies that only $O(\log |G|)$ samples are needed for this method to converge to within a specified error bound. In a sense, this is similar to the dropout regularization technique in neural networks~\cite{srivastava2014dropout}.

\section{Numerical experiments}\label{sec:numerics}
In this section, we numerically compare the performance of $\mbb{SU}(2)$-equivariant QCNN constructed in Sec.~\ref{sec:su2_qcnn} against a problem-agnostic QCNN in a quantum phase classification task.
Similar to the classical problem of assigning the correct labels to images, the task
of classifying quantum phases of matter can be carried
out in a supervised setting and provides a natural
playground to study the efficiency of equivariant quantum
learning model.

\subsection{Bond-Alternating Heisenberg Model}
The 1-D XXX Heisenberg model describes the behavior of a one-dimensional chain of spin-1/2 particles coupled through the standard Heisenberg interaction Hamiltonian between nearest neighbors:
\begin{equation}\label{eq:heisenberg_term}
    H = \sum_{i} \sum_{k=x,y,z} J_k S_i^k S_{i+1}^k \,,
\end{equation}
where $S_i$ is the spin operators for the $i$-th spin, with $S = (S^x,S^y,S^z) = \frac{1}{2}(X,Y,Z)$ and $J_x = J_y = J_z = J$. The bond-alternating XXX Heisenberg model is a generalization of the regular XXX Heisenberg model, in which the exchange coupling constant alternates between two different values $J_1$ and $J_2$:

\begin{equation}\label{eq:alternating_xxx}
    H = J_1 \sum_{i \, \text{even}} S_{i} \cdot S_{i+1} + J_2 \sum_{i \, \text{odd}} S_i \cdot S_{i+1} \,.
\end{equation}

\begin{figure}[ht!]
    \centering
    \includegraphics[scale=0.6]{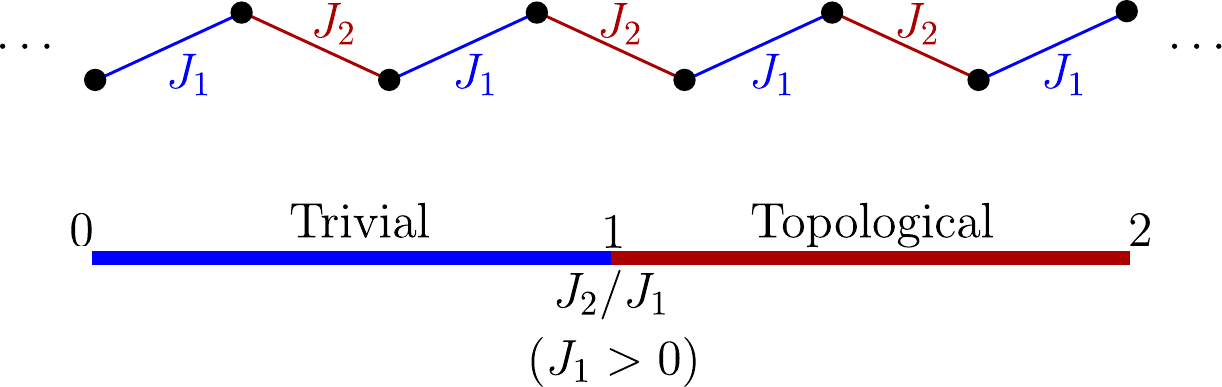}
    \caption{\textbf{1-D Bond-Alternating XXX Heisenberg Model} (Eq.~\eqref{eq:alternating_xxx}) and its phase diagram in terms of exchange coupling constants $J_{1,2}>0$.}
    \label{fig:alt_bond_phasediag}
\end{figure}

We consider the model described by Eq.~\eqref{eq:alternating_xxx} with open boundary conditions and with both the couplings in the ferromagnetic regime, {\ie} $J_{1,2}>0$.
In this case, a \textit{trivial} phase and a \textit{topologically protected} phase are defined by the expectation value of the partial reflection many-body topological invariant~\cite{pollmann2012detection,elben2020many,huang2021provably}. The quantum phase transition between these two phases occurs at a critical value of the bond alternation parameter $\alpha = J_2/J_1$. When $\alpha < 1$, the system is in the trivial phase, otherwise, the system is in the topologically protected phase.

The Hamiltonian can be readily seen to possess an $\mbb{SU}(2)$ symmetry.
The symmetry extends to the whole system through the tensor product representation $R^{\text{in}}(g) = R_{\text{tens}}(g)= g^{\otimes n}$. 
This is also a symmetry of the phase labels, as quantum phases are global properties of the ground states of the model, and symmetries of the Hamiltonian are also symmetries of the ground space. Indeed if $[H, g^{\otimes n}] = 0$ and $\ket{\psi}$ is a ground state, then $g^{\otimes n}\ket{\psi}$ is also a ground state.
Thus, a quantum phase classifier can be endowed with this inductive bias using the $\mbb{SU}(2)$-equivariant quantum maps from Sec.~\ref{sec:su2_qcnn}. Another inductive bias that we can utilize is translation symmetry: shifting the qubits by two sites leaves the model unchanged. One way to exploit this is parameter sharing within each layer of the $\mbb{SU}(2)$-equivariant QCNN as discussed in Sec.~\ref{sec:scattered}.

\subsection{Defining the learning models}


We now describe the learning model in more details. We use the $\mbb{SU}(2)$-EQCNN architecture in Fig.~\ref{fig:qcnn-main}, where each standard (convolution) layer consists of two brickwork (sub)layers of two-qubit $\mbb{SU}(2)$-equivariant gates of the form $U(\theta)=e^{-i\theta {\rm SWAP}}$. Due to translation symmetry, we further enable parameter sharing of two-qubit gates within each of such sublayer.
This leads to having two parameters for each standard convolution layer. If needed, we can repeat the standard layers more than once before applying the pooling layer.

Finally, we choose an equivariant observable operator $\hat{O}$ to measure at the end of the EQCNN. As discussed in the previous sections, the final equivariant measurements belong to the commutant of the output representation $R^{\text{out}}$ of the last layer of the EQCNN.  Now, since we need two outputs to label the two phases $y_{\text{trivial}}=1$ and $y_{\text{topological}}=0$, it is  tempting to end the EQCNN with a binary measurement on $m=1$ qubit. However, from the discussion in Sec.~\ref{sec:su2_qcnn} we know that the commutant of the defining representation of $\mbb{SU}(2)$ over a single qubit is the trivial set $\mf{comm}(R_\text{natural})=\{\id\}$.
Thus we choose a EQCNN that ends with $m=2$ such that the commutant of $g^{\otimes 2}$ contains the nontrivial element SWAP. Conveniently, SWAP has two eigenvalues $\pm 1$ so that we can bind, say, the +1 outcome to $y_{\text{trivial}}$ and the -1 one to $y_\text{topological}$.
We adopt this strategy, with a little modification to have the output of the EQCNN, that we will indicate as $f_{\thv} (\rho)$, take values in $[0,1]$. Namely, we define
\begin{equation}\label{eq:su2_classifier}
    f_{\thv} (\rho) =\frac{\Tr [\phi_{\thv}(\rho) SWAP] + 1}{2} \,.
\end{equation}
Where $\phi_{\thv}$ denotes the EQCNN that outputs two qubits in the last layer.
Then, we assign the predicted phase label to any input state $\rho$ as
\begin{equation}\label{eq:su2_predictions}
    y_{\thv}(\rho) = \left\{ 
    \begin{aligned}
        \text{trivial} \quad &\text{if} \quad f_{\thv}(\rho) > \tau \\
        \text{topological} \quad &\text{if} \quad f_{\thv}(\rho) < \tau \\
    \end{aligned}
    \right. \,,
\end{equation}
for some trainable threshold value $\tau$ that is initialized to be $\tau=0.5$.

We test this EQCNN architecture against a QCNN with no inductive biases. In particular, the QCNN whose standard layers are parametrized circuits inspired by the standard hardware efficient ansatz (HEA)  \cite{kandala2017hardware}, whereas the pooling layers consist of simple alternate partial traces, \ie, at each pooling operation we discard half of the qubits. The classification will then proceed as for the $\mbb{SU}(2)$-EQCNN, with a SWAP measurement and phase assignment described in Eq.~\eqref{eq:su2_predictions}.
We dub the non-equivariant QCNN as HEA-QCNN.

\subsection{Training procedure}\label{sec:su2_training_loop}

We use the standard ML pipeline of supervised learning.
\begin{enumerate}
    \item We collect a training dataset $\mathcal{D}_\text{train}^{N_T}$, where $N_T$ is the size of the dataset, by choosing some representative values of the parameter $J_2$ while always keeping $J_1=1$ and then analytically computing the ground states $\ket{\psi}_g^{J_2/J_1}$ of the Hamiltonian in Eq,~\eqref{eq:alternating_xxx}. Knowing the phase diagram of the alternating model, which is shown in Fig.~\ref{fig:alt_bond_phasediag}, especially that the critical value at which the transition happens $\alpha=J_2/J_1 = 1$, we can then associate these states with their true labels $y\in \{0, 1\}$. Particularly, we try training dataset made of $N_T = (2,4,6,8,10,12)$ ground states, always distributed homogeneously in the range $J_2/J_1 \in [0, 1]$. For example for $N_T = 2$ we use $\mathcal{D}_\text{train}^{2} = \{(\ket{\psi}_g^{0.25}, 1), (\ket{\psi}_g^{0.75}, 0)\}$.
    
    \item We initialize the learning model at hand, equivariant or not, with random parameters ${\thv}$.
    
    \item We select an optimizer for the learning model. In our case we always use ADAM \cite{kingma2014adam}, the golden standard of gradient-based optimization in ML.
    
    \item For a number of \textit{epochs} $E$, we divide the training dataset $\mathcal{D}_\text{train}^{N_T}$ in batches of size $n_{\text{batch}}=2$. For each batch, the training states $\ket{\psi_i}$ are processed by the model to output the predicted label $y_{\thv}(\ket{\psi})$ and the mean squared error loss function is computed by comparing the predictions to the real labels $y_i$
    \begin{equation}
        L_{\thv} = \frac{1}{n_{\text{batch}}}\sum_{i=1}^{n_{\text{batch}}} (y_{\thv}(\ket{\psi_i})- y_i)^2 \,.
    \end{equation}
    We then compute the gradient of $L_{\thv}$ and use the optimizer to update the model's parameters. The goal is to minimize $L_{\thv}$.
    
    \item The QCNN outputs for the training states are used to update the threshold $\tau$. Particularly, only the two training points that are closer to the critical value $\alpha = 1$ are considered, and the threshold value is set to the average of the corresponding outputs.
    
    \item As an additional figure of merit for the training we keep track of the prediction accuracy of the model.
    
    
    \item At the end of the last epoch, we let the model predict the labels of the whole test dataset, and we compute its final accuracy as a measure of the goodness of the training. Then, we also plot the predicted phase diagram to get a visual proof of the performance of the model.
    
\end{enumerate}

\subsection{Training results}
We are now ready to illustrate the results of our numerics. First thing first, we must state that we have not been able to train the EQCNN when using the general 2-to-1 pooling layers described in Sec.~\ref{sec:su2_qcnn}, as the projection step (Eq.~\eqref{eq:projection}) onto the feasible CPTP region seems to cause instability in the optimizing procedure. We leave a full numerical analysis of equivariant quantum learning models to a future upcoming work, and here focus on the more simple tracing pooling operations. That is, the EQCNN architecture is still the one depicted in Fig.~\ref{fig:qcnn-main}, but the pooling operations are just 2-to-1 partial trace channels, corresponding to the solution $\phi_3$ in Eq.~\eqref{eq:su2-equiv-maps}.
In other words, the $\mbb{SU}(2)$-EQCNN and HEA-QCNN use the same pooling layers (but still different convolution layers -- equivariant versus HEA). The training results are illustrated in Figs.~\ref{fig:predicted_phasediags},~\ref{fig:eqcnn_vs_heaqcnn}.\\

\begin{figure*}[ht!]
    \centering
    \includegraphics[scale=0.7]{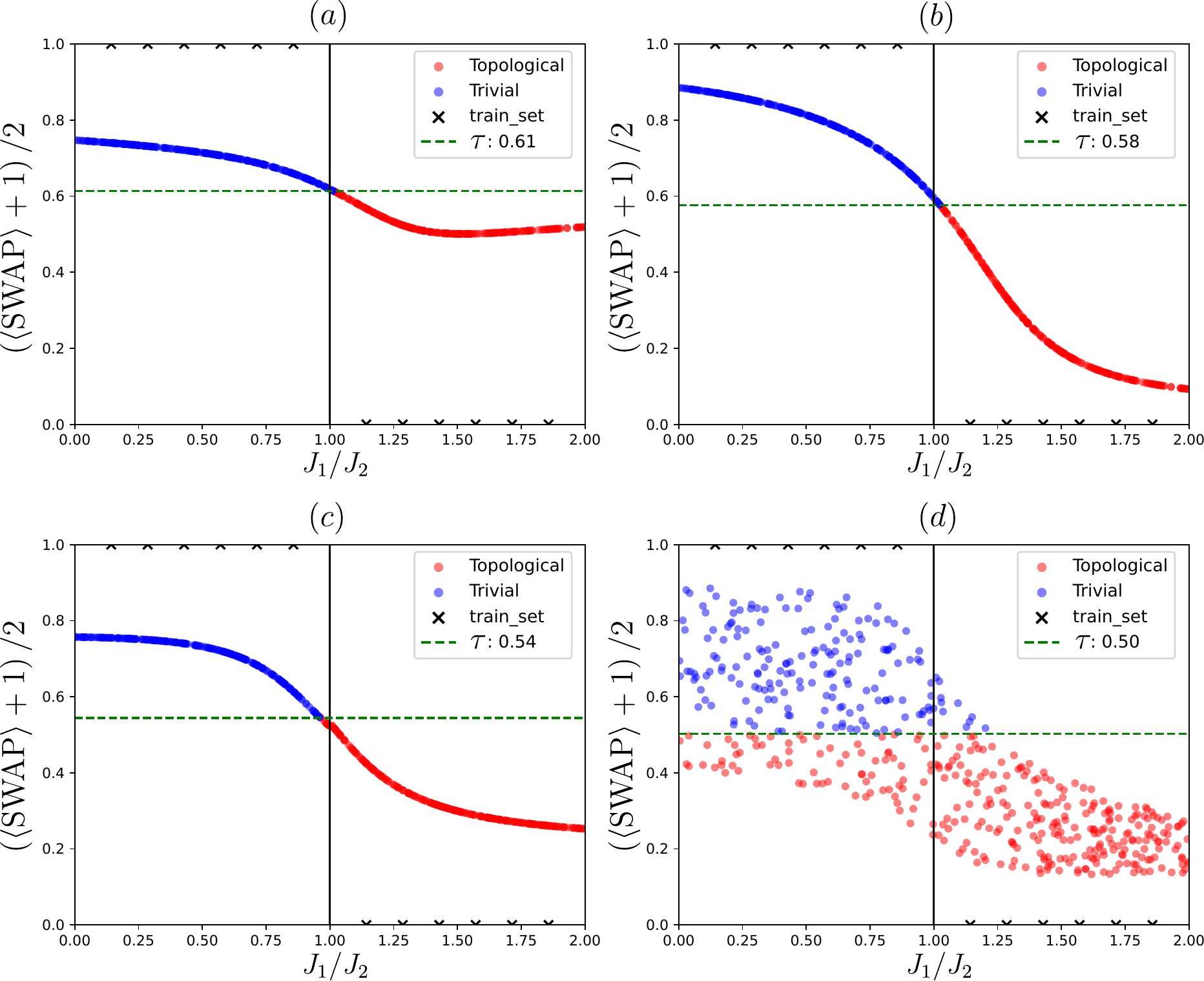}
    \caption{\textbf{Predicted Phase Diagrams.} The four panels show the phase diagram of the 1D Bond Alternating XXX Heisenberg Model for system sizes of $N=12$ and $N=13$ qubits as reconstructed by a trained $\mbb{SU}(2)$-EQCNN or HEA-QCNN. Particularly, each panel shows the QCNN output when it is tested against a dataset of 500 homogeneously distributed ground states. States whose output is above (below) the optimal threshold $\tau$ (green dashed line) are colored in blue (red) and classified as belonging to the trivial (topological) phase. The training points are shown as black crosses. The vertical solid black line is the theoretical critical value $J_2/J_1=1$. The configurations leading to the panels are the following. $(a)$: $\mbb{SU}(2)$-EQCNN, $N=12$, 60 trainable parameters, 12 training points; $(b)$: HEA-QCNN, $N=12$, 63 trainable parameters, 12 training points; $(c)$: $\mbb{SU}(2)$-EQCNN, $N=13$, 66 trainable parameters, 12 training points; $(d)$: HEA-QCNN, $N=13$, 66 trainable parameters, 12 training points. Details about the training procedure are given in the main text.}\label{fig:predicted_phasediags}
\end{figure*}

We considered system sizes ranging from $N=6$ to $N=13$, and trained both the EQCNN and the HEA-QCNN according to the training loop described in Section~\ref{sec:su2_training_loop} for a fixed number of training epochs $E=750$. Since the two architectures are very different, and the EQCNN, as opposed to the HEA-QCNN, uses parameter sharing, in order to have a fair comparison we decided to stack multiple standard layers before each pooling one in the EQCNN, in such a way as to have a similar amount of training parameters for both the learning models. In Fig.~\ref{fig:predicted_phasediags} we show the predicted phase diagrams for $N=12$ and $N=13$. The thing that immediately stands out is the $(d)$ panel of that figure. While the other three plots basically showcase similar behavior, with the QCNN at hand being able to efficiently separate the two phases of the alternating model, panel (d) shows a cloudy behavior of the HEA-QCNN predictions, as it assigned different phases even for states with similar parameters $\alpha$. This is in sharp contrast with the trained EQCNN (panel $(c)$) that successfully learned to classify the two phases with excellent accuracy, demonstrating the advantage of equivariant models.

Interestingly, for the $N=12$ qubit case (panels $(a)$ and $(b)$ in Fig.~\ref{fig:predicted_phasediags}), the EQCNN does not significantly outperform the HEA-QCNN. This is due to fact that there is actually no need for equivariance in that case!
Indeed, equivariance is meant to enhance the performance of learning models that deal with labels invariant under some symmetry group, but this invariance should \textit{not} come from the invariance of the input states themselves. Think yet again of the classical problem of classifying images of cats and dogs, the labels, {\ie} the semantic meaning of the images, are invariant if we translate the images, but the images themselves are not translation-invariant. On the other hand, if we translate images that are full of either black or white pixels, instead of showing cats and dogs, the labels (the colors) of the images are translation-invariant simply because the images do not change. 
This is what is happening in our case. We have already discussed that if a Hamiltonian $H$ is symmetric under a group $G$, any unitary representation of it $U_G$ leaves the ground space unchanged. For non-degenerate ground space, this means the unique ground state is invariant under the group action, in analogy to the above black/white image example, and thus equivariant learing models are not needed.
For degenerate ground states, {\ie} when the Hamiltonian symmetry is broken, the symmetry group action does change the ground states nontrivially and rotate them within the ground space. Degenerate ground states are akin to images of cats and dogs, and as such equivariance can finally shine.
Indeed, the alternating model is degenerate for odd system sizes, while for even system sizes the ground state is unique.
This explains the different behaviors shown in Fig.~\ref{fig:predicted_phasediags} between $N=12$ and $N=13$.\\

\begin{figure*}
    \centering
    \includegraphics[scale=0.7]{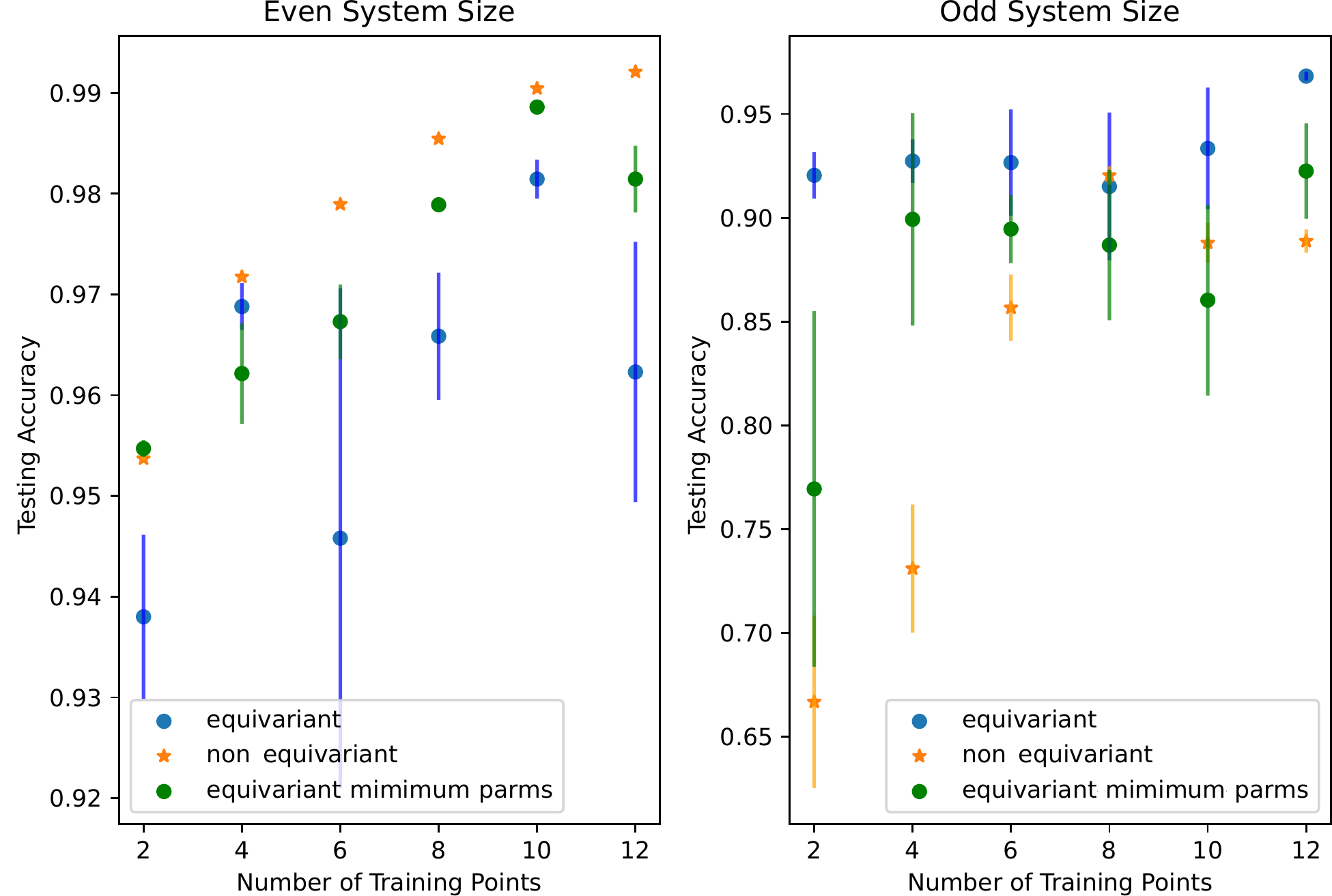}
    \caption{ \textbf{The actual power of equivariance.} The two panels show the mean and variance of the testing accuracy reached by trained QCNNs on the bond-alternating Heisenberg model of even (left) and odd (right) sizes. The average is conducted on both the chosen sizes, $(6,8,10,12)$ for the even case and $(7,9,11,13)$ for the odd one, and on 10 different, randomly initialized training runs for each problem size. The results are plotted against the number of training datapoints $N_T$. The blue circles refer to the EQCNN with the same number of parameters as the HEA-QCNN (orange stars). The green circles describe the EQCNN with the minimum number of parameters possible, \ie, the architecture, as it is, in Fig.~\ref{fig:qcnn-main} with only two standard layers before each pooling layer. These plots demonstrate that equivariance provides significant improvements when, and only when, there is degeneracy in the ground space. }
    \label{fig:eqcnn_vs_heaqcnn}
\end{figure*}

The previous discussion also motivates the analysis shown in Fig.~\ref{fig:eqcnn_vs_heaqcnn}. There, we show a statistical study of the performances of EQCNN and HEA-QCNN when tackling even and odd system sizes. As is evident from the left panel, enforcing equivariance when it is not needed can be more detrimental than beneficial. Indeed, the reduced expressibility of the learning model is not compensated by any benefit and training instabilities emerge, as evidenced by the large error bars in the left panel. However, when equivariance has a reason to be used, as it is for the odd size states studied in the right panel, the advantage of using the EQCNN against a non-informed one is clear. Already with only two training points the equivariant QCNN performs greatly, while the HEA-QCNN needs more training data to generalize well. Interestingly, even with a minimum number of trainable parameters, that for the system sizes studied ranges from 4 to 6, the EQCNN seems to perform better than the non-equivariant one.\\

As stated in the beginning, this is only a preliminary analysis on a simple learning task, and as such we postpone any general conclusion until further studies on the performance of equivariant quantum learning models on more complex systems, against different non-equivariant architectures, and for different symmetry groups. Nonetheless, we think that the preliminary numerical results shown in this section hint at confirming that injecting inductive biases into quantum neural networks boosts their performance, paving the way to the design of new, more efficiently implementable and trainable variational quantum machine learning models.

\section{Discussions and outlook}

Geometric quantum machine learning is a new and exciting field which seeks to produce helpful inductive biases for quantum machine learning models based on the symmetries of the problem at hand.  While there already exist several proposals in the literature within the field of GQML~\cite{larocca2022group,meyer2022exploiting, mernyei2021equivariant, skolik2022equivariant,zheng2021speeding,verdon2019quantumgraph}, these mainly deal with unitary models which maintain the same group representation throughout the computation. In this work, we generalize previous results and we present a theoretical framework to understand, design, and optimize over general equivariant channels, which we refer to as EQNNs. While presented in the setting of supervised learning, our work is readily applicable to other contexts such as unsupervised learning~\cite{otterbach2017unsupervised,kerenidis2019q}, generative modeling~\cite{dallaire2018quantum,benedetti2019generative,kieferova2021quantum,romero2021variational} or reinforcement learning~\cite{saggio2021experimental,skolik2021quantum}.

Our first main contribution is a characterization of the action of equivariant layers as generalized Fourier actions. We argue that the isotypic decomposition of the symmetry's representation determines a generalized Fourier space over which the EQNNs act. This realization not only allows us to characterize the number of free parameters in an EQNN,  but it also unravels the crucial importance that the choice of representation has. That is,  different representations have different block-diagonal structures, and hence, can act on different generalized Fourier spaces and see different parts of the information encoded in the quantum states.  Then, we provide a general classification of EQNN layers, introducing the so-called standard, embedding, pooling, projection, and lifting layers, and we note that non-linearities can be introduced via multiple copies of the data. As a by-product, we highlight the exciting possibility of accessing higher-dimensional irreps of the group symmetry via these non-linearities, which can be a venue to access information that would otherwise be unavailable. 

Our next main contribution is the description of three methods to construct EQNN layers. In the first, which we call the nullspace method, we map the equivariance constraints to a linear system of matrix equations and then solve for their joint nullspace. The second method leverages the technique of twirling over a group, whereby a channel is projected onto the space of equivariant maps. In our third method, we use the Choi operator of the map to  create equivariant layers with specific irrep actions.  Our methods can find unitary or non-unitary equivariant layers efficiently even when the symmetry group is exponentially large, rendering applications for groups that are inaccessible using existing methods in prior literature.  We then compare the strengths and shortcomings of each method, presenting scenarios where one should be favored over the other. Our final key contribution is showing how to  parametrize and optimize EQNNs. In particular, since our work seeks to find equivariant channels, we discuss how one can guarantee that the ensuing maps are physical and potentially easy to implement. To finish, we exemplify our methods by generalizing standard QCNNs to group-equivariant QCNNs, and show how to create, from the ground-up, an $\mbb{SU}(2)$-equivariant QCNN. Finally, we apply this model to a quantum phase classification task on the 1D bond-alternating Heisenberg model and numerically demonstrate its superior performance over a symmetry-agnostic QCNN.

\subsection{Equivariance versus barren plateaus, local minima and data requirements}

Here we argue why EQNNs can alleviate some of the crucial issues in QML, such as barren plateaus, excessive local minima, and poor data requirements.

First, we recall that the barren plateau phenomenon refers to the exponential concentration of gradients exhibited by certain variational quantum models that result in an exponential flattening of the training landscape, and concomitantly in an exponential demand of measurement shots to accurately resolve a parameter update~\cite{mcclean2018barren,cerezo2020cost,sharma2020trainability,holmes2020barren,holmes2021connecting,cerezo2020impact,arrasmith2020effect,arrasmith2021equivalence,marrero2020entanglement,patti2020entanglement,uvarov2020barren,thanasilp2021subtleties,larocca2021diagnosing,wang2020noise,wiersema2020exploring}. The presence or absence of barren plateaus has been directly linked to the expressibility of the model~\cite{holmes2021connecting,larocca2021diagnosing,marrero2020entanglement,patti2020entanglement}, such that highly expressible architectures  exhibit smaller gradients. In our context, the imposition of symmetry constraints to the quantum neural network is expected to shrink -- in a problem oriented way -- its  expressibility, alleviating such gradient vanishing issues. 

Another challenge in training QML models is spurious local minima in the loss landscape. It is known that agnostic models exhibit landscapes that are plagued by local minima~\cite{bittel2021training,anschuetz2022beyond,fontana2022nontrivial,anschuetz2021critical}. However, it is also known that there exists a critical number of trainable parameters above which the model can become overparametrized, meaning that all spurious local traps disappear~\cite{larocca2021theory}. While reaching the overparametrization regime requires exponentially deep circuits for agnostic ansatzes, it has been proven that certain architectures (with reduced  expressibilities) can be efficiently overparametrized with polynomial depth. Thus, the hope is that by restricting the expressibility of the model via geometric priors, one can reduce and realistically reach the overparametrization threshold, thus getting rid of fake local minima. 

Finally, we discuss sample complexity. The ultimate goal of supervised machine learning is to make predictions on unseen data. This is often characterized by generalization bounds, which measure the difference between the performance of the learned model on training and testing data. Recent work has studied the training sample complexity needed for QML models to generalize \cite{huang2021power, caro2021generalization, banchi2021generalization, bu2021effects}.  In particular, Ref. \cite{caro2021generalization}   showed that the training sample complexity typically scales polynomially with the number of trainable parameters. Given a trainable QNN, we have seen that imposing equivariance can drastically reduce the numbers of free parameters, thus implying that incorporating equivariance allows for stronger (more optimistic) bounds on generalization performance. Further,  the bounds of~\cite{caro2021generalization} are  statistical and worst-case (over all possible learning tasks), meaning that EQNNs when applied to the corresponding symmetric learning tasks could potentially achieve better generalizations than indicated by these bounds.

While the previous arguments merely indicate \textit{why} equivariance can improve the performance of a model (in terms of trainability and generalization), these do not constitute a proof that equivariance can indeed fulfill these promises. However, we refer the reader to the recent work of~\cite{schatzki2022theoretical} which studies $S_n$-equivariant models, and rigorously proves that  the equivariance constraints lead to an architecture that avoid barren plateaus, can be efficiently overparametrized, and generalizes well with only polynomially many training points. Thus, the results   in~\cite{schatzki2022theoretical}  showcase the extreme power of EQNNs and GQML.

\subsection{Implications of our work and future directions}
Many concepts and results in our work can be thought of as quantum analogues of existing classical techniques that have enjoyed tremendous successes~\cite{cohen2021equivariant}. We envision that GQML will soon be a  thriving field as it provides blueprints  to create arbitrary architectures and inductive biases suitable for a given problem. As such, the first direct application of our work is  building appropriate schemes to embed classical data into quantum states. Currently, most proposals dealing with classical data use problem-agnostic embedding architectures which completely obviate  and destroy the symmetries in the input data~\cite{havlivcek2019supervised,thanasilp2021subtleties,thanasilp2022exponential}. As such, it is crucial to create embedding schemes that will preserve said symmetries and promote them from the classical to the quantum realm.


As near-term quantum hardware's main challenge is noise, a most important future research direction is to study the interaction of noise and equivariance. Here, there are two possible paths. On one side, one can accept that noise will break equivariance and study the effects of such approximate equivariance. Interestingly, it has been observed in the machine learning literature that mildly breaking equivariance can improve the performance over strictly equivariant models in certain tasks~\cite{wang2022approximately}. On the other side, one can attempt to equivariantize the noise. Being framed in the general superoperator formalism, the present work contains all the necessary tools to study and develop symmetrization strategies to project noise into the symmetric subspace of a given group representation. 
Finally, near-term computation will also be limited in resources (e.g., circuit depth, hardware connectivity, etc) which could prohibit exact equivariance enforcement. Can we derive ``cheaper'' EQNNs at the cost of only approximate equivariance enforcing? How well do EQNNs perform as a function of the symmetry breaking?

\textit{Note added.} After the publication of our work as an arXiv preprint, there have been a number of follow-up works exploring the methods and open questions discussed in our work to develop EQNNs and GQML applications in various contexts. For examples, \cite{le2023symmetry, west2023reflection, Chang2023Approximately, west2023provably} use the twirling method to construct QNN equivariant to finite groups, with applications ranging from calculating molecular force fields, quantum phase detection, image classification. These works all provided numerical results demonstrating improved performance using EQNN. Ref.~\cite{east2023all} propose a method based on spin networks that is shown to be equivalent to the Choi method and apply it to several  lattice Hamiltonian models. Ref.~\cite{das2023role} study the role of choosing representations in EQNN performance. Ref.~\cite{tuysuz2024symmetry} studies the beheaviour of EQNNs in the prescence of noise and derives strategies to protect equivariance.

\section*{Acknowledgements}
This work was partly supported by the U.S. Department of Energy (DOE) through a quantum computing program sponsored by the Los Alamos National Laboratory (LANL) Information Science \& Technology Institute. Q.T.N. acknowledges support from the Harvard Quantum Initiative.
L.S. was partially supported by the NSF Quantum Leap Challenge Institute for Hybrid Quantum Architectures and Networks (NSF Award 2016136).
M.R. was partially supported  by the National Science Foundation through DMS-1813149 and DMS-2108390.
P.J.C. and M.L. were initially supported by the U.S. DOE, Office of Science, Office of Advanced Scientific Computing Research, under the Accelerated Research in Quantum Computing (ARQC) program. P.J.C. and L.S. were also supported by the LANL ASC Beyond Moore's Law project. 
F.S. was supported by the Laboratory Directed Research and Development (LDRD) program of LANL under project number 20220745ER. 
M.L. was also supported by the Center for Nonlinear Studies at LANL. 
M.C. acknowledges support by the LDRD program of LANL under project numbers 20210116DR and 20230049DR.

\bibliography{quantum,numerics_bib}

\clearpage
\newpage

\appendix
\onecolumngrid

\section*{Appendices for ``Theory for Equivariant Quantum Neural Networks''}


\section{Equivariance in existing quantum algorithms}\label{sec:perspective}
In this section we discuss some notable (non-variational) quantum algorithms from the perspective of group equivariance. While not explicitly mentioned in the original works, these algorithms rely on equivariant operations. This shows the significance of equivariance in designing quantum algorithms beyond variational QML.

\subsection{Quantum state purity}
Computing the purity $\operatorname{Tr}[\rho^2]$ is a canonical task in quantum information theory. Since the purity is unitary-invariant, it should be expected that algorithms aiming at measuring it must use unitary-equivariant transformations. We consider two algorithms in \cite{cincio2018learning, van2012measuring} and show that, indeed, these use equivariant operations as defined in this work.

The Bell-basis algorithm described in Figure 6 of \cite{cincio2018learning} starts from two copies of $\rho$ and performs a change of basis to the Bell basis before measuring the observable $CZ^{\otimes n}$ (controlled-phase). This is equivalent to simply measuring the equivariant observable ${\rm SWAP}^{\otimes n}$, which belongs to the commutant of the representation $R(U)=U^{\otimes 2}$ of $\mathbb{U}(2^n)$, such that $\operatorname{Tr}[\rho^{\otimes 2} {\rm SWAP}^{\otimes n}]=\operatorname{Tr}[U^{\otimes 2}\rho^{\otimes 2} U^{\dagger \otimes 2}{\rm SWAP}^{\otimes n}]$ for any $U \in \mathbb{U}(2^n)$.

In contrast, the algorithm in \cite{van2012measuring} only uses one copy of $\rho$ and is based off tools from random matrix theory. It starts by appending $m-n$  (where $2^n \ll 2^m$) zero-initialized qubits to the $n$-qubit state $\rho$. Then, a random $m$-qubit unitary is applied on the composite system. Finally, a $2^n$-dimensional projective measurement is performed, where one estimates the probability of obtaining some (fixed) outcome $\ket{k}$, $\text{Pr}(k)$. This procedure is repeated many times to estimate $\langle \text{Pr}(k)^2 \rangle$ (with the average taken over the random distribution of $m$-qubit unitaries), from which one infers the purity as $\operatorname{Tr}[\rho^2] = 2^m \langle \text{Pr}(k)^2 \rangle -1$. Here we show that this algorithm is effectively composed of equivariant transformations. Observe that

\begin{align}
    \langle \text{Pr}(k)^2 \rangle &= \int_{U \in \mathbb{U}(2^m)} d\mu(U) \operatorname{Tr}[(\rho \otimes \ket{0}\bra{0}_{m-n}) U^{\dagger} \ket{k}\bra{k} U ]^2  \nonumber\\
     &= \int_{U \in \mathbb{U}(2^m)} d\mu(U) \operatorname{Tr}[(\rho \otimes \ket{0}\bra{0}_{m-n})^{\otimes 2} U^{\dagger \otimes 2} \ket{kk}\bra{kk} U^{\otimes 2} ] \nonumber\\
     &=  \operatorname{Tr}\left[(\rho \otimes \ket{0}\bra{0}_{m-n})^{\otimes 2} \int_{U \in \mathbb{U}(2^m)} d\mu(U) U^{\dagger \otimes 2} \ket{kk}\bra{kk} U^{\otimes 2} \right]\,. \label{eq:vanenk}
\end{align}

We can interpret Eq. \eqref{eq:vanenk} as follows. The first step is $\rho \rightarrow (\rho \otimes \ket{0}\bra{0})^{\otimes 2}$, which is a $(\mathbb{U}(2^n),R^{\text{in}},R^{\text{out}})$-equivariant non-linear embedding (Definition~\ref{def:pooling-embedding}), where we define $R^{\text{in}}(U)=U$ and $R^{\text{out}}(U)=(U\otimes \id_{m-n})^{\otimes 2}$ for $U \in \mathbb{U}(2^n)$. The second step is measuring the observable $\widetilde{O} = \int_{U \in \mathbb{U}(2^m)} U^{\dagger \otimes 2} \ket{kk}\bra{kk} U^{\otimes 2}$, which commutes with $R^{\text{out}}$ due to the invariance of the Haar measure of $\mathbb{U}(2^m)$.

\subsection{Quantum convolutional neural networks (QCNN)}
Quantum convolutional neural networks (QCNN) were proposed in Ref.~\cite{cong2019quantum}.  The architecture presented takes inspiration from classical CNNs~\cite{lecun1998gradient} and relies on local gates with (potentially) shared parameters in between the gates belonging to the same layer.
As an example of application, a QCNN is used to classify phases of the ground states of a Haldane chain~\cite{haldane1983nonlinear}. 
Of particular relevance, the  Hamiltonian $H_{\rm{hal}}$ of the Haldane chain can be verified to commute with the group of symmetry $\mathbb{Z}_2\times \mathbb{Z}_2 \equiv \{1, r, s, rs\}$, with unitary representation of the generators $r$ and $s$ given by $R(r)= X_{\text{even}} \equiv \prod_{i \text{ even}} X_i$ and $R(s) = X_{\text{odd}} \equiv \prod_{i \text{ odd}} X_i$. 
Since the nature of the ground state of $H_{\rm{hal}}$ does not change under the action of these unitaries, we identify $G=\mathbb{Z}_2\times \mathbb{Z}_2$ as the symmetry group of the task.

For this ground-state classification problem, two QCNNs were studied in Ref.~\cite{cong2019quantum}: one trainable (see Supplementary Figure 2 in Ref.~\cite{cong2019quantum}), and one ``exact'' (Figure 2b of Ref.~\cite{cong2019quantum} and reproduced in Fig.~\ref{fig:qcnn}) that is obtained based on the MERA representation of the ground states of $H_{\rm{hal}}$.
Remarkably, close inspection of the exact QCNN reveals that it is composed of equivariant layers and measurement. That is, the exact model for this task follows the framework of EQNNs laid down in this work.
However, one can see that the choice of trainable model adopted in Ref.~\cite{cong2019quantum} does not comply with equivariance requirements, such that we expect that it could be further improved by imposing equivariance.
In the following, we briefly discuss how one can identify equivariance of the different layers of the exact QCNN.

First we consider the action of a convolution layer (denoted C in Fig.~\ref{fig:qcnn} and excluding the SWAP operations) onto $R(r)$. Using the following identity
\begin{equation*}
\begin{quantikz}[row sep={1.0cm,between origins}]
\qw & \gate{X} & \ctrl{1} & \qw  \\
\qw& \qw & \control{} & \qw  \\
[-0.5cm]
\end{quantikz}
=
\begin{quantikz}[row sep={1.0cm,between origins}]
\\[-0.2cm]
\qw  & \ctrl{1} & \gate{X} & \qw \\
\qw & \control{} & \gate{Z} & \qw  \\
\end{quantikz}
\end{equation*}
and noticing that, due to the connectivity of the controlled-Z gates the resulting $Z$s unitaries acting on any of the qubits can only be created by pairs (yielding an identity), one can verify that $R(r)$ commutes with the action of all the controlled-Z gates. Additionally, given that both the controls and target of the Toffoli gates act on the eigenbasis of X operators, one can see that $R(r)$ commutes with the action of the whole convolution layer. Similar reasoning shows that such commutativity properties also holds true for $R(s)$, and thus for $R(rs)$. Overall, we find the convolution layers to be $(G,R,R)$-equivariant.

Second, we consider the action of the pooling layer (denoted P in Fig.~\ref{fig:qcnn} and including the SWAP operations). 
 One can verify that $\phi_{\rm{pool}} \circ \Adj_{R(g)} = \Adj_{R'}(g) \circ \phi_{\rm{pool}}$, where we have denoted as $\phi_{\rm{pool}}$ the map realized by the pooling from $n$ to $n/3$ qubits, and by $R'(g)$ the representation of $g\in G$ on this reduced space. In particular, $R'$ is defined in equivalence to $R$ but over a reduced number $n/3$ of qubits. Overall, we find the pooling layers to be $(G,R,R')$-equivariant.

\begin{figure}[t]
    \centering
    \includegraphics[width=0.4\textwidth]{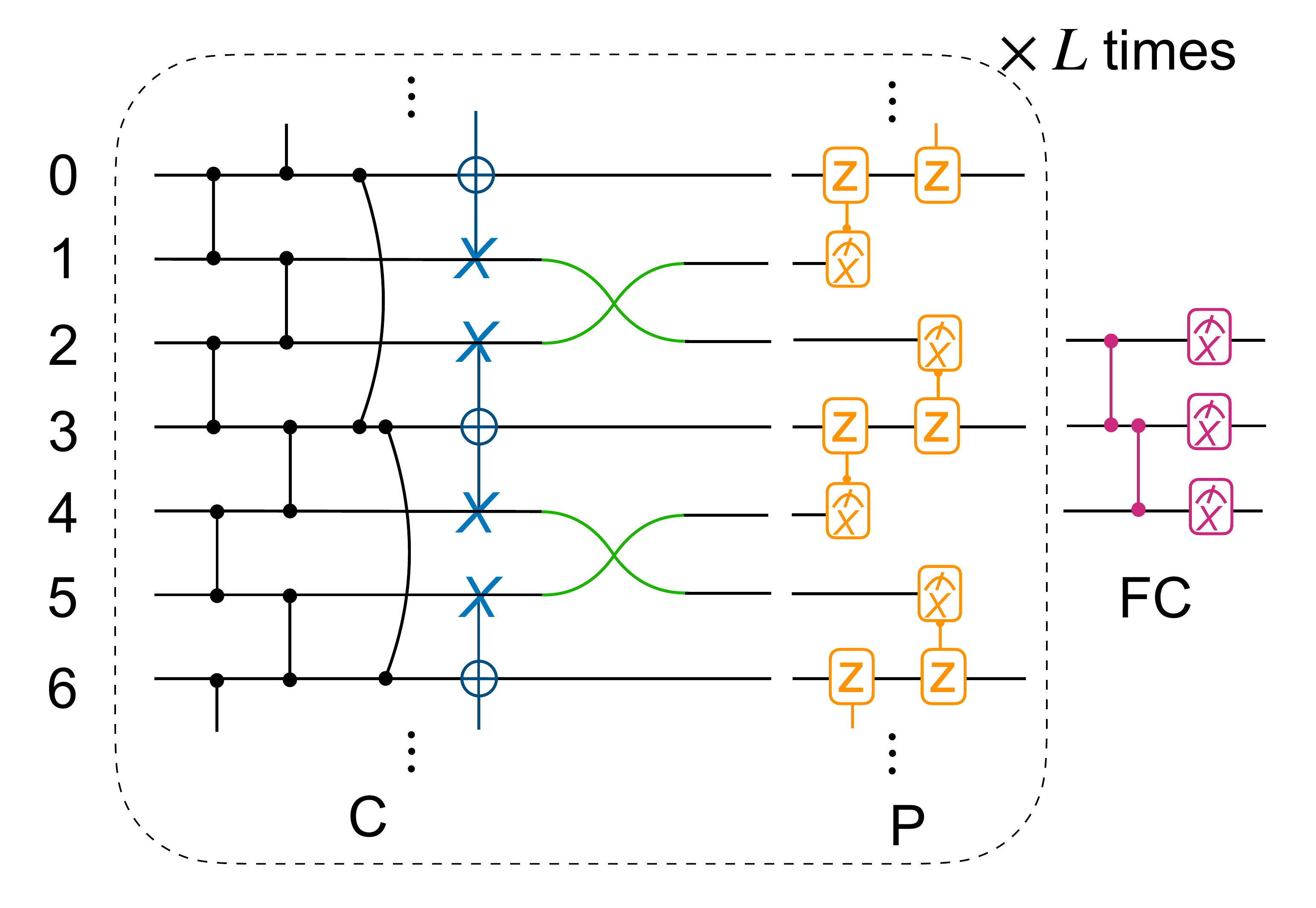}
    \caption{Reproduction of the "exact" QCNN architectures used in \cite{cong2019quantum} for the classification of the ground states of the Haldane model. C stands for convolution, P for pooling, and FC for fully connected. The blue three-qubit gates are Toffoli gates with control qubits taken in the X basis. 
    The orange two-qubit gates apply a pauli Z onto the target qubit when an X measurement of the control qubit results in a $-1$ outcome, but leaves the target qubit unchanged otherwise.
    The black two-qubit gates are controlled-Z gates.
    The green interleaved lines corresponds to a SWAP of $2$ qubits.
    The C-P structure is repeated $L$ times until the system is left with only three qubits.
    }
    \label{fig:qcnn}
\end{figure}

Finally, note that the measurement realized by the fully connected layer (FC in Fig.~\ref{fig:qcnn}) corresponds to a measurement of the Pauli observable $O=ZXZ$ on the $3$ remaining qubits. Notably $O \in \mf{comm}(R^{\rm out})$, where $R^{\rm out}$ is the representation of $G$ on the remaining qubits and is defined as $R^{\rm out}(r)=XIX$ and $R^{\rm out}(s)=IXI$. That is, we find the measurement to be $(G, R^{\text{out}})$-equivariant, such that the overall exact QCNN follows the requirements of Proposition.~\ref{prop:invar} ensuring invariance of the model.
$R_1$-invariant ($\mathbb{Z}_2 \times \mathbb{Z}_2$-invariant).

\subsection{Quantum-enhanced experiments}
The quantum-enhanced experiment in \cite{huang2021quantum} leverages coherent access to multiple copies of a quantum state obtained from physical experiments to learn its properties. In their ``predicting observables'' task (Theorem 1 in \cite{huang2021quantum}), the goal is to predict the \emph{absolute value} of an $n$-qubit Pauli observables $O$ on states of the form $\rho = (\id + 0.9sP)/2^n$, where $P$ is also an $n$-qubit Pauli string and $s\in \{0,\pm 1\}$. Notice that this task is Pauli-invariant since for any Pauli string $\sigma$ we have that $|\operatorname{Tr}(\rho O)|=|\operatorname{Tr}(\sigma \rho \sigma O)|$. The first step in their algorithm is adding a copy: $\rho \rightarrow \rho^{\otimes 2}$, which is a non-linear equivariant layer as defined in Definition~\ref{def:nonlinear}. Then, a Bell measurement is performed followed by classical post-processing. Let $G=\{\sigma: n\text{-qubit Pauli strings}\}$ and consider the following representations of  $G$: $R_{\text{def}}(\sigma)=\sigma$ and  $R_{2}(\sigma) = \sigma^{\otimes 2}$. They derived the following equation (Section D.2 of \cite{huang2021quantum}'s Supplementary Materials)
\begin{equation}
    |\operatorname{Tr}[\rho O]|  = \mathbb{E} \left[\operatorname{Tr}\left[O \otimes O  \bigotimes_{k=1}^{n} S_k \right] \right],
    \label{eq:q-enhanced}
\end{equation}
where $S_k$ is the Bell projector corresponding to the Bell measurement result on qubit pair $k$. That is $S_k=\ket{\Psi_k}\bra{\Psi_k}$ for a Bell state $\ket{\Psi_k}$. Expanding Eq.~\eqref{eq:q-enhanced} we have
\begin{align}
|\operatorname{Tr}[\rho O]|  &= \sum_{\ket{\Psi_1\hdots \Psi_n}} \operatorname{Tr}\left[O^{\otimes 2}  \ket{\Psi_1\hdots \Psi_n} \bra{\Psi_1\hdots \Psi_n} \right] \bra{\Psi_1\hdots \Psi_n} \rho^{\otimes 2} \ket{\Psi_1\hdots \Psi_n} \nonumber\\
& = \operatorname{Tr} \left[\rho^{\otimes 2} \sum_{\ket{\Psi_1\hdots \Psi_n}} \ket{\Psi_1\hdots \Psi_n}  \bra{\Psi_1\hdots \Psi_n}  O^{\otimes 2}  \ket{\Psi_1\hdots \Psi_n} \bra{\Psi_1\hdots \Psi_n} \right]  \nonumber\\
& = \operatorname{Tr}[\rho^{\otimes 2} \widetilde{O}].
\end{align}
We can see that after adding another copy of $\rho$, the algorithm effectively performs a measurement of the observable $\widetilde{O}$. It is readily verified that $\widetilde{O}$ commutes with $R_2$ using the fact that, for any single-qubit Pauli $\sigma$, the operator $\sigma^{\otimes 2}$ admits the Bell states $\ket{\Psi_k}$ as eigenvectors with eigenvalues $\pm 1$. Hence, the entire algorithm is $R_{\text{def}}$-invariant.

\subsection{Classical shadows}
Classical shadows is an efficient protocol for predicting observables on quantum states using randomized measurements \cite{huang2020predicting}. In this protocol, one applies a random unitary $U$ drawn from a unitary ensemble $\mathcal{E}$ on the state $\rho$, then performs a computational basis measurement to obtain a bit string $\vec{z}$. E.g., Ref. \cite{huang2020predicting} considered Pauli and Clifford ensembles. Repeating this process many times allows one to predict properties of $\rho$. We first consider $\mathcal{E}$ to be the $n$-qubit Clifford group. The expected classical shadow (see Section 5.B of Ref. \cite{huang2020predicting}'s Supplement Information) can be rewritten as
\begin{align}
    \mathbb{E}_{\vec{z},U} [(2^{n}+1)U^{\dagger} \ket{\vec{z}}\bra{\vec{z}}U - \id]&= (2^{n}+1) \sum_{\vec{z}} \frac{1}{|\mathcal{E}|} \sum_{U \in \mathcal{E}} U^{\dagger} \ket{\vec{z}}\bra{\vec{z}}U \operatorname{Tr}[\ket{\vec{z}}\bra{\vec{z}}U \rho U^{\dagger}] - \id \nonumber \\
    &= (2^{n}+1) \operatorname{Tr}_B \left[ \underbrace{ \left(\frac{1}{|\mathcal{E}|} \sum_{U \in \mathcal{E}} U^{\dagger \otimes 2} \left(\sum_{\vec{z}} \ket{\vec{z}\vec{z}}\bra{\vec{z}\vec{z}} \right)U^{\otimes 2} \right)}_{\widetilde{O}} (\id \otimes \rho) \right] - \id , \label{eq:shadow}
\end{align}
where $\operatorname{Tr}_B$ denotes the partial trace over the second subsystem.

Notice that the sum over $\mathcal{E}$ is equal to the Haar integral over $\mathbb{U}(2^n)$ as the Clifford group forms a 3-design \cite{webb2016clifford}, i.e., $\widetilde{O}= \int_{U \in \mathbb{U}(2^n)} d\mu (U) U^{\dagger \otimes 2} O U^{\otimes 2} $, where $O:= \sum_{\vec{z}} \ket{\vec{z}\vec{z}}\bra{\vec{z}\vec{z}}$. Thus, $\widetilde{O}$ commutes with the representation $R_2(U):=U^{\otimes 2}$ of $\mathbb{U}(2^n)$. We can therefore interpret Eq.~\eqref{eq:shadow} as a composition of three equivariant layers in Definition~\ref{def:pooling-embedding}:

\begin{equation}
    \rho \qquad \overset{Embedding}{\longrightarrow} \qquad \id \otimes \rho \overset{Standard}{\longrightarrow} \qquad \widetilde{O} \rho  \overset{Pooling}{\longrightarrow} \qquad \operatorname{Tr}_B [\widetilde{O}(\id \otimes \rho)].
    \label{eq:convolution-shadow}
\end{equation}
The corresponding representations transform as
\begin{equation}
    R_{\text{def}} \qquad \overset{(G,R_{\text{def}},R_2)\text{-equivariant}}{\longrightarrow} \qquad R_2 \overset{(G,R_2,R_2)\text{-equivariant}}{\longrightarrow} \qquad  R_2  \overset{(G,R_2,R_{\text{def}})\text{-equivariant}}{\longrightarrow} \qquad R_{\text{def}}.
\end{equation}

Next, we consider $\mathcal{E}$ to be the Pauli ensemble. The expected classical shadow is
\begin{align}
    \mathbb{E}_{\vec{z},U} \left[\bigotimes_{j=1}^{n} (3 U_j^{\dagger} \ket{\vec{z}_j}\bra{\vec{z}_j}U_j - \id )\right]&= \sum_{\vec{z}} \frac{1}{|\mathcal{E}|} \sum_{U \in \mathcal{E}} \bigotimes_{j=1}^{n} (3 U_j^{\dagger} \ket{\vec{z}_j}\bra{\vec{z}_j}U_j -\id) \operatorname{Tr}[\ket{\vec{z}}\bra{\vec{z}}U \rho U^{\dagger}]  \nonumber\\
    &= \operatorname{Tr}_B \left[  \left(\frac{1}{|\mathcal{E}|} \sum_{U \in \mathcal{E}} \bigotimes_{j=1}^{n} U_j^{\dagger \otimes 2} \underbrace{ \left(\sum_{\vec{z}_j} 3 \ket{\vec{z}_j\vec{z}_j}\bra{\vec{z}_j\vec{z}_j}-\id \otimes \ket{\vec{z}_j}\bra{\vec{z}_j} \right)}_{M_j} U_j^{\otimes 2} \right) (\id \otimes \rho) \right] \nonumber\\
    &= \operatorname{Tr}_B  \left[ \left(  \bigotimes_{j=1}^{n} \underbrace{\int_{U_j \in \mathbb{U}(2)} U_j^{\dagger \otimes 2} M_j U_j^{\otimes 2}}_{\widetilde{M}_j} \right) (\id \otimes \rho) \right] \nonumber\\
    & = \operatorname{Tr}_B \left[ \widetilde{M} (1\otimes \rho)\right],
\end{align}
where in the last equality we used the fact that the 1-qubit Pauli group forms a 3-design of $\mathbb{U}(2)$. By explicitly evaluating the integral, we find that $\widetilde{M}_j= SWAP$, thus $\widetilde{M}=SWAP^{\otimes n}$, which commutes with the representation $R_2$. A similar composition of equivariant layers to Eq.~\eqref{eq:convolution-shadow} thus follows.

\section{Equivariant maps as generalized group convolutions in Fourier space}\label{app:structure}



In this section,  we provide further details on the interpretation of classical equivariant maps as generalized group convolutions, given in Sec.~\ref{sec:interpret}. More specifically, by looking at the Fourier space, we find that group convolution is a special case of equivariant maps when the representation is the regular representation. We first consider a finite group $G$ and later the Lie group.

Recall that the group convolution of two vectors $a, b \in \mathbb{C}^{|G|}$ (a.k.a. functions mapping group elements to scalars) is defined as
\begin{equation}
     (a \circledast b) (u) = \sum_{v \in G} a(uv^{-1}) b(v).
\end{equation}
Note that the above expression can be rewritten as a matrix vector multiplication as follows:
\begin{equation}
a \circledast b = A \Vec{b}, \qquad \text{where } A = \sum_{u \in G} a(u) R_{\text{left}}(u).
\end{equation}
Here, $R_{\text{left}}$ denotes the \emph{left regular representation} of the group $G$, which maps each group element $u$ to a permutation matrix $R_{\text{left}}(u)$ that performs $R_{\text{left}}(u)\ket{v}= \ket{uv}$. For example, in CNNs \cite{lecun1998gradient} the convolution matrix $A$ is a circulant matrix since the group $\mathbb{Z}_n\times \mathbb{Z}_n$ is abelian.

The key property here is that the group Fourier transform, $F_G := \sum_{v \in G} \sum_{\rho \in \hat{G}} \sqrt{\frac{d_{\rho}}{|G|}} \sum_{i,j=1}^{d_{\rho}} \rho(v)_{j,k} \ket{\rho,j,k} \bra{v}$, block-diagonalizes the left regular representations into irreps as~\cite{childs2010quantum}
\begin{equation}
    R_{\text{left}}(u) = F_G^{\dagger} \left( \bigoplus_{\xi \in \hat{G}} \xi (u) \otimes \id_{d_{\xi}} \right) F_G,
\end{equation}
where $\hat{G}$ denotes the set of inequivalent irreps of $G$ and $d_\xi$ is the dimension of the irrep $\xi$.

Thus, under $F_G$, the convolution matrix $A$ is block-diagonalized as
\begin{equation}
    A \cong \bigoplus_{\xi \in \hat{G}} \hat{a} (\xi) \otimes \id_{d_{\xi}},
\label{eq:block-diag}
\end{equation}
where $\hat{a}(\xi) := \sum_{u \in G} a(u) \xi(u)$ is the \emph{Fourier transform} of the kernel $c$ and $\hat{G}$ is the set of inequivalent irreps.

Comparing Eqs.~\eqref{eq:comm-structure} and \eqref{eq:choi-decompose} to Eq.~\eqref{eq:block-diag} we see that equivariant channels generalize group convolution. Group convolution are then equivariant maps to the regular representation, where the multiplicities $m_\xi$ are equal to the irrep dimensions $d_\xi$, and the basis is the regular Fourier basis. This is why most of classical homogeneous ENN architectures \cite{kondor2018generalization, cohen2019general} (ENNs on regular representations) implement equivariant layers via group convolutions.

We complete this section with a similar analysis on compact Lie groups. For these groups, an irrep decomposition of the regular representation similar to Eq.~\eqref{eq:block-diag} also exists due to the Peter-Weyl theorem \cite{folland2016course}. Let $U(L^2(G))$ denote the group of unitary operators on the Hilbert space of $L^2$-integrable functions on the group $G$, then the left regular representation, $R_{\text{left}}: G\to U(L^2(G))$, is defined as $ R_{\text{left}}(u) f(x) = f(u^{-1} x)$, for $u\in G$ and $f\in L^2(G).$

\begin{theorem}[Peter-Weyl theorem]
	Let $G$ be a compact group.
	Then the regular representation $R_{\text{left}}: G\to U(L^2(G))$ is isomorphic to a direct sum of the irreducible representations of $G$:
\begin{equation}
    R_{\text{left}} \cong \bigoplus_{\xi \in \hat{G}} \xi \otimes \id_{d_{\xi}}.
    \label{eq:peter-weyl}
\end{equation}
\end{theorem}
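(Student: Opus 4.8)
\textit{Proof idea.} The plan is to follow the classical route through the spectral theory of convolution operators. First I would fix the normalized Haar measure on the compact group $G$ (which exists and is unique by the standard averaging construction), so that $L^2(G)$ becomes a Hilbert space and $R_{\text{left}}$ a unitary representation on it; strong continuity of $u \mapsto R_{\text{left}}(u)$ follows from continuity of translation in $L^2$. The entire difficulty is then to produce enough finite-dimensional $R_{\text{left}}$-invariant subspaces, since once we have a dense collection of them the decomposition into irreps is essentially algebraic.

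The key analytic input is that convolution operators are compact. Given $\phi \in L^2(G)$, define $T_\phi f = \phi \circledast f$; since $G$ has finite Haar measure this operator has the square-integrable kernel $(x,y)\mapsto \phi(xy^{-1})$, hence is Hilbert--Schmidt and in particular compact. Moreover $T_\phi$ commutes with every $R_{\text{left}}(u)$ because left translation commutes with (right) convolution, and $T_\phi$ is self-adjoint whenever $\phi(g) = \overline{\phi(g^{-1})}$. Choosing $\phi$ to be such a symmetric continuous bump concentrated near the identity makes $T_\phi$ a nonzero compact self-adjoint operator, so the spectral theorem decomposes $L^2(G)$ into $\ker T_\phi$ and finitely-degenerate eigenspaces; each nonzero-eigenvalue eigenspace is finite-dimensional and, by the commutation, $R_{\text{left}}$-invariant. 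Letting $\phi$ range over an approximate identity, one checks that the closed span $\mathcal{V}$ of all finite-dimensional invariant subspaces is all of $L^2(G)$: its orthogonal complement is a closed invariant subspace annihilated by every $T_\phi$, but $T_\phi f \to f$ along the approximate identity, forcing that complement to be $\{0\}$.

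With density in hand, complete reducibility of finite-dimensional representations of $G$ — obtained by averaging an arbitrary inner product over Haar measure, the compact analogue of Maschke's theorem — lets me write $L^2(G)$ as a Hilbert-space direct sum of irreps. To pin down multiplicities I would use matrix coefficients: for $\xi \in \hat{G}$ the span of $g\mapsto \xi(g)_{ij}$ is $R_{\text{left}}$-invariant and decomposes as $d_\xi$ copies of $\xi$, one per column index; the Schur orthogonality relations make coefficients of inequivalent irreps mutually orthogonal; and the density statement above shows their total span is dense in $L^2(G)$. Combining these yields $R_{\text{left}} \cong \bigoplus_{\xi \in \hat{G}} \xi \otimes \id_{d_\xi}$.

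The main obstacle is the analytic heart of the second paragraph: establishing compactness (Hilbert--Schmidt-ness) of the convolution operators and showing, via an approximate identity, that they collectively detect every vector of $L^2(G)$. Everything afterwards — averaging to get complete reducibility, Schur orthogonality, and the multiplicity count — is bookkeeping that I would not spell out in full.
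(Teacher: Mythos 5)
The paper does not prove this statement at all: Peter--Weyl is quoted in Appendix~B as a standard result, with a citation to Folland, and is used only to justify the block-diagonalization of group convolutions in the Fourier basis. Your sketch is the classical textbook proof (Hilbert--Schmidt convolution operators, spectral theorem, approximate identity to get density of finite-dimensional invariant subspaces, then matrix coefficients and Schur orthogonality for the multiplicity count), and its overall architecture is sound; it simply supplies what the paper outsources to a reference. Two convention-level slips are worth fixing if you write it out. First, with the paper's convention $(\phi \circledast f)(x)=\int \phi(xy^{-1})f(y)\,d\mu(y)$, the operator $T_\phi f=\phi\circledast f$ commutes with \emph{right} translations, not with $R_{\text{left}}$; to commute with left translation you should take $T_\phi f=f\circledast\phi$ (right convolution), whose kernel $(x,y)\mapsto\phi(y^{-1}x)$ is equally square-integrable, so the Hilbert--Schmidt, self-adjointness and approximate-identity arguments go through verbatim. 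Second, under left translation the span of a fixed column of matrix coefficients $g\mapsto\xi(g)_{ij}$ carries the conjugate irrep $\bar\xi$ rather than $\xi$ itself; since conjugation is a bijection of $\hat G$ and $d_{\bar\xi}=d_\xi$, the final isomorphism $R_{\text{left}}\cong\bigoplus_{\xi\in\hat G}\xi\otimes\id_{d_\xi}$ is unaffected, but the bookkeeping in your last paragraph should be stated for the correct copies. With those adjustments your outline is a complete and correct route to the theorem.
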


Above, the isomorphism is the group Fourier transform, which maps functions $f \in L^2(G)$ to operator-valued functions, $\hat{f}(\xi) = \int_Gd\mu(u) f(u)\xi(u)$. The Lie group convolution is similarly defined as $(f \circledast k)(u) := \int_G  d\mu(v) f(uv^{-1})k(v)$ for $f,k \in L^2(G).$ One can verify that it is possible to derive  a block-diagonalization of $f$ in the Fourier basis similar to that in Eq.~\eqref{eq:block-diag}. Note that the sum over inequivalent irreps in Eq.~\eqref{eq:peter-weyl} is infinite as Lie groups has infinitely many inequivalent irreps. In contrast, for Lie group representations on finite qubit systems (as we consider in this work), the EQNN equivariant layers only process irreps up to some truncated irreps.

\section{Deferred proofs}\label{app:proofs}

In this section we present proof for some of the results in the main text. For convenience of the reader, we recall the statement of the theorems and propositions prior to their proofs.

\subsection{Deferred proofs from Section \ref{sec:connecting}}

\setcounter{theorem}{2}

\begin{theorem}[Free parameters in equivariant channels] Let the irrep decomposition of $R:=R^{\text{in}*} \otimes R^{\text{out}}$ be $R(g) \cong \bigoplus_{q} R_q(g) \otimes  \id_{m_q}$. Then any $(G,R^{\text{in}},R^{\text{out}})$-equivariant CPTP channels can be fully parametrized via $\sum_{q} m_q^2 - C(R^{\text{in}},R^{\text{out}})$ real scalars, where $C(R^{\text{in}},R^{\text{out}})$ is a positive constant that depends on the considered representations.
\label{thm:paramcountchannel-sm}
\end{theorem}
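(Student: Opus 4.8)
The plan is to recast the statement as a computation of the real affine dimension of the convex set of equivariant Choi operators, combining Lemma~\ref{lem:choi-equivariance} with Theorem~\ref{thm:comm}. By Lemma~\ref{lem:choi-equivariance}, $\phi$ is $(G,R^{\text{in}},R^{\text{out}})$-equivariant if and only if $J^{\phi}\in\mathfrak{comm}(R)$ with $R:=R^{\text{in}*}\otimes R^{\text{out}}$. Since $R$ is a unitary representation, Theorem~\ref{thm:comm} gives that, in the basis $W$ block-diagonalizing $R\cong\bigoplus_q R_q\otimes\id_{m_q}$, every such $J^{\phi}$ has the form $J^{\phi}\cong\bigoplus_q\id_{d_q}\otimes J_q$ with $J_q$ an $m_q\times m_q$ operator. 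Because complete positivity forces $J^{\phi}$ to be Hermitian, the admissible Choi operators fill the real vector space $V:=\bigoplus_q\mathrm{Herm}(\mathbb{C}^{m_q})$, of real dimension $\sum_q m_q^2$, in direct parallel with the proof of Theorem~\ref{thm:num-parameters}.

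First I would dispose of complete positivity: since $W$ is unitary, $J^{\phi}\geq 0$ iff $J_q\geq 0$ for every $q$, and the positive semidefinite cone is full-dimensional inside each $\mathrm{Herm}(\mathbb{C}^{m_q})$. Hence $\phi\mapsto J^{\phi}$ is a real-affine bijection from the equivariant CP maps onto a full-dimensional convex cone $K\subset V$, which already recovers the looser count $\sum_q m_q^2$. The trace-preservation condition then amounts to a fixed system of linear equations $\operatorname{Tr}_{\text{out}}[J^{\phi}]=\id$ on $V$; writing $L$ for the restriction to $V$ of the linear map $J\mapsto\operatorname{Tr}_{\text{out}}[J]$, I would define $C(R^{\text{in}},R^{\text{out}}):=\operatorname{rank}(L)$, so that the solution slice $\{J\in V: L(J)=\id\}$, once shown nonempty, is affine of dimension $\sum_q m_q^2-C(R^{\text{in}},R^{\text{out}})$, and the equivariant CPTP maps correspond to $K$ intersected with this slice.

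The crux is then to show (a) the slice is nonempty and in fact meets the topological interior of $K$, so that $K\cap\{L=\id\}$ is genuinely full-dimensional inside the affine slice rather than further cut down by positivity, and (b) $C(R^{\text{in}},R^{\text{out}})>0$. Both follow from a single witness: the completely depolarizing channel $\Phi(\rho)=\operatorname{Tr}[\rho]\,\id_{\dim(\HC^{\text{out}})}/\dim(\HC^{\text{out}})$ is $(G,R^{\text{in}},R^{\text{out}})$-equivariant for arbitrary $R^{\text{in}},R^{\text{out}}$ (its output is proportional to $\id$), is trace preserving, and has Choi operator $J^{\Phi}=\id/\dim(\HC^{\text{out}})$, which is strictly positive definite and hence lies in the interior of $K$. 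Thus $\id=L(J^{\Phi})$ is a nonzero element of the image of $L$, giving $C\geq 1$, and $J^{\Phi}$ is a point of the affine slice in the interior of $K$, so $K\cap\{L=\id\}$ contains a relatively open neighborhood of $J^{\Phi}$ and therefore has dimension exactly $\sum_q m_q^2-C$, which is the claimed parameter count.

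Finally I would pin down $C$ for completeness (though the theorem only needs $C>0$): for equivariant $J$ a short computation gives $\operatorname{Tr}_{\text{out}}[J]\in\mathfrak{comm}(R^{\text{in}*})$, while conversely $H\otimes\id_{\dim(\HC^{\text{out}})}/\dim(\HC^{\text{out}})$ is equivariant for any $H\in\mathfrak{comm}(R^{\text{in}*})$; hence the image of $L$ is exactly $\mathrm{Herm}(\HC^{\text{in}})\cap\mathfrak{comm}(R^{\text{in}*})$, so $C(R^{\text{in}},R^{\text{out}})=\sum_p (m^{\text{in}}_p)^2$ in terms of the multiplicities $m^{\text{in}}_p$ of $R^{\text{in}}$. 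I expect step (a) to be the main obstacle: without exhibiting an interior point of $K$ lying on the trace-preservation slice, one only obtains the inequality $\dim\leq\sum_q m_q^2-C$, and it is precisely the strict positive-definiteness of the depolarizing channel's Choi operator that upgrades this to equality.
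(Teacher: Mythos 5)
Your proposal is correct, and its skeleton is the one the paper uses: combine Lemma~\ref{lem:choi-equivariance} with Theorem~\ref{thm:comm} to put $J^{\phi}$ in the block form $\bigoplus_q \id_{d_q}\otimes J_q$, count $\sum_q m_q^2$ real parameters for the CP (Hermitian, block-PSD) part, and treat trace preservation as a linear constraint whose rank is the constant $C(R^{\text{in}},R^{\text{out}})$. Where you diverge is in how that constraint is handled, and your version is in two respects more complete. First, the paper computes $C$ as the rank of an explicit matrix $\widetilde{D}$ obtained by vectorizing $\operatorname{Tr}_{B}[J^{\phi}]=\id_A$ against the block parametrization, and it tacitly assumes that intersecting the PSD cone with the TP slice costs exactly $\operatorname{rank}(\widetilde{D})$ dimensions; your interior-point argument via the completely depolarizing channel (whose Choi operator $\id/\dim(\HC^{\text{out}})$ is equivariant, strictly positive, and trace preserving) is precisely what justifies that the slice is nonempty, meets the interior of the cone, and hence that positivity does not cut the dimension further — a point the paper leaves implicit; it also gives $C\geq 1$, which the paper asserts ("positive constant") but does not argue. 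Second, your identification of the image of the partial-trace map restricted to $\mathfrak{comm}(R^{\text{in}*}\otimes R^{\text{out}})$ as the Hermitian part of $\mathfrak{comm}(R^{\text{in}*})$ yields the closed form $C=\sum_p (m_p^{\text{in}})^2$, which the paper only checks in the unconstrained sanity case $W=\id$ (where it reduces to $\dim(\HC^{\text{in}})^2$, consistent with your formula). So the two proofs buy different things: the paper's $\widetilde{D}$ construction is directly computable for a given pair of representations, while your convex-geometric route closes the cone-versus-slice gap and pins down $C$ representation-theoretically.
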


\begin{proof}[Proof of Theorem~\ref{thm:paramcountchannel}] Let $\phi$ be a $(G,R^{\text{in}}, R^{\text{out}})$-equivariant channel. By Theorem \ref{thm:comm} and Lemma \ref{lem:choi-equivariance}, the Choi operator $J^\phi$ is decomposed as $J^\phi \cong \bigoplus_{q=1}^{Q} \id_{d_q} \otimes J_q^\phi$, where each $J_q^\phi$ is an operator in an $m_q$-dimensional subspace corresponding to the irrep decomposition $R:=R^{\text{in}*}\otimes R^{\text{out}}$. For convenience of notation, we will denote as $\HC_B\otimes\HC_A$  the Hilbert space over which $J^\phi$ acts. Imposing $J^\phi \geq 0$ (CP) is equivalent to imposing $J_q^\phi \geq 0$ for each irrep $q$. An $m_q$-dimensional complex positive semidefinite operator is parametrized by $m_q^2$ real scalars, for a total of $\sum_q m_q^2$ parameters. Next, we impose TP via $\operatorname{Tr}_B[J^\phi]=\id_A$, where $\id_A$ denotes the identity over $\HC_A$.

 Let the change of basis in the irrep decomposition be $W$, i.e., $J^\phi=W\ad(\bigoplus_{q=1}^{Q} \id_{d_q} \otimes J^\phi_q) W$, where we dropped the superscript $\phi$ for brevity. The TP condition reads
\begin{align}
    \operatorname{Tr}_B [J^\phi]= \id_A &= \sum_j ( \bra{j}_B \otimes \id_A ) W\ad \left( \bigoplus_{q=1}^{Q} \id_{d_q} \otimes J_q^\phi \right) W (\ket{j}_B \otimes \id_A)= \sum_j T_j\ad \left( \bigoplus_{q=1}^{Q} \id_{d_q} \otimes J_q^\phi \right) T_j,
\end{align}
where $T_j = W ( \ket{j}_B \otimes \id_A)$. 
Vectorizing the above equation,  we can use the property $\text{vec}(M_1M_2M_3)=(M_3^\top\otimes M_1)\text{vec}(M_2)$ to obtain
\begin{equation}
    D \cdot \text{vec} \left( \bigoplus_{q=1}^{Q} \id_{d_q} \otimes J_q^\phi \right) = \text{vec} (\id_A),\qquad \text{where } D := \sum_{j \in \text{dim}(\HC_B)} T_j^\top \otimes T_j\ad\in \mathbb{C}^{\text{dim}(\HC_A)^2 \times (\text{dim}(\HC_A) \text{dim}(\HC_B))^2}.
\end{equation}
Let $\widetilde{D}$ be the $\text{dim}(\HC_A)^2 \times \sum_q m_q^2$ matrix whose columns correspond to the nonzero entries in $\text{vec} \left( \bigoplus_{q=1}^{Q} \id_{d_q} \otimes J_q^\phi \right)$. Then $\text{rank}(\widetilde{D})= C(R^{\text{in}},R^{\text{out}})$.

It is readily verified that in the non-equivariant case, i.e., $W=\id$ and $J^\phi$ is fully parameterized, the matrix $\widetilde{D}=D$ is full row-rank, in which case imposing TP reduces $\text{dim}(\HC_A)^2$ free parameters as expected.
\end{proof}

\setcounter{proposition}{1}

\begin{proposition}[Insensitivity to  equivalent representations] Consider an EQNN as defined in Definition \ref{def:eqnn}. Then changing
an intermediate representation, $R^l$, to another representation equivalent to it, $VR^l V^{\dagger}$, where $V$ is a unitary, does not change the expressibility of the EQNN.
\label{prop:equivalentreps-sm}
\end{proposition}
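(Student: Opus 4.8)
The plan is to show that composing the EQNN with the unitary change of basis $V$ at the appropriate place does not alter the set of functions the model can express, because the conjugation by $V$ can be absorbed into the adjacent layers without leaving the class of equivariant layers. Concretely, suppose the layered EQNN has intermediate representation $R^l$ at stage $l$, so that $\NC^l_{\thv_l}$ is $(G,R^{l-1},R^{l})$-equivariant and $\NC^{l+1}_{\thv_{l+1}}$ is $(G,R^{l},R^{l+1})$-equivariant. Replacing $R^l$ by $\widetilde{R}^l = VR^lV\ad$ forces us to use layers that are $(G,R^{l-1},\widetilde{R}^l)$- and $(G,\widetilde{R}^l,R^{l+1})$-equivariant. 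First I would establish the elementary fact that if $\NC$ is $(G,R^{l-1},R^l)$-equivariant then $\Adj_{V}\circ\NC$ is $(G,R^{l-1},\widetilde{R}^l)$-equivariant, and symmetrically if $\MC$ is $(G,R^l,R^{l+1})$-equivariant then $\MC\circ\Adj_{V\ad}$ is $(G,\widetilde{R}^l,R^{l+1})$-equivariant. This is immediate from Definition~\ref{def:equiv} together with $\Adj_{\widetilde{R}^l(g)} = \Adj_V\circ\Adj_{R^l(g)}\circ\Adj_{V\ad}$ and the fact that $\Adj_V$ is invertible with inverse $\Adj_{V\ad}$.

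The key step is then the bijection between parametrizations. Given any choice of layers $(\NC^1_{\thv_1},\dots,\NC^L_{\thv_L})$ realizing a map with intermediate representation $R^l$, I would exhibit the layers $(\NC^1_{\thv_1},\dots,\Adj_V\circ\NC^l_{\thv_l},\ \NC^{l+1}_{\thv_{l+1}}\circ\Adj_{V\ad},\dots,\NC^L_{\thv_L})$, which by the observation above form a valid EQNN with intermediate representation $\widetilde{R}^l$, and whose overall composition is
\begin{equation}
\NC^L_{\thv_L}\circ\cdots\circ\bigl(\NC^{l+1}_{\thv_{l+1}}\circ\Adj_{V\ad}\bigr)\circ\bigl(\Adj_V\circ\NC^l_{\thv_l}\bigr)\circ\cdots\circ\NC^1_{\thv_1} = \NC^L_{\thv_L}\circ\cdots\circ\NC^1_{\thv_1},\nonumber
\end{equation}
since $\Adj_{V\ad}\circ\Adj_V = \operatorname{id}$. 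Hence every function expressible with $R^l$ is expressible with $\widetilde{R}^l$. Running the same argument in reverse (conjugating by $V\ad$ instead) gives the opposite inclusion, so the two architectures have exactly the same image in the space of CPTP maps, i.e.\ the same expressibility. If one also tracks the measurement, one checks that $\Adj_{V\ad}$ can equally be absorbed into a $(G,\widetilde{R}^{\text{out}})$-equivariant observable when $l=\text{out}$, so the claim covers the output representation too; for purely intermediate $R^l$ the measurement layer is untouched.

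I do not anticipate a serious obstacle here: the argument is essentially bookkeeping with the definition of equivariance and the group-homomorphism property $\Adj_{AB}=\Adj_A\circ\Adj_B$. The one point that requires a little care is making precise what ``expressibility'' means — I would fix it as the set $\{\NC_{\thv} : \thv\}$ of channels realizable by the architecture (equivalently, the induced set of loss functions $\ell_{\thv}$), and then the proof reduces to showing this set is invariant under the substitution. A secondary subtlety worth a sentence is that $\Adj_V\circ\NC^l_{\thv_l}$ is still a legitimate \emph{parametrized layer} of the required type — this is fine because $\Adj_V$ is itself a fixed (parameter-independent) CPTP map, so the composition is CPTP and depends on $\thv_l$ exactly as before, merely with a relabelled codomain basis. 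This completes the plan.
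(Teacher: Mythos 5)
Your proposal is correct and follows essentially the same route as the paper's proof: absorb $\Adj_V$ into the layer feeding into $R^l$ and $\Adj_{V\ad}$ into the layer leaving it, note that each modified layer remains equivariant with respect to the conjugated representation, and observe that the overall composition is unchanged since $\Adj_{V\ad}\circ\Adj_V=\operatorname{id}$, giving a bijection between the two parametrized families. Your additional remarks on the reverse inclusion and on absorbing $V$ into the measurement when $l$ is the output layer match the paper's closing comment about the bijection between observables commuting with $R^{\text{out}}$ and with $VR^{\text{out}}V\ad$.
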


\begin{proof}[Proof of Proposition~\ref{prop:equivalentreps}]
Here we show that changing a representation to another equivalent representation does not change the expressibility of the EQNN. In particular, consider two EQNNs that undergo the same representations except at one place
\begin{align}
    \mathcal{N}: \qquad R^{\text{in}} \longrightarrow \hdots \longrightarrow R^1 \overset{\NC^1}{\longrightarrow} &R \overset{\NC^2}{\longrightarrow} R^2 \longrightarrow \hdots \longrightarrow R^{\text{out}}.\\
    \mathcal{N}': \qquad R^{\text{in}} \longrightarrow \hdots \longrightarrow R^1 \overset{{\NC^1}'}{\longrightarrow} &R' \overset{{\NC^2}'}{\longrightarrow} R^2 \longrightarrow \hdots \longrightarrow R^{\text{out}},
\end{align}
where $R'=V  R V^{\dagger}$ for some unitary $V$.

Observe that the set of $(G,R_1,R)$-equivariant channels is in one-to-one correspondence to the set of $(G,R_1,R')$-equivariant channels. Indeed, for any $(G,R_1,R)$-equivariant $\NC_1$, the channel $\NC_1'= \Adj_V \circ \NC_1$ is $(G,R_1,R')$-equivariant as we have that
\begin{align}
    \NC^1 =  \Adj_R \circ \NC^1 \circ \Adj_{R_1^{\dagger}} = \Adj_R \circ (\Adj_{V^{\dagger}} \circ ({\NC^1}') \circ \Adj_{R_1^{\dagger}} \Leftrightarrow  {\NC^1}' = \Adj_{V R V^{\dagger}} \circ {\NC^1}' \circ \Adj_{R_1^{\dagger}}.
\end{align}
Similarly, we obtain ${\NC^2}'=\NC^2 \circ \Adj_{V^{\dagger}}$. Thus ${\NC^2}'\circ {\NC^1}'=\NC^2 \circ \NC^1$. A similar argument can be made for changing between equivalent output representations, in which case there is a bijection between the observables that commute with $R^{\text{out}}$ and those that commute with ${R^{\text{out}}}'$.
\end{proof}

\subsection{Deferred proofs from
Section~\ref{sec:framework}}

\begin{theorem}[Finite group equivariance]
    Given a finite group $G$ with generating set $S$, a linear map $\phi$ is $(G, R^{\text{in}}, R^{\text{out}})$-equivariant if and only if
    \begin{align}\label{eq_finite_grp_eq-sm}
       \phi \circ \Adj_{R^{\text{in}}(g)} - \Adj_{R^{\text{out}}(g)} \circ \phi=0, \quad \forall g \in S.
    \end{align}
    \label{thm:gen-set-sm}
\end{theorem}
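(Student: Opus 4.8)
The plan is to prove both implications of the biconditional. The forward direction is immediate: if $\phi$ is $(G,R^{\text{in}},R^{\text{out}})$-equivariant, then by Definition~\ref{def:equiv} the equation $\phi\circ\Adj_{R^{\text{in}}(g)}-\Adj_{R^{\text{out}}(g)}\circ\phi=0$ holds for \emph{all} $g\in G$, and in particular for all $g\in S$ since $S\subseteq G$. So the content of the theorem is the converse: equivariance on a generating set implies equivariance on the whole group.

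For the converse, I would first record the behaviour of the adjoint map under composition. Since $R^{\text{in}}$ and $R^{\text{out}}$ are representations, $R(g_1 g_2)=R(g_1)R(g_2)$, and hence $\Adj_{R(g_1 g_2)}=\Adj_{R(g_1)}\circ\Adj_{R(g_2)}$ (the adjoint action is itself a representation, as noted around Eq.~\eqref{eq:adjoint}). The strategy is then a straightforward induction on word length: every $g\in G$ can be written as $g=s_1 s_2\cdots s_k$ with each $s_i\in S$ (using that $S$ generates $G$; for a finite group one may take the $s_i$ from $S$ without needing inverses, or include inverses in $S$ if preferred). Assuming the intertwining relation $\phi\circ\Adj_{R^{\text{in}}(h)}=\Adj_{R^{\text{out}}(h)}\circ\phi$ holds for a word $h=s_1\cdots s_{k-1}$, I would compute
\begin{align}
\phi\circ\Adj_{R^{\text{in}}(hs_k)}
&=\phi\circ\Adj_{R^{\text{in}}(h)}\circ\Adj_{R^{\text{in}}(s_k)}\nonumber\\
&=\Adj_{R^{\text{out}}(h)}\circ\phi\circ\Adj_{R^{\text{in}}(s_k)}\nonumber\\
&=\Adj_{R^{\text{out}}(h)}\circ\Adj_{R^{\text{out}}(s_k)}\circ\phi
=\Adj_{R^{\text{out}}(hs_k)}\circ\phi\,,
\end{align}
where the second equality is the inductive hypothesis and the third is the hypothesis~\eqref{eq_finite_grp_eq-sm} applied to the generator $s_k\in S$. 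The base case $k=1$ is exactly~\eqref{eq_finite_grp_eq-sm}, and the case $g=e$ follows since $\Adj_{R^{\text{in}}(e)}$ and $\Adj_{R^{\text{out}}(e)}$ are both the identity superoperator. This establishes the relation for every $g\in G$, i.e.\ $\phi$ is $(G,R^{\text{in}},R^{\text{out}})$-equivariant.

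There is no real obstacle here; the only point requiring a word of care is the handling of inverses, since a ``generating set'' for a group strictly speaking should be closed under inverses (or one works with words in $S\cup S^{-1}$). For a finite group this is harmless because $g^{-1}=g^{|g|-1}$ is already a positive power of $g$, so every element is a positive word in $S$; alternatively one checks directly that if $\phi\circ\Adj_{R^{\text{in}}(s)}=\Adj_{R^{\text{out}}(s)}\circ\phi$ then, precomposing with $\Adj_{R^{\text{in}}(s)^{-1}}=\Adj_{R^{\text{in}}(s^{-1})}$ and postcomposing with $\Adj_{R^{\text{out}}(s^{-1})}$, one gets the same relation for $s^{-1}$. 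Either way the induction goes through verbatim, completing the proof.
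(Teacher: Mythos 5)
Your proof is correct and follows essentially the same route as the paper's: write an arbitrary $g\in G$ as a word in the generators and apply the generator-level intertwining relation repeatedly (the paper phrases it as conjugating $\phi$ by $\Adj_{R^{\text{out}}(h)}$ and $\Adj_{R^{\text{in}}(h)}^{\ad}$, but this is the same induction). Your explicit treatment of inverses and of the trivial forward direction is a small point of extra care the paper leaves implicit.
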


\begin{proof}[Proof of Theorem~\ref{thm:gen-set}]
    Given a finite group $G$ and a generating set $S=\{h_1,\cdots,h_{|S|}\}\subset G$, we can identify any group element $g$ with a sequence $\kappa = (\kappa_1,\kappa_2,\cdots, \kappa_N)$ where $\kappa_i \in \{1,\cdots,|S|\}$, such that $g = \prod_{i=1}^N h_{\kappa_i}$~\cite{finzi2021practical}. Assuming the equivariance condition is satisfied for the generating set, i.e., for any $h\in S$
    \begin{equation}\label{eq_proof_th_gen_set}
        \Adj_{R^{\text{out}}(h)}  \circ \phi \circ \Adj_{R^{\text{in}}(h)}\ad = \phi\,.
    \end{equation}
Then, we can readily show its also satisfied for any $g \in G$
    \begin{align}
    \left(\Adj_{R^{\text{out}}(h_{\kappa_1})}\circ \cdots \circ \Adj_{R^{\text{out}}(h_{\kappa_1})}\right)  \circ \phi \circ \left(\Adj_{R^{\text{in}}(h_{\kappa_1})}\circ \cdots \circ \Adj_{R^{\text{in}}(h_{\kappa_1})}\right)\ad = \phi\,,
    \end{align}
    where we have applied $N$ times Eq.~\eqref{eq_proof_th_gen_set}.
\end{proof}

\begin{theorem}[Lie group equivariance]\label{thm:equiv_algebra-sm}
    Given a compact Lie group $G$ with a Lie algebra $\mathfrak{g}$ generated by $s$ such that exponentiation is surjective, a linear map $\phi$ is $(G, R^{\text{in}}, R^{\text{out}})$-equivariant if and only if 
    \begin{align}
        \adj_{r^{\text{out}}(a)} \circ \phi - \phi \circ \adj_{r^{\text{in}}(a)}=0, \quad \forall a \in s,
    \end{align}
    where $r^{\text{in}},r^{\text{out}}$ are the representations of $G$ induced by $R^{\text{in}}, R^{\text{out}}$.
    \label{thm:lie-alg-sm}
\end{theorem}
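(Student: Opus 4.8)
The plan is to reduce the Lie-group equivariance condition to an infinitesimal (Lie-algebra) condition, and then to bootstrap from a generating set of the algebra to the whole algebra using the Jacobi identity. First I would establish the easy direction: if $\phi$ is $(G,R^{\text{in}},R^{\text{out}})$-equivariant, then differentiating the identity $\Adj_{R^{\text{out}}(e^{ta})}\circ\phi = \phi\circ\Adj_{R^{\text{in}}(e^{ta})}$ at $t=0$ yields $\adj_{r^{\text{out}}(a)}\circ\phi = \phi\circ\adj_{r^{\text{in}}(a)}$ for every $a\in\mathfrak g$, hence in particular for $a\in s$. This uses that the adjoint representation of $G$ on $\mathcal B(\mathcal H)$ has differential $\adj$, i.e. $\tfrac{d}{dt}\big|_{t=0}\Adj_{R(e^{ta})}(\cdot) = [r(a),\cdot] = \adj_{r(a)}(\cdot)$, which is noted in the preliminaries.

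For the converse, suppose $\adj_{r^{\text{out}}(a)}\circ\phi - \phi\circ\adj_{r^{\text{in}}(a)}=0$ for all $a\in s$. Define the linear subspace
\begin{equation}
    \mathfrak h = \{\, a\in\mathfrak g \;:\; \adj_{r^{\text{out}}(a)}\circ\phi - \phi\circ\adj_{r^{\text{in}}(a)}=0 \,\}.
\end{equation}
By hypothesis $s\subseteq\mathfrak h$. The key step is to show $\mathfrak h$ is a Lie subalgebra, i.e. closed under commutators; then since $\langle s\rangle_{\rm Lie}=\mathfrak g$ we get $\mathfrak h=\mathfrak g$. To prove closure, take $a,b\in\mathfrak h$ and compute $\adj_{r^{\text{out}}([a,b])}\circ\phi = \adj_{[r^{\text{out}}(a),r^{\text{out}}(b)]}\circ\phi$ (using that $r^{\text{out}}$ is a Lie algebra homomorphism) $= (\adj_{r^{\text{out}}(a)}\adj_{r^{\text{out}}(b)} - \adj_{r^{\text{out}}(b)}\adj_{r^{\text{out}}(a)})\circ\phi$, where the last identity is just the fact that $\adj$ itself is a Lie algebra homomorphism on operators (the Jacobi identity). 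Now push $\phi$ through each factor using $a,b\in\mathfrak h$: $\adj_{r^{\text{out}}(a)}\adj_{r^{\text{out}}(b)}\circ\phi = \adj_{r^{\text{out}}(a)}\circ\phi\circ\adj_{r^{\text{in}}(b)} = \phi\circ\adj_{r^{\text{in}}(a)}\circ\adj_{r^{\text{in}}(b)}$, and similarly for the other term, so that $\adj_{r^{\text{out}}([a,b])}\circ\phi = \phi\circ(\adj_{r^{\text{in}}(a)}\adj_{r^{\text{in}}(b)} - \adj_{r^{\text{in}}(b)}\adj_{r^{\text{in}}(a)})=\phi\circ\adj_{r^{\text{in}}([a,b])}$, i.e. $[a,b]\in\mathfrak h$. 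Linearity of $\mathfrak h$ is immediate from linearity of $\adj$ and $r^{\text{in/out}}$, so $\mathfrak h=\mathfrak g$.

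It remains to exponentiate: having $\adj_{r^{\text{out}}(a)}\circ\phi = \phi\circ\adj_{r^{\text{in}}(a)}$ for all $a\in\mathfrak g$, I would exponentiate this intertwining relation. For fixed $a$, define $F(t) = \Adj_{R^{\text{out}}(e^{ta})}^{-1}\circ\phi\circ\Adj_{R^{\text{in}}(e^{ta})}$; differentiating gives $F'(t)=0$ (the infinitesimal relation exactly cancels the two boundary terms), so $F(t)=F(0)=\phi$, hence $\Adj_{R^{\text{out}}(e^{ta})}\circ\phi = \phi\circ\Adj_{R^{\text{in}}(e^{ta})}$. By the surjectivity of $\exp:\mathfrak g\to G$ (the standing hypothesis; for compact connected $G$ this holds), every $g\in G$ is of the form $e^{a}$, giving equivariance for all $g\in G$. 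The main obstacle is the bookkeeping in the closure-under-bracket step — one must be careful that $\adj$ is being used in two roles (as the algebra action on operators, and as a Lie homomorphism satisfying Jacobi) — but this is a finite algebraic manipulation rather than a genuine difficulty; the exponential-map surjectivity hypothesis is what lets us avoid the connectedness/covering subtleties, and the remark in the text about relaxing it with extra constraints is exactly the statement that if $\exp$ is not surjective one additionally imposes equivariance on a set of representatives of $G/G_0$.
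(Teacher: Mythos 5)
Your proposal is correct and follows essentially the same route as the paper's proof: differentiate the group-level identity for the forward direction, extend the algebra-level intertwining relation from the generating set to all of $\mathfrak g$ via the commutator computation (your subalgebra $\mathfrak h$ packaging is just a cleaner phrasing of the paper's recursive nested-commutator argument), and then exponentiate using surjectivity of $\exp$. The only cosmetic difference is in the last step, where you show constancy of $F(t)=\Adj_{R^{\text{out}}(e^{ta})}^{-1}\circ\phi\circ\Adj_{R^{\text{in}}(e^{ta})}$ rather than invoking $A\circ\phi=\phi\circ B\Rightarrow e^{A}\circ\phi=\phi\circ e^{B}$ directly, as the paper does.
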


\begin{proof}[Proof of Theorem~\ref{thm:lie-alg}]
    Let us start by recalling that for any element $a \in \mathfrak{g}$ there is a corresponding $e^a \in G$. If $\phi$ is $(G, R^{\text{in}}, R^{\text{out}})$-equivariant, then $\Adj_{R^{\text{out}}(g)} \circ \phi = \phi \circ \Adj_{R^{\text{in}}(g)}, \ \forall g \in G$. Differentiating this expression yields that $\adj_{r^{\text{out}}(a)} \circ \phi = \phi \circ \adj_{r^\text{in}(a)}$, where $g = e^a$. Since this holds for any element $a$ in $\mathfrak{g}$, it must hold for the generating set of $\mathfrak{g}$.
    
    We now prove the other direction by first showing that $\adj_{r^{\text{out}}(a)} \circ \phi = \phi \circ \adj_{r^{\text{in}}(a)}, \ \forall a \in \mathfrak{g}$ assuming this relation holds for the generating set $s=\{a_1,\hdots,a_{|s|}\}$ of $\mathfrak{g}$. First we consider the element $[a_i, a_j]$ in $\mathfrak{g}$ as follows
    \begin{align*}
        \adj_{r^{\text{out}}([a_i,a_j])} \circ \phi
        &= \adj_{r^{\text{out}}(a_i)}\circ \adj_{r^{\text{out}}(a_j)} \circ \phi - \adj_{r^{\text{out}}(a_j)}\circ \adj_{r^{\text{out}}(a_i)} \circ \phi\\
        &= \phi \circ \adj_{r^{\text{in}}(a_i)} \circ \adj_{r^{\text{in}}(a_j)} - \phi \circ \adj_{r^{\text{in}}(a_j)} \circ \adj_{r^{\text{in}}(a_i)}\\
        &= \phi \circ \adj_{r^{\text{in}}([a_i,a_j])}.
    \end{align*}

    Recursively applying the above calculation we find that $\adj_{r^{\text{out}}(a')}\circ \phi = \phi \circ \adj_{r^{\text{in}}(a')} $ for any nested commutator $a'=[a_{i_1},[a_{i_2}, \ldots]]$. Hence, $\adj_{r^{\text{out}}(a)} \circ \phi = \phi \circ \adj_{r^{\text{in}}(a)}, \ \forall a \in \mathfrak{g}$ since $s$ generates $\mathfrak{g}$. It follows that $e^{\adj_{r^{\text{out}}(a)}}\circ \phi  = \phi \circ e^{\adj_{r^{\text{in}}(a)}}$. Since exponentiation is surjective, for all $g \in G$ there is a corresponding $a \in \mathfrak{g}$ such that $g = e^a$ and accordingly $e^{\adj_{r(a)}}=\Adj_{R(g)}$. Therefore, $\Adj_{R^{\text{out}}(g)} \circ \phi = \phi \circ  \Adj_{R^{\text{in}}(g)}, \ \forall g \in G$.
\end{proof}

\subsection{Relaxing the assumption of surjectivity}
In Theorem~\ref{thm:equiv_algebra} we assumed that the exponentiation map $e: \mathfrak{g} \rightarrow G$ between the Lie algebra and group was surjective. This let us work interchangeably at both algebra or group level. However, even when the surjectivity assumption is relaxed (i.e., when the exponential of the algebra generates only some subset of the Lie group) there may still be a finite set of generators such that equivariance with respect to this set implies equivariance with the entire group. This happens since the exponential map takes a Lie algebra to a connected component of the Lie group (that containing the identity), but Lie groups can have multiple connected components. In cases where there are finitely many connected components and the quotient of the Lie group with a finite subgroup $H$ yields a connected Lie group $G/H$, the theory does not require much adjustment. When this holds, one can write the group as
\begin{align}
    G = e^\mathfrak{g}H.
\end{align}
That is, for all $g\in G$, there exists $a \in \mathfrak{g}$ and $h \in H$ such that  $g = e^a h$. Ref.~\cite{finzi2021practical} achieves Lie group equivariance by imposing the constraint over a basis of $\mf{g}$ and a generating set for $H$. We here leverage Theorem~\ref{thm:equiv_algebra} to reduce the number of constraints by imposing them over the connected component generated by $\mf{g}$ to a generating set of it.
\begin{theorem}
    Consider a Lie group G with a Lie algebra $\mathfrak{g}$ generated by $s = \{a_1, \ldots, a_{|s|}\}$ such that $G = e^\mathfrak{g}H$ where $H$ is a subgroup generated by $S = \{ h_1,\ldots,h_{|S|}\}$. Then a linear map $\phi$ is $(G,R^\text{in},R^\text{out})-$equivariant if and only if
    \begin{align}
    \adj_{r^\text{out}(a_i)} \circ \phi = \phi \circ \adj_{r^\text{in}(a_i)}\, \ \forall a_i \in s,\\
        \Adj_{R^\text{out}(h_i)} \circ \phi = \phi \circ \Adj_{R^\text{in}(h_i)},\ \forall h_i \in S.
    \end{align}
\end{theorem}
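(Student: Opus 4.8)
The plan is to combine the already-established Lie-algebra result (Theorem~\ref{thm:equiv_algebra}) with the finite-group argument (Theorem~\ref{thm:gen-set}) to handle the decomposition $G = e^{\mathfrak{g}}H$. The forward direction is immediate: if $\phi$ is $(G,R^\text{in},R^\text{out})$-equivariant, then it is in particular equivariant with respect to all $e^{a}$ for $a\in\mathfrak{g}$ and all $h\in H$; differentiating the first family of relations gives the infinitesimal constraints $\adj_{r^\text{out}(a_i)}\circ\phi = \phi\circ\adj_{r^\text{in}(a_i)}$ on the generators $a_i$, and restricting the second family to the generators $h_i$ of $H$ gives the stated group-level constraints.

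For the converse, I would proceed in two steps. First, using exactly the recursive nested-commutator computation from the proof of Theorem~\ref{thm:lie-alg}, the constraints on the generating set $s$ of $\mathfrak{g}$ propagate to all of $\mathfrak{g}$: $\adj_{r^\text{out}(a)}\circ\phi = \phi\circ\adj_{r^\text{in}(a)}$ for every $a\in\mathfrak{g}$, whence by exponentiating (now legitimate since we only need $e^{\adj_{r(a)}} = \Adj_{R(e^a)}$ on the connected component, which always holds) we get $\Adj_{R^\text{out}(e^a)}\circ\phi = \phi\circ\Adj_{R^\text{in}(e^a)}$ for all $a\in\mathfrak{g}$. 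Second, using the closure-under-composition argument from the proof of Theorem~\ref{thm:gen-set}, the constraints on the generating set $S$ of $H$ propagate to all $h\in H$: $\Adj_{R^\text{out}(h)}\circ\phi = \phi\circ\Adj_{R^\text{in}(h)}$ for every $h\in H$.

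Finally I would glue the two pieces together. Any $g\in G$ can be written $g = e^{a}h$ with $a\in\mathfrak{g}$, $h\in H$. Then, using the homomorphism property $R^\text{out}(g) = R^\text{out}(e^a)R^\text{out}(h)$ (and likewise for $R^\text{in}$), so that $\Adj_{R^\text{out}(g)} = \Adj_{R^\text{out}(e^a)}\circ\Adj_{R^\text{out}(h)}$, we compute
\begin{align}
\Adj_{R^\text{out}(g)}\circ\phi
&= \Adj_{R^\text{out}(e^a)}\circ\Adj_{R^\text{out}(h)}\circ\phi \notag\\
&= \Adj_{R^\text{out}(e^a)}\circ\phi\circ\Adj_{R^\text{in}(h)} \notag\\
&= \phi\circ\Adj_{R^\text{in}(e^a)}\circ\Adj_{R^\text{in}(h)}
= \phi\circ\Adj_{R^\text{in}(g)}\,,
\end{align}
which is the desired $(G,R^\text{in},R^\text{out})$-equivariance.

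I do not anticipate a serious obstacle here: both ingredients are already proven in the excerpt, and the only new element is the factorization $g = e^a h$, which is given by hypothesis. The one point requiring a little care is making sure the exponential-to-$\Adj$ passage is invoked only on the identity component (where it is unconditionally valid), rather than appealing to global surjectivity of $e:\mathfrak{g}\to G$ — that is precisely the subtlety this theorem is designed to sidestep. A secondary bookkeeping point is confirming that $H$ being a finite (or at least finitely generated) subgroup is what licenses the Theorem~\ref{thm:gen-set}-style closure argument; if $H$ is only assumed finitely generated the same proof goes through verbatim.
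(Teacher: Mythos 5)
Your proof is correct and follows essentially the same route as the paper's: the forward direction by differentiation/restriction, and the converse by propagating the algebra constraints to all of $\mathfrak{g}$ (via the nested-commutator argument of Theorem~\ref{thm:lie-alg}), exponentiating to get equivariance on $e^{\mathfrak{g}}$, extending the generator constraints to all of $H$, and gluing via the factorization $g = e^a h$ with exactly the same chain of equalities. Your explicit remarks on restricting the exponential-to-adjoint step to the identity component and on invoking the Theorem~\ref{thm:gen-set}-style closure for $H$ are points the paper leaves implicit, but they do not change the argument.
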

\begin{proof}
 As shown in the proof of Theorem~\ref{thm:equiv_algebra}, if $\phi$ is equivariant then it commutes with the representation on the algebra level. Further, it also clearly commutes with the elements of $H$.
 The other direction also follows similarly from the proof of Theorem~\ref{thm:equiv_algebra}. From there we have that $\Adj_{R^{\text{out}}(g)} \circ \phi = \phi \circ  \Adj_{R^{\text{in}}(g)}, \ \forall g \in e^\mathfrak{g}$. Then by the assuming that $\phi$ commutes on $H$ and $G = e^\mathfrak{g} H$, we have that, for any $g=e^a h$,
 \begin{align}
     \Adj_{R^\text{out}(g)} \circ \phi & = \Adj_{R^\text{out}(e^a h)} \circ \phi\nonumber\\
     & = \Adj_{R^\text{out}(e^a)} \circ \Adj_{R^\text{out}(h)} \circ \phi\nonumber\\
     & = \phi \circ \Adj_{R^\text{in}(e^a)} \circ \Adj_{R^\text{in}(h)}\nonumber\\
     & = \phi \circ \Adj_{R^\text{in}(g)}. \qedhere
 \end{align}
\end{proof}

As an example, consider $\mbb{SO}(3)$ and $\mbb{O}(3)$. Exponentiation of $\mf{so}(3)$ is surjective on $\mbb{SO}(3)$ and thus we can take $H=\{\id_3\}$ (the trivial group). However, $\mbb{O}(3)$ consists of two connected components corresponding to $\pm 1$ determinant. Due to determinant being multilinear, one can map between these components with the subgroup $H = \{\id_3, (-1)\oplus \id_2 \}$.

\section{Advanced twirling methods}
\label{app:advanced_twirling}

In the main text we have discussed how to obtain equivariant maps via twirling. Here we discuss several advanced techniques to implement the twirling operator. First, we will show how the twirl of an operator or a map can be obtained via the Weingarten calculus. Next, we will showcase two methods for in-circuit twirling, and one for approximate twirling.

\subsection{Weingarten calculus}

The Weingarten calculus~\cite{collins2006integration, puchala2017symbolic,collins2022weingarten} is an extremely powerful tool that can be used to find the exact expression of the twirl of an operator over a group. At its core, the Weingarten calculus leverages the key property that twirling is equivalent to projecting into the commutant. Hence, if the commutant of the representation is well known (for instance via the Schur-Weyl duality~ \cite{goodman2009symmetry,koike1989decomposition, grinko2022linear}) then one can analytically find an expression for the twirled operator in terms of its components over a basis of the commutant. In what follows we exemplify the Weingarten calculus for twirling an operator when $R^{\text{in}}=R^{\text{out}}=R$. For  a channel, one can use the techniques here presented by finding the components of the Choi operator in the commutant of the   ${R^{\text{in}}}^* \otimes R^{\text{out}}$, which can be identified using the \emph{mixed} Schur-Weyl duality~\cite{koike1989decomposition, grinko2022linear}.

Given an operator $X$, a group $G$ and a representation $R$, we already know that the twirl is a projection over the commutant. That is, 
\begin{equation}\label{eq:twirled_X_comm}
    \TC_{G}[X]=\int_{G}d\mu(g) R(g) X R(g)\ad=\sum_{i=1}^{\dim(\mf{comm}(R))} c_\mu(X) P_i\,, \quad \text{with} \quad P_i \in \text{basis}(\mf{comm}(R))\,,
\end{equation}
Hence, in order to solve Eq.~\eqref{eq:twirled_X_comm} one needs to determine the $\dim(\mf{comm}(R))$ unknown coefficients $\{c_i\}_{i=1}^{\dim(\mf{comm}(R))}$. The previous can be achieved by finding $\dim(\mf{comm}(R))$ such equations and, and solving a linear system problem. 
In particular, note that changing $X\rightarrow X P_j$ for some $P_j\in \text{basis}(\mf{comm}(R))$ leads to 
\begin{align}\label{eq:twirled_X_comm_2}
    \TC_{G}[X P_j ]&=\int_{G}d\mu(g) R(g) X P_j R(g)\ad=\int_{G}d\mu(g) R(g) X  R(g)\ad P_j=\sum_{i=1}^{\dim(\mf{comm}(R))} c_i(X) P_i P_j\,,
\end{align}
where we have used the fact that $P_j$ commutes with all $R(g)$. Then, taking the trace on both sides leads to 
\begin{align}\label{eq:LSP}
\Tr[XP_i]= \Tr[\mathcal{T}_G[XP_i]] = \sum_{i=1}^{\dim(\mf{comm}(R))}c_i(X)\Tr[P_iP_j] \,.
\end{align}
Repeating Eq.~\eqref{eq:LSP} for all $P_j$'s in $ \text{basis}(\mf{comm}(R))$ leads to $\dim(\mf{comm}(R))$ equations. Thus, one can find the vector of unknown coefficients $\vec{c}(X)=(c_1(X),\ldots,c_{\dim(\mf{comm}(R))}(X))$ by solving $A\cdot \vec{c}(X)=\vec{b}(X)$, where $\vec{b}(X)=(\Tr[X P_1],\ldots, \Tr[X P_{\dim(\mf{comm}(R))}])$. Here,  $A$ is a the so-called Gram matrix, a $\dim(\mf{comm}(R))\times \dim(\mf{comm}(R))$ symmetric matrix with entries $[A]_{ij}=\Tr[P_i P_j]$. One can then solve the linear system problem by inverting the  Gram matrix, as $\vec{c}(X)=A^{-1}\cdot\vec{b}(X)$. The matrix $A^{-1}$ is known as the Weingarten matrix.

\subsection{In-circuit twirling with ancillas or classical randomness}
While we usually find the set of equivariant channels analytically, in the cases of small finite groups, we note one could perform the twirling directly on the quantum circuit using the following unitaries
\begin{equation}
\begin{split}
    &\mathbf{U}^{\text{in}} = \sum_{g \in G} \ket{g}\bra{g} \otimes R^{\text{in}}(g)^{\dagger}\,\\
    &\mathbf{U}^{\text{out}} = \sum_{g \in G} \ket{g}\bra{g} \otimes R^{\text{out}}(g)\,,
\end{split}
\end{equation}
along with $a=\log_2 |G|$ ancilla qubits initialized to the uniform superposition state $H^{\otimes a}\ket{0}$. That is, the in-circuit twirling of a $n$-to-$m$-qubit channel $\phi$ can be realized via the following circuit.
\begin{equation*}
\begin{quantikz}
\lstick[wires=1]{$\ket{0}$} & \gate{H} \qwbundle{a} & \gate[2,disable auto height]{\mathbf{U}^{\text{in}}} & \qw &   \qwbundle{a} & \qw  & \gate[2,disable auto height]{\mathbf{U}^{\text{out}}} & \trash{\text{discard}} \\ &
\lstick[wires=1]{$\rho$} & \qwbundle{n} & \qwbundle{n} & \gate{\phi} & \qwbundle{m} & \qw & \qwbundle{m} \\
\end{quantikz}
\end{equation*}
It can be readily verified that this circuit performs the twirling formula in Eq.~\eqref{eq:twirl-finite}. With this, $\phi$ can be any parametrized channel native to the circuit platform.

Alternatively, the ancilla qubits can be replaced by classical randomness. That is, we classically sample a group element $g$ then apply ${R^{\text{in}}}^{\dagger}(g)$ and $R^{\text{out}}(g)$ as follows.
\begin{equation*}
\begin{quantikz}
    & \gate[style=circle,nwires={1}]{g}\vcw{1} & & &  &\gate[style=circle,nwires={1}]{g}\vcw{1} & \\
    \lstick[wires=1]{$\rho$} &  \gate{{R^{\text{in}}}^{\dagger}(g)} & \qwbundle{n} & \gate{\phi} & \qwbundle{m} & \gate{R^{\text{out}}(g)}&\qw 
\end{quantikz}
\end{equation*}
The latter method can be favorable on near-term devices. Furthermore, the Hoeffding's bound implies that only $O(\log|G|)$ classical samples are needed to achieve a good approximate of the twirled channel.

One disadvantage of in-circuit twirling, however, is that albeit ensuring equivariance we lose the parameter count reduction as in the case when we first compute equivariant channels analytically before parametrization.

\subsection{Recursive approximate twirling}
For Lie groups with more intricate representation theory, computing the twirling formula can quickly become complex and difficult. Instead, \cite{toth2007efficient} provided an algorithm for approximating twirling operators that converges \emph{exponentially} fast in the number of Haar-random samples of group elements. While the authors did not mention this, their proofs do not rest upon any assumptions beyond that the representations they consider are unitary representations of compact Lie groups and the input of the twirling formula is self-adjoint. Their algorithm can be applied to our case when one can efficiently sample from the Haar measure and is summarized in Algorithm~\ref{alg:twirl}. This approximate twirling algorithm can also be implemented in-circuit using classical randomness, similarly to what we saw earlier in the case of finite groups.

\begin{algorithm}
\caption{Recursive approximate twirling \cite{toth2007efficient}}\label{alg:twirl}
\begin{algorithmic}
\Require Group $G$, unitary representations $R^{\text{in}}$, $R^{\text{out}}$, map $\phi$, tolerance $\varepsilon >0$, $\delta >0$.
\Ensure Output map $\widetilde{\phi}$ such that $\|\widetilde{\phi} - \mathcal{T}_G[\phi]\|_{HS} < \varepsilon$ with probability at least $1 -\delta$. \Comment{HS: Hilbert-Schmidt norm}
\State $\widetilde{\phi} \gets \phi$
\For{$k \gets 1$ to $K = O( \log \frac{1}{\delta\varepsilon})$}
    \State Draw random $g \in G$
    \State $\widetilde{\phi} \gets \frac{1}{2} \left( \widetilde{\phi} + R^{\text{out}}(g) \circ \widetilde{\phi} \circ {R^{\text{in}}}^{\dagger}(g) \right)$
\EndFor
\end{algorithmic}
\end{algorithm}

\section{Implementing and optimizing equivariant channels}\label{app:compile}
\subsection{Channel compiling}
In the process of creating equivariant QNNs we consider not just equivariant unitaries, but also more general quantum channels. Via the Stinespring dilation theorem any channel $\phi: \mathcal{B}(\mathcal{H}_A) \rightarrow \mathcal{B}(\mathcal{H}_B)$ can be represented as a unitary operation on a larger space:
\begin{align}
    \phi(\rho) = \Tr_E[U (\rho \otimes \ket{e}\bra{e}) U^\dagger],
\end{align}
where $\ket{e}$ is a fixed reference state on the environment $E$. The size of this environment system is directly related to the Kraus rank of $\phi$. Recall that a quantum channel can be written as 
\begin{align}
    \phi(\rho) = \sum_i K_i \rho K_i^\dagger,
\end{align}
where we say $\{K_i\}$ are the channel's Kraus operators. Note that the spectral decomposition of the Choi operator yields one possible set of Kraus operators. At most $\dim(\mathcal{H}_A)\dim(\mathcal{H}_B)$ Kraus operators are necessary to represent any channel \cite{wilde2013quantum}. One can then define a unitary with action
\begin{align}
    U^\phi = \sum_i K_i \otimes \ket{i}\bra{e} + U',
\end{align}
where $U'$ is some arbitrary operator that completes $U$ to be unitary. Thus, to represent a channel with $k$ Kraus operators, an environment of dimension at least $k$ suffices. Working with qubit systems, if the input space is $n$ qubits and the output $m$, then the maximum Kraus rank is $2^{m+n}$, which may require an environment of up to $m+n$ qubits.

We will not go into great detail on how to perform this circuit compilation, but rather refer the interested reader to the considerable body of work on compilation. For example, a software package for this decomposition with nearly optimal CNOT-count can be found in \cite{iten2019Introduction} with corresponding theory in \cite{iten2016Quantum, iten2017Quantum}. For more general works on circuit compiling we direct the reader to \cite{khatri2019quantum, sharma2019noise, cincio2021machine, cincio2018learning, caro2021generalization, chong2017programming, haner2018software}.

\subsection{Converting vectorized channels to Choi operators}\label{app:transfer_to_choi}
In solving the nullspace for equivariant maps, we work with vectorized channels. That is $\phi \mapsto \overline{\phi} = \sum_{i,j}\phi_{i,j}\ket{P_i}\rangle\langle\bra{P_j},$ where $P_j$ and $P_i$ are Pauli strings. In some references this is refered to as a transfer matrix. The canonical construction of transfer matrices is as follows. Consider the Kraus operator form of a channel, $\phi(\rho) = \sum_i K_i \rho K_i^\dagger$. The vectorization map $\text{vec}(X)$ takes a matrix $X$ to a vector through the mapping
\begin{align}
    \text{vec}(\ket{i}\bra{j}) = \ket{i}\otimes\ket{j}\,,
\end{align}
with linear extension. Then, using the identity $\text{vec}(AXB)=(B^T\otimes A)\text{vec}(X)$, one can write a matrix representation of $\phi$ as 
\begin{align}
    \overline{\phi} = \sum_i K_i^* \otimes K_i\,.
\end{align}
This is the transfer matrix. Further, the transfer matrix can be directly mapped to the Choi operator via
\begin{align}
    \tau^\mathcal{N} = N^\Gamma\,,
\end{align}
where $\Gamma$ is an involution map such that
\begin{align}
    \bra{i,j} \overline{\phi} \ket{k,l} = \bra{l,j}\overline{\phi}^\Gamma\ket{k,i}\,.
\end{align}

For us to apply this identity, we need convert from the transfer matrix in terms of Pauli strings to that of the computational basis. As any $\ket{i}\bra{j}$ can be written as a sum over Pauli strings, there is some change of basis $U$ from Pauli strings to the computational operator basis. Explicitly, the columns of $U$ will be vectors corresponding to the expansion of Pauli strings in the computational basis. If we obtain a matrix in the Pauli basis, $X$, we can then write it in the computational basis via $UAV^{-1}$, where $U$ is the change of basis on the output space and $V$ is that on the input space. Further, if we take Pauli strings to be normalized such that $\Tr[P_iP_j]=\delta_{i,j}$, then $U$ and $V$ can be assumed to be unitary matrices. In this case,

\begin{align}
    J^\phi = (U \overline{\phi} V^\dagger)^\Gamma\,.
\end{align}

\subsection{Optimizing equivariant channels}\label{sec:trainchannels}
We now provide a strategy for optimizing $n$-to-$m$ qubit equivariant channels. Assume we have found a basis $\{\phi_i\}$ for such channels via methods outlined in this work, equivariant channels in the span of this basis can then be written as
\begin{equation}
    \phi_{\Vec{x}}[\cdot]= \sum_{i=1} x_i \phi_i[\cdot] + \frac{\operatorname{Tr}[\cdot]}{2^m}\id.
\end{equation}
The coefficients must be constrained so that $ \phi_{\Vec{x}}$ is CPTP. For convenience we have fixed $\phi(\rho) = \frac{\Tr[\rho]}{2^m}\id$ as one of the basis elements. Without loss of generality we can then take all $\phi_i$ to be trace annihilating, i.e., $\Tr[\phi_i(\rho)]=0$. Note that depending on methods used to find these maps, we can bake in CP or TP or even both.
Given a valid set of parameters for this pooling layer, we can \emph{classically} solve the following circuit compiling problem as described before. 
\begin{equation}
\begin{quantikz}
\lstick[wires=1]{$\ket{0}$} & \qwbundle{m+n} & \gate[3,disable auto height]{U(\theta)} &  \\
\lstick[wires=2]{$\rho$} &\qwbundle{n-m} &\qw & &  \\
&\qwbundle{m} & \qw  & \qw & \qwbundle{m}
\end{quantikz}=
\begin{quantikz}
\lstick[wires=2]{$\rho$} \qw &\gate[2,disable auto height]{\phi_{\Vec{x}}} &  \\
\qw & & \qw & \qw
\end{quantikz}
\end{equation}
The top $2n$ qubits are discarded. $U(\theta)$ is a general $(2n+m)$-qubit unitary. 
Note that $m+n$ is an upperbound on the number of ancilla qubits needed, but depending on the ranks of the basis channels we could potentially need fewer qubits. After solving this classical circuit compilation problem, one can then implement it on a real quantum circuit.

Now that we have a way to implement and parametrize equivariant channels, we can train them via two approaches:
\begin{enumerate}[label=(\roman*)]
    \item Projected gradient descent (GD) in the circuit parameters $\thv$ space (parameter-shift rule works),
    \item Projected GD in the classical variables $\Vec{x}$ space.
\end{enumerate}
For more details on projected GD, we refer the reader to \cite{trefethen1997numerical}. Other constrained optimizers can be used as well.

In the first approach, we treat $x_i$ as functions of $\thv$ and train $\thv$ via the projected GD algorithm, i.e., perform the following in each training step:
\begin{equation}
\begin{aligned}
    \theta^{k+\frac{1}{2}} &= \theta_k - \eta \nabla_{\thv} \mathcal{L}  \qquad \text{(regular GD)}\\
    \theta^{k+1} &= \min_{\thv'} \|\theta' - \theta^{k+\frac{1}{2}} \| \qquad \text{(projection)} \\
    & \text{subject to: } \phi_{\Vec{x}(\thv')} \text{ is CPTP}.
\end{aligned}
\end{equation}
In the second approach, we treat $\thv$ as functions of $x_i$ and train $x_i$. In this case, however, we cannot use finite different methods to compute the derivatives with respect to these parameters.
\begin{equation}
\begin{aligned}
    x^{k+\frac{1}{2}} &= x^k - \eta \nabla_{x} \mathcal{L}  \qquad \text{(regular GD)}\\
    x^{k+1} &= \min_{x'} \|x' - x^{k+\frac{1}{2}} \| \qquad \text{(projection)} \\
    & \text{subject to: } \phi_{\Vec{x}} \text{ is CPTP}.
\end{aligned}
\end{equation}

One might also want to recycle the ancilla qubits. This is possible if we can replace the partial trace operation by a measurement on the ancillas followed by a controlled unitary (on the possible outcomes). This requires the channel to be unital, which is the case if the output representation is irreducible.

\begin{lemma} Equivariant channels whose output representation is irreducible are unital.
\end{lemma}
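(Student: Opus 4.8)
The plan is to show that an equivariant channel $\phi$ whose output representation $R^{\text{out}}$ is irreducible must send the maximally mixed input state to the maximally mixed output state. First I would recall that $\phi$ being $(G,R^{\text{in}},R^{\text{out}})$-equivariant means $\phi \circ \Adj_{R^{\text{in}}(g)} = \Adj_{R^{\text{out}}(g)} \circ \phi$ for all $g\in G$. Apply this to the input state $\rho = \id_{\dim(\HC^{\text{in}})}/\dim(\HC^{\text{in}})$, which is itself invariant under $\Adj_{R^{\text{in}}(g)}$ since $R^{\text{in}}(g)\id R^{\text{in}}(g)\ad = \id$. Then equivariance forces $\Adj_{R^{\text{out}}(g)}(\phi(\id/\dim)) = \phi(\id/\dim)$, i.e., the operator $\sigma := \phi(\id_{\dim(\HC^{\text{in}})})$ commutes with every $R^{\text{out}}(g)$, so $\sigma \in \mf{comm}(R^{\text{out}})$.

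The key step is then invoking Schur's lemma: since $R^{\text{out}}$ is irreducible, its commutant is trivial, $\mf{comm}(R^{\text{out}}) = \mbb{C}\,\id_{\dim(\HC^{\text{out}})}$. Hence $\sigma = c\,\id_{\dim(\HC^{\text{out}})}$ for some scalar $c$. To fix $c$, I would use trace preservation: $\Tr[\sigma] = \Tr[\phi(\id_{\dim(\HC^{\text{in}})})] = \Tr[\id_{\dim(\HC^{\text{in}})}] = \dim(\HC^{\text{in}})$. But this gives $c\,\dim(\HC^{\text{out}}) = \dim(\HC^{\text{in}})$, so one must be a little careful about normalization — the cleanest statement is obtained by feeding in the normalized maximally mixed state $\id_{\dim(\HC^{\text{in}})}/\dim(\HC^{\text{in}})$, whence $\phi(\id/\dim(\HC^{\text{in}})) = \id_{\dim(\HC^{\text{out}})}/\dim(\HC^{\text{out}})$ by the same scalar-plus-trace argument, which is precisely the definition of a unital channel (noting that "unital" here is naturally read in the dimension-adjusted sense when input and output dimensions differ; when $\dim(\HC^{\text{in}}) = \dim(\HC^{\text{out}})$ it is unitality in the strict sense).

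I do not anticipate a serious obstacle here; the only subtlety worth flagging explicitly is the normalization mismatch between $\HC^{\text{in}}$ and $\HC^{\text{out}}$, so the statement should be phrased as "$\phi$ maps the maximally mixed state to the maximally mixed state." A secondary point to state carefully is that Schur's lemma as used requires the representation to be a genuine unitary irrep on a complex Hilbert space, which is guaranteed by the standing assumptions in the paper (compact $G$, unitary representations on complex Hilbert spaces). With those in place the proof is three lines: (i) maximally mixed input is $R^{\text{in}}$-invariant; (ii) equivariance $\Rightarrow$ its image lies in $\mf{comm}(R^{\text{out}})$; (iii) irreducibility $+$ trace preservation $\Rightarrow$ the image is maximally mixed.
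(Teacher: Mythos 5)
Your proof follows essentially the same route as the paper's: substitute the identity (maximally mixed) input, observe via equivariance that its image lies in $\mf{comm}(R^{\text{out}})$, and invoke Schur's lemma. You are in fact slightly more careful than the paper, which writes $\varphi(\id^{\otimes n})=\id^{\otimes m}$ without addressing the trace normalization when input and output dimensions differ — the dimension-adjusted reading you flag is the correct one.
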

\begin{proof} We require that $\varphi(R^{\text{in}}(g) \rho R^{\text{in}}(g)^{\dagger}) = R^{\text{out}}(g) \varphi(\rho) R^{\text{out}}(g)^{\dagger}$ for any $g \in G$. Substituting $\rho=\id^{\otimes n}$ and applying the Schur's lemma we find $\varphi(\id^{\otimes n})=\id^{\otimes m}$.
\end{proof}
\section{$\mbb{SU}(2)$-equivariant 2-to-1-qubit and 1-to-2-qubit channels}\label{app:su2}
\subsection{From equivariant maps to channels}
Let us first define the five $(\mbb{SU}(2), U^{\otimes 2}, U)-$equivariant linear maps, which we denote as 
\begin{align}
    \phi_1(\rho)&=\Tr[\rho]\frac{\id}{2},\quad
    \phi_2(\rho)&=\Tr[\rho\SWAP]\frac{\id}{2},\quad
    \phi_3(\rho)&=\Tr_A[\rho],\quad
    \phi_4(\rho)&=\Tr_B[\rho],\quad
    \phi_5(\rho)&=\sum_{ijk=1}^3\Tr[\rho \sigma_i\sigma_j]\epsilon_{ijk}\sigma_k\,.
\end{align}
We will refer to $\phi_5$ as the \textit{cross-product} channel. Here we will give slightly more detail on how we find the feasible region for channels. To see that $\phi_2$ may act non-trivially on the trace of an input, we consider its vectorization in the Pauli basis. The row corresponding to $\ket{\id_\text{out}}\rangle\langle\bra{P}$ contains all information about $\Tr[\phi_2(\rho)]$ as $\id$ is the only Pauli string of non-zero trace. One can show that $\Tr[\SWAP (\sigma_i\otimes \sigma_j)] = 2\delta_{ij}$ and thus $\phi_2(\sigma_i\otimes \sigma_i) = \frac{\id}{2}$. But this is problematic as $\Tr[\phi_j(\sigma_i\otimes\sigma_i)] = 0$ for $j\in\{1,3,4,5\}$. Thus, $\phi_2$ increases trace of a state such as $\frac{1}{2}(I+XX)$ but none of the other maps could cancel out this increase. Thus, we drop $\phi_2$ from our set of maps.

To finish finding the feasible region, we will make some modifications to our basis elements. That is, we want to modify the basis set such that all elements except for $\phi_1$ are trace annihilating. Then we can, without loss of generality, fix the coefficient for the trace preserving channel to be $1$. As the cross-product channel is traceless, we need only modify the partial trace channels. This is easy in the Pauli string basis, simply remove the upper left hand corner entry corresponding to $\id\otimes\id \mapsto \id$. For example, the $\Tr_A$ channel becomes
\begin{align}
\begin{pmatrix}
    0 & 0 & 0 & 0 & 0 &\cdots\\
    0 & 2 & 0 & 0 & 0 & \cdots\\
    0 & 0 & 2 & 0 & 0 & \cdots \\
    0 & 0 & 0 & 2 & 0 & \cdots
\end{pmatrix}
\end{align}
With these modified channels, we know that the set of equivariant channel can be characterized as
\begin{align}
    \{x,y,z \in \mathbb{R}^3: J^{\phi_1}+xJ^{\phi_5}+yJ^{\phi_3'}+zJ^{\phi_4'} \geq 0\}.
\end{align}
Requiring the eigenvalues of this linear combination to be non-negative yields the feasible region
\begin{align}
    \{ x,y,z: y + z \leq 1 \text{ and } y + z \geq \sqrt{3x^2 + 4(y^2 - yz+z^2)}-1\}.
\end{align}

\subsection{Action of $\mbb{SU}(2)$-equivariant maps}

Here we further analyze the action of the 2-to-1 qubit maps. This analysis will show that different channels can ``see'' different parts of the input state, and hence that they are complimentary. First, define the Bell basis states as 
\begin{align}
\begin{split}
    \ket{\beta_{00}}&=\frac{1}{\sqrt{2}}(\ket{00}+\ket{11})\,, \quad 
    \ket{\beta_{01}}=\frac{1}{\sqrt{2}}(\ket{00}-\ket{11})\\
    \ket{\beta_{10}}&=\frac{1}{\sqrt{2}}(\ket{01}+\ket{10})\, \quad
    \ket{\beta_{11}}=\frac{1}{\sqrt{2}}(\ket{01}-\ket{10})\,.
    \end{split}
\end{align}

Then, consider a two qubit quantum state in the Bell basis
\begin{equation}
    \rho=\begin{pmatrix}
    a_{11} & a_{12} & a_{13} & a_{14}\\
    a_{12}^* & a_{22} & a_{23} & a_{24}\\
    a_{13}^* & a_{23}^* & a_{33} & a_{34}\\
    a_{14}^* & a_{24}^* & a_{34}^* & a_{44}
    \end{pmatrix}\,,
\end{equation}
where $a_{11} + a_{22} + a_{33} + a_{44}=1$.

One can readily find that
\begin{align}
    \phi_1(\rho)&=\frac{1}{2}\begin{pmatrix}
    a_{11} + a_{22} + a_{33} + a_{44} & 0\\
    0 & a_{11} + a_{22} + a_{33} + a_{44}
    \end{pmatrix}\,,\\
    \phi_2(\rho)&=\frac{1}{2}\begin{pmatrix}
    a_{11} + a_{22} + a_{33} - a_{44} & 0\\
    0 & a_{11} + a_{22} + a_{33} - a_{44}
    \end{pmatrix}\,,\\
    \phi_3(\rho)&=\begin{pmatrix}
   \frac{1}{2} +\Re[a_{12}-a_{34}] & \Re[a_{13}+a_{24}]+i\Im[a_{23}+a_{14}]\\
    \Re[a_{13}+a_{24}]-i\Im[a_{23}+a_{14}] & \frac{1}{2} - \Re[a_{12}-a_{34}]
    \end{pmatrix}\,,\\
   \phi_4(\rho)&= \begin{pmatrix}
   \frac{1}{2} +\Re[a_{12}+a_{34}] & \Re[a_{13}-a_{24}]+i\Im[a_{23}-a_{14}]\\
    \Re[a_{13}-a_{24}]-i\Im[a_{23}-a_{14}] & \frac{1}{2} - \Re[a_{12}+a_{34}]
    \end{pmatrix}\,,\\
    \phi_5(\rho)&=4\begin{pmatrix}
   -\Im[a_{34}] & \Im[a_{24}]-i\Re[a_{14}]\\
    \Im[a_{24}]+i\Re[a_{14}] & \Im[a_{34}]
    \end{pmatrix}\,.
\end{align}
These equations show how different channels combine different pieces of the information of $\rho$. 

\subsection{Cross-product channel}\label{app:cp}
We now take a closer look at $\phi_5$, which we call the \textit{cross-product} channel $\CP:\HC^{\otimes 2}\rightarrow \HC$, where $\HC$ is the single-qubit Hilbert space. The action of the $\CP$ (not to be confused with complete positivity) channel is as follows
\begin{align}
    \CP(\rho)=&\sum_{ijk=1}^3\Tr[\rho \sigma_i\sigma_j]\epsilon_{ijk}\sigma_k
    =\Tr[\rho (YZ-ZY)]X+
    \Tr[\rho (ZX-XZ)]Y+\Tr[\rho (XY-YX)]Z\,,
\end{align}
where $\sigma_\mu\in\{X,Y,Z\}$ for $\mu=i,j,k$.

First, note that in the form above the $\CP$ channel is not truly a channel as it is not trace preserving nor completely positive. Rather, $\Tr[\mathcal{CP}(\rho)] = 0$. To be a channel, we must consider superoperators of the form ${\phi} + \alpha \mathcal{CP}$, where ${\phi}$ is some trace preserving map. For simplicity we consider ${\phi}(\rho) = \frac{\Tr[\rho]}{2}\id$. By solving for the eigenvalues of the Choi operator of ${\phi} + \alpha \mathcal{CP}$, one can show that this is a channel for $\alpha \in [-\frac{1}{\sqrt{3}}, \frac{1}{\sqrt{3}}]$.

Now that we know when this can actually be physical, we'd like to better understand the action of the $\CP$ channel. As before, we recall the four Bell basis states 
\begin{align}
\begin{split}
    \ket{\beta_{00}}&=\frac{1}{\sqrt{2}}(\ket{00}+\ket{11})\,, \quad 
    \ket{\beta_{01}}=\frac{1}{\sqrt{2}}(\ket{00}-\ket{11})\\
    \ket{\beta_{10}}&=\frac{1}{\sqrt{2}}(\ket{01}+\ket{10})\, \quad
    \ket{\beta_{11}}=\frac{1}{\sqrt{2}}(\ket{01}-\ket{10})\,,
    \end{split}
\end{align}
where $\ket{\beta_{00}}$, $\ket{\beta_{01}}$ and $\ket{\beta_{10}}$ are eigenstates of the ${\rm SWAP}$ operator with eigenvalue $+1$, while $\ket{\beta_{11}}$ is an eigenvalue of the SWAP operator with eigenvalue $-1$. 
Then, we note the following operator expansion in the Bell basis
\begin{align}
    YZ-ZY=2i(\ketbra{\beta_{01}}{\beta_{11}}-\ketbra{\beta_{11}}{\beta_{01}})\\
    ZX-XZ=2(\ketbra{\beta_{00}}{\beta_{11}}+\ketbra{\beta_{11}}{\beta_{00}})\\
    XY-YX=-2i(\ketbra{\beta_{10}}{\beta_{11}}-\ketbra{\beta_{11}}{\beta_{10}})\,.
\end{align}

We can then express the density matrix in the Bell basis (ordered as $\{\ket{\beta_{00}},\ket{\beta_{01}},\ket{\beta_{10}},\ket{\beta_{11}}\}$),
\begin{equation}
    \rho=\begin{pmatrix}
    \fbox{ $\begin{array}{c c c}
    & \,\,\,\,\,\,\,\,\,\,\,\,\,\,\,& \\
    & & \\
    & & \\
    \end{array}$} \!\!\!\!\!\!&  \begin{array}{c}
     a_{14} \\
    a_{24} \\
    a_{34}
    \end{array}    \vspace{.4mm}\\
    \begin{array}{c c c}
    a_{14}^* & a_{24}^* &a_{34}^* 
    \end{array} &\fbox{ $\begin{array}{c}
    \end{array}$}
    \end{pmatrix}\,,
    \vspace{.5mm}
\end{equation}
where the $3\times 3$ and $1\times 1$ diagonal blocks correspond to the symmetric and anti-symmetric subspaces, respectively. Then, the matrix elements $a_{14}$, $a_{24}$ and $a_{34}$ determine if the state is in a superposition of symmetric and anti-symmetric Bell basis states.

With the previous, one can verify that 
\begin{align}
    \Tr[\rho(YZ-ZY)]=4\Im[a_{24}]\,,\quad
    \Tr[\rho(ZX-XZ)]=4\Re[a_{14}]\,,\quad
    \Tr[\rho(XY-YX)]=-4\Im[a_{34}]\nonumber\,.
\end{align}

Note that the action of the $\CP$ channel is to check if $\rho$ is in a superposition of states with different symmetries. As such, at the output of the map the  coefficients  associated with the different Pauli operators correspond to the entries of the matrix entries that account superposition between the anti-symmetric state and the three different symmetric states.  Hence, the $\CP$ channel outputs the zero matrix for any state that is block diagonal in the symmetric and anti-symmetric subspaces (such as $\rho=\sigma^{\otimes 2}$ for any single-qubit state $\sigma$).

Let us note here an interesting fact. Namely, that the $\CP$ channel, in its vanilla version, has an asymmetry embedded into it. Namely, it only accounts for either the real or imaginary part of the matrix of $\rho$. This can be solved by defining the following alternative version of the $\CP$ channel 
\begin{equation}
     \CP'(\rho)=i\sum_{ijk=1}^3\Tr[\rho {\rm SWAP} \sigma_i\sigma_j]\epsilon_{ijk}\sigma_k\,,
\end{equation}
where we have multiplied by ``$i$'' to have the output matrix be Hermitian. 
One can readily check that this new CP channel is also equivariant. That is, 
\begin{equation}
    \CP'(U^{\otimes 2}\rho (U\ad)^{\otimes 2})= U \CP'(\rho) U\ad\,.
\end{equation}

Now, let us note that
\begin{align}
    {\rm {\rm SWAP}}(YZ-ZY)=2i(\ketbra{\beta_{01}}{\beta_{11}}+\ketbra{\beta_{11}}{\beta_{01}})\\
    {\rm {\rm SWAP}}(ZX-XZ)=2(\ketbra{\beta_{00}}{\beta_{11}}-\ketbra{\beta_{11}}{\beta_{00}})\\
    {\rm {\rm SWAP}}(XY-YX)=-2i(\ketbra{\beta_{10}}{\beta_{11}}+\ketbra{\beta_{11}}{\beta_{10}})\,,
\end{align}
which means
\begin{align}
    \Tr[\rho {\rm {\rm SWAP}}(YZ-ZY)]&= 4i  \Re[\rho_{24}]\nonumber\\
    \Tr[\rho {\rm {\rm SWAP}}(ZX-XZ)]&= -4i  \Im[\rho_{14}]\\
    \Tr[\rho {\rm {\rm SWAP}}(XY-YX)]&= -4i  \Re[\rho_{34}]\nonumber\,.
\end{align}

Thus, we can now create combinations of the two versions of the $\CP$ channels. For instance
the channel
\begin{align}
    \frac{1}{4}(\CP(\rho)+\CP'(\rho)) = & -(\Re[a_{24}]- \Im[a_{24}]) X\nonumber\\ &+ (\Re[\rho_{14}]+\Im[a_{14}]) Y\\
    &+ (\Re[ \rho_{34}]-\Im[a_{34}]) Z\,.\nonumber
\end{align}
and 
\begin{align}
    \frac{1}{4}(\CP(\rho)-\CP'(\rho)) = & (\Re[a_{24}]+ \Im[a_{24}]) X\nonumber\\ &+ (\Re[a_{14}]-\Im[a_{14}]) Y\\
    &- (\Re[ a_{34}]+\Im[a_{34}]) Z\,.\nonumber
\end{align}
By applying these channels (with the appropriate amount of completely depolarizing channel added such that the operation is physical), one may then recover the off-diagonal terms between the symmetric and anti-symmetric subspaces.

Just as a curiosity, we note that the following combination (although not physical) is still interesting
\begin{equation}
    \frac{1}{4}(\CP(\rho)-i\CP'(\rho))=i \rho_{24}^* X + \rho_{14}^* Y  - i \rho_{34}^* Z\,.
\end{equation}

In Fig.~\ref{fig:CP_circuit} we give a circuit for achieving the $\CP$ channel in expectation. The intuition behind this circuit is that, without the Pauli rotations, this protocol changes the state of the ancilla such that $\Tr[Z\rho_\text{out}] = \Tr[ZZ\rho_\text{in}]$, where $\rho_\text{out}$ is the state on the ancilla. Thus, by adding in Pauli rotations, we permute the Pauli group to embed $\Tr[\sigma_i\sigma_j\rho]$ as $\Tr[\sigma_k \rho_\text{out}]$ on the ancilla. Note that a similar circuit does not exist for $\CP'$ as $\SWAP\otimes \sigma_i\sigma_j$ is not Hermitian. One could instead accomplish $\CP'$ in expectation through the Hadamard test.
\begin{figure}[t]
    \centering
    \begin{quantikz}
        & \gate[style=circle,nwires={1}]{i}\vcw{1} & \gate[style=circle,nwires={1}]{j}\vcw{2}\\
        \lstick[wires=2]{$\rho$} &  \gate{R_i(\frac{\pi}{2})} & \qw & \ctrl{2} & \qw & \trash{\text{trash}}\\
        & \qw & \gate{R_j(\frac{\pi}{2})} & \qw & \ctrl{1} & \qw & \trash{\text{trash}}\\
        \lstick{$\ket{0}$} & \gate{X^{\frac{1-\epsilon_{ijk}}{2}}} & \qw & \targ{} & \targ{} & \gate{R_k^\dagger(\frac{\pi}{2})} & \qw & \rstick{$\CP(\rho)$}\qw\\
        & \gate[style=circle,nwires={1}]{i,j,k}\vcw{-1} &  &  &  & \gate[style=circle,nwires={1}]{k}\vcw{-1}
    \end{quantikz}
    \caption{\textbf{Circuit for the cross-product channel.} Here $i,j,k\in\{X,Y,Z\}$. These classical random variables are drawn from $\{X,Y,Z\}$ uniformly and without replacement. That is, $\{i,j,k\} = \{X,Y,Z\}$ always but with the individual rotations randomly chosen. After performing this protocol, the reduced state on the ancilla will be equal, in expectation, to $\CP(\rho)$. Note that $\epsilon_{ijk}$ is the Levi-Civita symbol.}
    \label{fig:CP_circuit}
\end{figure}
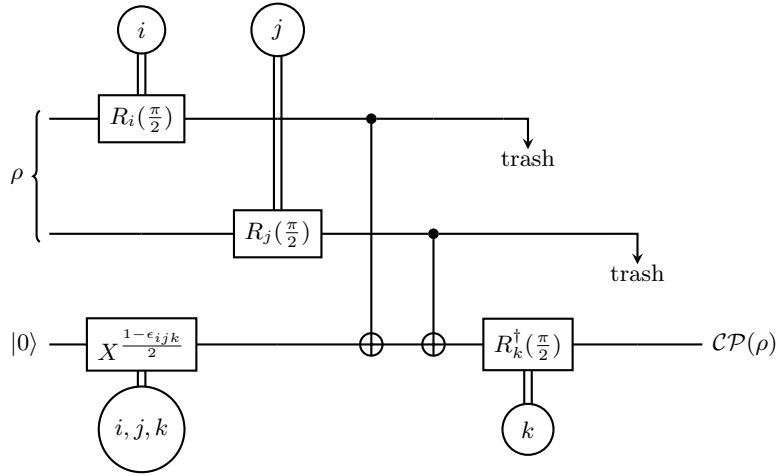

\subsection{1-to-2 qubit $\mbb{SU}(2)$ equivariant maps}\label{app:1to2}
If we solve for the nullspace of the set of equivariant 1-to-2 maps we see that all trace preserving maps must have the following action. Say our input is $\rho = \frac{1}{2}(\id+\vec{\sigma}\cdot \vec{r})$, then the output state of the map will be:
\begin{align}
    \frac{1}{4}\id\otimes\id + \frac{a}{2}(XX+YY+ZZ)+\frac{b}{2}\id\otimes(\vec{\sigma}\cdot \vec{r}) + \frac{c}{2} (\vec{\sigma}\cdot \vec{r})\otimes\id + \frac{d}{2}\vec{r}\cdot (YZ-ZY,ZX-XZ,XY-YX),
\end{align}
where a, b, c, and d are arbitrary real numbers (to be CP there will be additional constraints). Notice that this is effectively a linear combination of adjoints of the 2$\rightarrow$1 maps. The first being Tr, the second SWAP, the third and fourth partial traces, and the final one being the CP channel. This follows logically from representations of $\mbb{SU}(2)$ being self-dual. Recall that a linear map is equivariant if and only if its Choi operator satisfies $[(R^\text{in}(g))^*\otimes R^\text{out}(g), J^\phi]=0$. That is, $J^\phi$ lies in the commutant of the tensor product of the (dual) input and output representations. An equivariant $1\mapsto 2$ map then can be associated to the commutant of 
\begin{align}
   g^*\otimes g^{\otimes 2} \cong g^{\otimes 3},
\end{align}
which is exactly the same as the representation that commutes with the Choi operators of $2\mapsto 1$ $\mbb{SU}(2)$ equivariant channels.

\end{document}